\documentclass[aps,twocolumn,superscriptaddress,pra,10pt,showkeys]{revtex4-2}

\usepackage{algorithm2e}
\usepackage{amsfonts}
\usepackage{amsmath}
\usepackage{amssymb}
\usepackage{mathtools}
\usepackage{xfrac}

\usepackage{array}
\usepackage{graphicx}
\usepackage{diagbox}
\usepackage{multirow}

\usepackage{tikz}

\usetikzlibrary{shapes.geometric,positioning}
\tikzset{elli/.style={ellipse,draw}}
\tikzset{rect/.style={rectangle,draw}}
\usepackage[colorlinks = true, linkcolor = blue, urlcolor=blue, citecolor = red, anchorcolor = green,
]{hyperref}
\setcounter{MaxMatrixCols}{30}
\providecommand{\U}[1]{\protect\rule{.1in}{.1in}}
\newtheorem{theorem}{Theorem}

\newtheorem{definition}{Definition}

\newtheorem{lemma}{Lemma}

\newtheorem{problem}{Problem}

\newtheorem{remark}{Remark}

\newenvironment{proof}[1][Proof]{\noindent\textbf{#1.} }{\ \rule{0.5em}{0.5em}}


\newcommand{\bra}[1]{\langle#1|}
\newcommand{\ket}[1]{|#1\rangle}
\newcommand{\outerprod}[2]{\vert#1\rangle\!\langle#2 \vert}
\newcommand{\outerproj}[1]{\outerprod{#1}{#1}}

\newcommand{\Tr}{\operatorname{Tr}}
\newcommand{\minSymG}[2]{\min\limits_{\sigma \in \operatorname{Sym}_G} \frac{1}{2} \left\Vert #1 - #2 \right\Vert_1}
\newcommand{\maxSymG}[2]{\max\limits_{\sigma \in \operatorname{Sym}_G} F(#1 , #2)}
\newcommand{\qipeb}{QIP$_{\text{EB}}(2)$}

\newcommand{\td}[2]{\frac{1}{2} \left\Vert #1 - #2 \right\Vert_1}

\newcolumntype{P}[1]{>{\raggedright\arraybackslash}p{#1}}
\newcolumntype{L}{>{\centering\arraybackslash}m{2.6cm}}
\newcolumntype{M}[1]{>{\centering\arraybackslash}m{#1}}
\newcolumntype{C}[1]{>{\PreserveBackslash\centering}p{#1}}
\newcolumntype{R}[1]{>{\PreserveBackslash\raggedleft}m{#1}}

\allowdisplaybreaks

\begin{document}

\title[]{Quantum Computational Complexity and Symmetry}

\author{Soorya Rethinasamy}
\affiliation{School of Applied and Engineering Physics, Cornell University, Ithaca, New York 14850, USA}

\author{Margarite L. LaBorde}
\affiliation{Hearne Institute for Theoretical Physics, Department of Physics and Astronomy, and Center for Computation and Technology, Louisiana State University, Baton Rouge, Louisiana 70803, USA}

\author{Mark M. Wilde}
\email{wilde@cornell.edu}
\affiliation{School of Electrical and Computer Engineering, Cornell University, Ithaca, New York 14850, USA}

\begin{abstract}
Testing the symmetries of quantum states and channels provides a way to assess their usefulness for different physical, computational, and communication tasks. Here, we establish several complexity-theoretic results that classify the difficulty of symmetry-testing problems involving a unitary representation of a group and a state or a channel that is being tested.
In particular, we prove that various such symmetry-testing problems are complete for BQP (bounded-error quantum polynomial time), QMA (quantum Merlin-Arthur), QSZK (quantum statistical zero-knowledge), QIP(2) (two-message quantum interactive proofs), \qipeb (two-message quantum interactive proofs restricted to entanglement-breaking provers), and QIP (quantum interactive proofs), thus spanning the prominent classes of the quantum interactive proof hierarchy and forging a non-trivial connection between symmetry and quantum computational complexity.
Finally, we prove the inclusion of two Hamiltonian symmetry-testing problems in QMA and QAM, while leaving it as an intriguing open question to determine whether these problems are complete for these classes. 
\end{abstract}

\keywords{quantum computational complexity; Bose symmetry; symmetric extendibility; Bose symmetric extendibility; quantum interactive proofs}

\date{\today}
\maketitle
\tableofcontents

\section{Introduction}
Computational complexity theory was born from the Church--Turing thesis, which informally states that ``any real-world computation can be translated into an equivalent computation involving a Turing machine." Turing machines soon became the standard method to measure the difficulty or computational complexity of a problem. However, they are classical machines and, as such, are limited by classical physics. With the advent of quantum mechanics as a prevalent model for explaining the universe, the question of whether Turing machines capture all that is computable arose naturally. To address this issue, a new computational model, called the quantum Turing machine, was developed to incorporate quantum mechanics \cite{benioff1980computer,benioff1982quantum,DP85}. This led to the development of quantum computational complexity theory, which classifies computational problems by means of quantum Turing machines \cite{BZ97}. Later on, it was shown that the quantum Turing machine and quantum circuit computational models are equivalent \cite{Chi93} (see also \cite{MW19}), and the bulk of the larger research community's focus has been on the quantum circuit model since then.

Quantum computational complexity theory provides an important contextualization for quantum algorithms; namely, determining the complexity class of a computational problem allows for a meaningful discussion of how difficult it is for a quantum computer to solve the problem as its scale increases \cite{watrous2009complexity,VW15}. This classification allows us to clearly separate problems into broad categories, like those that are efficiently solvable on a quantum computer, those that are efficiently solvable on a classical computer, etc. The various classical and quantum complexity classes can be arranged into a hierarchy, which gives insight into their relative difficulty. For instance, an NP-hard problem is believed to be inefficiently solvable, but a PSPACE-hard problem is believed to be significantly more so. The complexity-theoretic perspective becomes relevant whenever attempting to solve computationally difficult tasks, such as in machine-learning applications. 

In this paper, we connect several major quantum complexity classes to a hierarchy of symmetry-testing tasks. Much interest has arisen in using symmetry in quantum information science for various applications, linked with its fundamental role in physics \cite{FR96,Gross96}. To give a few examples, symmetries can be used to test for separability of pure states, as in \cite{BBD+97,brennen2003entanglement,HM10,GHMW15,BGCC21,bradshaw2022cycle}. In both classical and quantum machine learning tasks, symmetries can vastly improve performance by limiting the relevant search space \cite{LSS+22,Skolik2022graph,Meyer2022exploiting}, and machine learning can help to identify hidden symmetries manifested only in certain coordinate systems \cite{Liu2022symmetries}. Symmetries in POVMs have been utilized in state discrimination \cite{KGDdS15} and estimation \cite{CD04} applications. Additionally, recent work has been conducted to test symmetries of states, channels, measurements \cite{LRW22,bandyopadhyay2023efficient}, Hamiltonians \cite{laborde2022hamiltonian}, and Lindbladians \cite{bandyopadhyay2023efficient} using quantum algorithms. It is on these last three works that we expand, tying symmetry-testing language to a complexity-theoretic structure.

In more detail, we provide a number of complexity-theoretic characterizations for symmetry-testing tasks. See Section~\ref{sec:classreview} for a review of various complexity classes, as well as \cite{watrous2009complexity,VW15} for much more detail. Table~\ref{tab:SymTestComplexity} and Figure~\ref{fig:ClassComplexityDiagram} provide all the results and details of our paper at a glance, and Figure~\ref{fig:ClassComplexityDiagram} places some of them in a containment diagram for ease of access. These results span a symmetry-testing problem complete for BQP to another complete for QIP.  More specifically, for the classes BQP, QMA, QIP(2), \qipeb, QSZK, and QIP, we provide symmetry-testing problems that are complete for each class, meaning any promise problem in those classes can be mapped in polynomial time to a respective symmetry-testing problem for some group~$G$. In particular, we prove complexity-theoretic results for the various symmetry definitions  given in \cite{LRW22}---for a given input state, testing for $G$-symmetry with respect to Hilbert--Schmidt norm is BQP-Complete, testing for $G$-Bose symmetry is BQP-Complete, testing for $G$-symmetry with respect to trace norm is QSZK-Complete, and testing for $G$-Bose symmetric extendibility is QIP(2)-Complete. The last aforementioned contribution, on QIP(2)-completeness, addresses a problem that has been open in quantum computational complexity theory for quite some time \cite{JUW09} --- namely, to identify a non-trivial problem complete for the class other than the known Close Image problem. In addition, we establish a symmetry-testing problem that is complete for \qipeb, and we show that testing whether there exists an input to a channel such that its output is $G$-Bose symmetric or $G$-Bose symmetric extendible is QMA-Complete and QIP-Complete, respectively. Furthermore, we show that two different versions of Hamiltonian symmetry-testing problems are in~QMA and QAM.

We accomplish these findings in the following ways. To establish containment of a given symmetry-testing problem in a given class, we either employ the algorithms put forward in \cite{LRW22,bandyopadhyay2023efficient}, which involve a verifier acting with or without a prover, or we propose new algorithms. To establish hardness of a given symmetry-testing problem, the key concept is to embed the circuits involved in a general computation either into the preparation of the state or channel being tested or into a unitary representation of a group. This approach to proving hardness results is common in the literature on quantum interactive proofs \cite{watrous2002qszk,RW05,GHMW15,VW15,RASW23}, as well as in a recent paper \cite{laborde2022hamiltonian} establishing DQC1-hardness of a Hamiltonian symmetry-testing problem.

The outlook and findings of our paper complement those of \cite{HMW13,GHMW15} and \cite{kitaev2000parallelization,W02QIP,RW05,HMW13,RASW23}, which connected entanglement theory and distinguishability, respectively, to quantum computational complexity theory. Here, in the same spirit, we connect symmetry to quantum computational complexity theory. To recall the prior developments, the authors of \cite{HMW13,GHMW15}  showed that, for most quantum complexity classes that involve interaction with a computationally-unbounded prover, including BQP, QMA,  QSZK, QMA(2), and QIP, there exists a corresponding natural separability testing problem that is complete for the class. These findings are summarized in \cite[Figure~1]{GHMW15}. Similarly, in \cite{kitaev2000parallelization,W02QIP,RW05,W09zkqa,HMW13,RASW23}, these authors showed that there exist natural distinguishability problems that are complete for the same aforementioned classes, as well as QIP(2), as summarized in \cite[Figure~17]{RASW23}. Here, we show that there are multiple symmetry-testing problems that are complete for a range of quantum complexity classes.

The rest of our paper is structured as follows. Section~\ref{sec:prelim} begins with some preliminaries that are helpful for understanding the remainder of the work. In Section~\ref{sec:classreview}, we review the class definitions for the relevant promise problems. Section~\ref{sec:results} contains a variety of complexity-theoretic results for different symmetry tests. In Section~\ref{sec:GBS-BQP}, we prove that testing $G$-Bose symmetry of a state $\rho$ is BQP-Complete. In Section~\ref{sec:Sym-HS-BQP}, we show that testing $G$-symmetry of a state, according to a Hilbert--Schmidt norm measure, is BQP-Complete. In Section~\ref{sec:GBS-QMA}, we go on to show that testing if there exists an input to a channel such that its output is $G$-Bose symmetric is QMA-Complete. Following this, Sections~\ref{sec:GS-TD-QSZK} and \ref{sec:GS-Fid-QSZK} show that the problems of testing $G$-symmetry of a state, according to a trace-norm measure and fidelity, respectively, are QSZK-Complete. In Section~\ref{sec:GBSE-QIP2}, testing if a state $\rho$ is $G$-Bose symmetric extendible is shown to be QIP(2)-Complete; while in Section~\ref{sec:GBSE-QIP}, we show that the channel version of the problem is QIP-Complete. In Section~\ref{sec:GBS-QIPEB}, we show that the problem of testing whether a state has a separable, $G$-Bose symmetric extension is \qipeb-Complete. Finally, in Sections~\ref{sec:HS-QMA} and \ref{sec:HS-QAM}, the problem of estimating the maximal and average spectral norm of the commutator between a Hamiltonian and a group representation $\{U(g)\}_{g\in G}$ is shown to be in QMA and QAM, respectively. We conclude in Section~\ref{sec:conclusion} with a summary and some suggestions for future directions.

\section{Notions of Symmetry}
\label{sec:prelim}

In this section, we review the notions of symmetry presented in \cite{MS13,MS14,LRW22} with respect to some finite group $G$ and a corresponding unitary representation $\{U(g)\}_{g\in G}$. We also introduce two notions of symmetry not previously considered in \cite{MS13,MS14,LRW22}, which are related to and generalize the symmetry considered recently in \cite{philip2023quantum}. Our task in this work will be to contextualize these symmetry definitions in a quantum complexity-theoretic framework. 

\subsection{Review of Existing Notions of Symmetries}

\begin{definition}[$G$-symmetric]
\label{def:G-symmetry}
Let $G$ be a group with projective unitary representation $\{U_{S}(g)\}_{g\in G}$, and let $\rho_{S}$ be a state of system $S$. The state $\rho_{S}$ is symmetric with respect to $G$ \cite{MS13,MS14}\ if
\begin{equation}
\rho_{S}=U_{S}(g)\rho_{S}U_{S}(g)^{\dag}\quad\forall g\in G.
\end{equation}
\end{definition}

$G$-symmetry is the usual notion of symmetry considered in most physical contexts. For example, in \cite{RBN+22, LSS+22}, the authors use $G$-symmetric states in various quantum machine learning applications, primarily in classification algorithms where the labeling of the state should remain invariant. Additionally, testing the incoherence of a state in the vein of \cite{SAP17,BCP14} is a special case of a $G$-symmetry test where the group is the cyclic group of order $|G|$.

Expanding upon this definition, we recall the definition of $G$-Bose symmetry, a stronger notion of symmetry. $G$-Bose symmetry implies $G$-symmetry, though the reverse implication is not true in general. $G$-Bose symmetry checks if a state belongs to the symmetric subspace induced by the group representation. This more mathematical notion of symmetry has proven useful in deriving important results, such as the quantum de Finetti theorem \cite{harrow2013church}. As a practical application, a circuit construction for projecting onto the symmetric subspace corresponding to the standard symmetric group \cite{BBD+97} has been used in a number of quantum computational tests of entanglement \cite{HMW13,GHMW15,LRW22,bradshaw2022cycle}. We give the definition of $G$-Bose symmetry below.

\begin{definition}[$G$-Bose-symmetric]
\label{ex:usual-Bose-symmetry}
Let $G$ be a group with unitary representation $\{U_{S}(g)\}_{g\in G}$.
A state $\rho_{S}$ is Bose-symmetric with respect to $G$ if
\begin{equation}
\rho_{S}=U_{S}(g)\rho_{S}\quad\forall g\in G.
\label{eq:def-G-Bose-sym}
\end{equation}
The condition in \eqref{eq:def-G-Bose-sym} is equivalent to the condition
\begin{equation}
\rho_{S}=\Pi_{S}^{G}\rho_{S}\Pi_{S}^{G},
\end{equation}
where the projector $\Pi_{S}^{G}$ is defined as
\begin{equation}
\label{eq:group_proj_GBS}
\Pi_{S}^{G}\coloneqq \frac{1}{\left\vert G\right\vert }\sum_{g\in G}U_{S}(g).
\end{equation}
\end{definition}

We note here that the notion of $G$-Bose-symmetry can be generalized to compact Lie groups. In this case, one requires an invariant measure, which exists for all such groups. An important example in quantum information is the unitary group equipped with the Haar measure. See \cite[Proposition~2]{harrow2013church} for mathematical details of how the projector $\Pi_S^G$ is defined in this case. Throughout our paper, however, we focus exclusively on finite groups.

Both of the aforementioned symmetry notions ($G$-symmetry and $G$-Bose-symmetry) can be expanded to scenarios in which the tester has limited access to the state of interest. For example, one can test whether, given a part of a state, there exists an extension that is symmetric. These notions lead to further, pertinent symmetry tests. For instance, when the group in question is the permutation group, $G$-Bose extendibility is relevant for detecting entanglement \cite{NOP09} and efficiently bounding quantum discord \cite{Pia16}. Similarly, $G$-symmetric extendible states have been studied in the context of entanglement distillability \cite{Now16} and $k$-extendibility \cite{W89a,DPS02,DPS04,BFC12,KDWW19}.

\begin{definition}[$G$-symmetric extendible]
\label{def:g-sym-ext}
Let $G$ be a group  with unitary representation $\{U_{RS}(g)\}_{g\in G}$.
A state~$\rho_{S}$ is $G$-symmetric extendible if there exists a state $\omega_{RS}$ such that
\begin{enumerate}
\item the state $\omega_{RS}$ is an extension of $\rho_{S}$, i.e.,
\begin{equation}
\operatorname{Tr}_{R}[\omega_{RS}]=\rho_{S}, \label{eq:G-ext-1}
\end{equation}
\item the state $\omega_{RS}$ is $G$-symmetric, in the sense that
\begin{equation}
\omega_{RS}=U_{RS}(g)\omega_{RS}U_{RS}(g)^{\dag}\qquad\forall g\in G.
\label{eq:G-ext-2}
\end{equation}
\end{enumerate}
\end{definition}

\begin{definition}[$G$-Bose symmetric extendible]
\label{def:g-bose-sym-ext}
A state $\rho_{S}$ is $G$-Bose symmetric extendible (G-BSE) if there exists a state $\omega_{RS}$ such that
\begin{enumerate}
\item the state $\omega_{RS}$ is an extension of $\rho_{S}$, i.e.,
\begin{equation}
\operatorname{Tr}_{R}[\omega_{RS}]=\rho_{S},
\end{equation}
\item the state $\omega_{RS}$ is Bose symmetric, i.e., satisfies
\begin{equation}
\omega_{RS}=\Pi_{RS}^{G}\omega_{RS}\Pi_{RS}^{G},
\label{eq:bose-G-sym-ext-cond}
\end{equation}
where
\begin{equation}
    \Pi_{RS}^{G}\coloneqq\frac{1}{\left\vert G\right\vert }\sum_{g\in G}U_{RS}(g).
    \label{eq:Pi-proj-RS-def}
\end{equation}
\end{enumerate}
\end{definition}

In \cite[Section~5]{LRW22}, we discussed how each of the above symmetry definitions can be expressed as semi-definite programs (SDPs), thus implying that it is efficient to test for them when the state and unitary representations are given as matrices. That is, the runtime of these SDP algorithms is polynomial in the dimension of the states and unitaries. In this paper, we consider the complexity of testing these symmetries when circuit descriptions are given for preparing these states and executing these unitary representations on quantum computers.  

\subsection{Separably Extendible Symmetries}

Let us finally introduce two other notions of symmetry, one of which represents a generalization of a symmetry recently considered in \cite{philip2023quantum}. Before doing so, let us first recall that a bipartite state $\rho_{AB}$ is separable with respect to the partition $A,B$, denoted as $\rho_{AB} \in \operatorname{SEP}\!\left(A\!:\!B\right)$, if it can be written in the following form \cite{Werner89}:
\begin{equation}
    \rho_{AB} = \sum_{x\in \mathcal{X}} p(x) \psi^x_A \otimes \phi^x_B,
\end{equation}
where $\mathcal{X}$ is a finite alphabet, $\{p(x)\}_{x\in\mathcal{X}}$ is a probability distribution, and $\{\psi^x_A\}_{x\in \mathcal{X}}$ and $\{\phi^x_B\}_{x\in \mathcal{X}}$ are sets of pure states. States that cannot be written in this form are entangled.

\begin{definition}[$G$-symmetric separably extendible]
\label{def:g-sym-sep-ext}
Let $G$ be a group with projective unitary representation $\{U_{RS}(g)\}_{g\in G}$, and let $\rho_{S}$ be a state.
The state $\rho_{S}$ is $G$-symmetric separably extendible if there exists a state $\omega_{RS} $ such that
\begin{enumerate}
\item the state $\omega_{RS}$ is a separable extension of $\rho_{S}$, i.e.,
\begin{align}
\operatorname{Tr}_{R}[\omega_{RS}] & =\rho_{S}, \label{eq:G-sep-ext-1}\\
\omega_{RS} & \in \operatorname{SEP}(R\!:\!S),
\end{align}
\item the state $\omega_{RS}$ is $G$-symmetric, in the sense that
\begin{equation}
\omega_{RS}=U_{RS}(g)\omega_{RS}U_{RS}(g)^{\dag}\qquad\forall g\in G.
\label{eq:G-sep-ext-2}
\end{equation}
\end{enumerate}
\end{definition}

\begin{definition}[$G$-Bose-symm.~separably extendible]
\label{def:g-bose-sym-sep-ext}
Let $G$ be a group with unitary representation $\{U_{RS}(g)\}_{g\in G}$, and let $\rho_{S}$ be a state.
The state $\rho_{S}$ is $G$-Bose-symmetric separably extendible if there exists a state $\omega_{RS} $ such that
\begin{enumerate}
\item the state $\omega_{RS}$ is a separable extension of $\rho_{S}$, i.e.,
\begin{align}
\operatorname{Tr}_{R}[\omega_{RS}] & =\rho_{S}, \label{eq:G-bose-sep-ext-1}\\
\omega_{RS} & \in \operatorname{SEP}(R\!:\!S),
\end{align}
\item the state $\omega_{RS}$ is Bose symmetric, i.e., satisfies
\begin{equation}
\omega_{RS}=\Pi_{RS}^{G}\omega_{RS}\Pi_{RS}^{G},
\label{eq:bose-G-bose-sym-sep-ext-cond}
\end{equation}
where $\Pi_{RS}^{G}$ is defined in \eqref{eq:Pi-proj-RS-def}.
\end{enumerate}
\end{definition}

By comparing Definitions~\ref{def:g-sym-ext} and \ref{def:g-bose-sym-ext} with Definitions~\ref{def:g-sym-sep-ext} and \ref{def:g-bose-sym-sep-ext}, respectively, we see that the main additional constraint in the latter definitions is that the extension is required to be a separable state. As such, when the state and unitary representations are given as matrices, this additional constraint makes the search for an extension more computationally difficult than those needed for Definitions~\ref{def:g-sym-ext} and \ref{def:g-bose-sym-ext}, because optimizing over the set of separable states is computationally difficult \cite{Gur03,Gha10} and it is not possible to perform this search by means of SDPs \cite{fawzi2021set}. Here, we consider the complexity of testing the symmetry in Definition~\ref{def:g-bose-sym-sep-ext} when the state and unitary representations are given as circuit descriptions. 

Let us comment briefly on the connection between Definition~\ref{def:g-bose-sym-sep-ext} and the symmetry considered in \cite{philip2023quantum}. In \cite{philip2023quantum}, the goal was to test whether a given bipartite state $\rho_{AB}$ is separable. It was shown that one can equivalently do so by testing whether there exists a separable extension $\rho_{A'AB} \in \operatorname{SEP} \!\left(A'\!:\!AB\right)$ of $\rho_{AB}$ that is Bose symmetric with respect to the unitary representation $\{I_{A'A}, F_{A'A}\}$ of the symmetric group of order two, where $F_{AA'}$ is the unitary swap operator. More concretely, the test checks whether there exists $\rho_{A'AB} \in \operatorname{SEP}(A'\!:\!AB)$ such that
\begin{align}
    \Tr_{A'}[\rho_{A'AB}] &  = \rho_{AB},\\
    \rho_{A'AB} & = \Pi_{A'A} \rho_{A'AB} \Pi_{A'A},
\end{align}
where $\Pi_{A'A} \coloneqq (I_{AA'} + F_{AA'})/2$. As such, this represents a non-trivial example of the symmetry presented in Definition~\ref{def:g-bose-sym-sep-ext}.

Quantum algorithms to test the symmetries in Definitions~\ref{def:G-symmetry}--\ref{def:g-bose-sym-ext} were given in \cite{LRW22}, and we provide a new quantum algorithm to test for the symmetry in Definition~\ref{def:g-bose-sym-sep-ext}. More specifically, we invite the reader to examine the following parts of \cite{LRW22} to better understand the connections between the definitions of symmetry and quantum algorithms for testing them:
\begin{enumerate}
    \item Algorithm~1 therein, depicted in Figure~1 therein, tests for $G$-Bose symmetry.
    \item Algorithm~2 therein, depicted in Figure~4 therein, tests for $G$-symmetry.
    \item Algorithm~3 therein, depicted in Figure~6 therein, tests for $G$-Bose symmetric extendibility.
    \item Algorithm~4 therein, depicted in Figure~8 therein, tests for $G$-symmetric extendibility.
\end{enumerate}
In the following sections, we also discuss the complexity of several corresponding symmetry-testing problems.

\renewcommand{\arraystretch}{2.05}
\begin{table*}
\begin{center}
    \begin{tabular}{|M{3.2cm}|M{6.5cm}|M{4.2cm}|M{2.8cm}|}
    \hline 
        \multirow[c]{2}{=}{\centering \textbf{Problem}} & 
        \multirow[c]{2}{=}{\centering \textbf{Description}} & 
        \multirow[c]{2}{=}{\centering \textbf{Mathematical Description}} & 
        \multirow[c]{2}{=}{\centering \textbf{Complexity}}\\
        & & & \\
        \hline
        \multirow[c]{2}{=}{\centering \hyperref[sec:GBS-BQP]{State $G$-Bose Symmetry}} & \multirow[c]{2}{=}{\centering Decide if a state  is $G$-Bose symmetric} & \multirow[c]{2}{=}{\centering $\Tr\!\left[\Pi^G \rho \right]$} & \multirow[c]{2}{=}{\centering BQP-Complete}\\
        & & & \\
        \hline
        \multirow[c]{2}{=}{\centering \hyperref[sec:Sym-HS-BQP]{State $G$-Symmetry in Hilbert--Schmidt Norm}} & \multirow[c]{2}{=}{\centering Decide if a state is $G$-symmetric in Hilbert--Schmidt norm} & \multirow[c]{2}{=}[-0.1cm]{\centering $\frac{1}{\vert G \vert} \sum\limits_{g \in G} \left\Vert [U(g), \rho] \right\Vert^2_2$ } & \multirow[c]{2}{=}{\centering BQP-Complete}\\
        & & & \\
        \hline
        \multirow[c]{2}{=}{\centering \hyperref[sec:GBS-QMA]{Channel $G$-Bose Symmetry}} & \multirow[c]{2}{=}{\centering Decide if the output of a channel with \\ optimized input is $G$-Bose symmetric} & \multirow[c]{2}{=}[-0.1cm]{\centering $\max\limits_{\rho \in \mathcal{D}(\mathcal{H})}\  \Tr\! \left[ \Pi^G \mathcal{N}(\rho) \right]$} & \multirow[c]{2}{=}{\centering QMA-Complete}\\
        & & & \\
        \hline
        \multirow[c]{2}{=}{\centering \hyperref[sec:GS-TD-QSZK]{State $G$-Symmetric Trace Distance}} & \multirow[c]{2}{=}{\centering Decide if a state is $G$-symmetric\\in trace norm} & \multirow[c]{2}{=}[-0.1cm]{\centering $\min\limits_{\sigma \in \operatorname{Sym}_G} \frac{1}{2} \left\Vert \rho - \sigma \right\Vert_1$} & \multirow[c]{2}{=}{\centering QSZK-Complete}\\
        & & & \\
        \hline
        \multirow[c]{2}{=}{\centering \hyperref[sec:GS-Fid-QSZK]{State $G$-Symmetric Fidelity}} & \multirow[c]{2}{=}{\centering Decide if a state is $G$-symmetric\\using fidelity} & \multirow[c]{2}{=}[-0.1cm]{\centering $\max\limits_{\sigma \in \operatorname{Sym}_G} F(\rho, \sigma)$} & \multirow[c]{2}{=}{\centering QSZK-Complete}\\
        & & & \\
        \hline
        \multirow[c]{2}{=}{\centering \hyperref[sec:GBSE-QIP2]{State $G$-Bose Symmetric Extendibility}} & \multirow[c]{2}{=}{\centering Decide if a state is $G$-Bose \\  symmetric extendible} & \multirow[c]{2}{=}{\centering $\max\limits_{\sigma \in \operatorname{BSE}_G} F(\rho, \sigma)$} & \multirow[c]{2}{=}{\centering QIP(2)-Complete}\\
        & & & \\
        \hline
        \multirow[c]{2}{=}{\centering \hyperref[sec:GBS-QIPEB]{State Separable Extension $G$-Bose Symmetry}} & \multirow[c]{2}{=}{\centering Decide if a state has a separable $G$-Bose symmetric extension} & \multirow[c]{2}{=}[-0.1cm]{\centering $\max\limits_{\substack{\omega_{RS} \in \operatorname{SEP}(R:S), \\ \Tr_R[\omega_{RS}] = \rho_S}} \Tr\!\left[\Pi^G_{RS} \omega_{RS}\right]$} & \multirow[c]{2}{=}{\centering \qipeb-Complete}\\
        & & & \\
        \hline
        \multirow[c]{2}{=}{\centering \hyperref[sec:GBSE-QIP]{Channel $G$-Bose Symmetric Extendibility}} & \multirow[c]{2}{=}{\centering Decide if the output of a channel with \\ optimized input is $G$-Bose symmetric \\ extendible} & \multirow[c]{2}{=}[-0.1cm]{\centering $\max\limits_{\substack{\rho \in \mathcal{D}(\mathcal{H}), \\\sigma \in \operatorname{BSE}_G}} F(\mathcal{N}(\rho), \sigma)$} & \multirow[c]{2}{=}{\centering QIP-Complete}\\
        & & & \\
        \hline
        \multirow[c]{2}{=}{\centering \hyperref[sec:HS-QMA]{ Hamiltonian Symmetry Maximal Spectral Norm}} & \multirow[c]{2}{=}{\centering Decide if a Hamiltonian is symmetric in maximum spectral norm} & \multirow[c]{2}{=}[-0.1cm]{ \centering $\max\limits_{g \in G} \left\Vert [U(g), e^{-iHt}] \right\Vert^2_\infty$} & \multirow[c]{2}{=}{\centering QMA}\\
        & & & \\
        \hline
        \multirow[c]{2}{=}{\centering \hyperref[sec:HS-QAM]{Hamiltonian Symmetry Average Spectral Norm}} & \multirow[c]{2}{=}{\centering Decide if a Hamiltonian is symmetric in average spectral norm} & \multirow[c]{2}{=}[-0.1cm]{ \centering $\frac{1}{|G|} \sum\limits_{g \in G} \left\Vert [U(g), e^{-iHt}] \right\Vert^2_\infty$} & \multirow[c]{2}{=}{\centering QAM}\\
        & & & \\
        \hline
    \end{tabular}
\caption{List of all complexity-theoretic results from this work. In this table, $\rho$ and $\sigma$ are mixed states, and $\mathcal{N}$ is a quantum channel. $\{U(g)\}_{g \in G}$ is a unitary representation of the group $G$. The shorthands  $\operatorname{Sym}_G$, $\operatorname{B-Sym}_G$, and $\operatorname{BSE}_G$ denote the sets of  $G$-symmetric, $G$-Bose symmetric, and $G$-Bose-symmetric extendible states for the chosen unitary representation of $G$, respectively (from Definitions~\ref{def:G-symmetry}, \ref{ex:usual-Bose-symmetry},  and \ref{def:g-bose-sym-ext}). $\operatorname{SEP}(R\!:\!S)$ denotes the set of separable states across the $R, S$ partition. Finally, $\Pi^G$ denotes the projector onto the symmetric subspace of the group representation, as defined in~\eqref{eq:group_proj_GBS}. }
\label{tab:SymTestComplexity}
\end{center}
\end{table*}

\section{Review of Quantum Computational Complexity Theory and Classes} \label{sec:classreview}

\renewcommand{\arraystretch}{1.2}
\begin{figure}
\begin{tikzpicture}
[
    level 1/.style = {rect, level distance=2cm, sibling distance = 5cm},
    level 2/.style = {rect, sibling distance = 5cm, level distance=5cm },
    every node/.style={rectangle,draw,minimum width=3cm}
]
    \node at (4.5,-2.0) (qipeb) {\begin{tabular}{c} \textbf{\qipeb-Complete} \rule{0pt}{5pt} \\ \hline \rule{0pt}{11pt}$\max\limits_{\substack{\omega_{RS} \in \operatorname{SEP}(R:S), \\ \Tr_R[\omega_{RS}] = \rho_S}} \Tr\!\left[\Pi^G_{RS} \omega_{RS}\right]$ \end{tabular}};
    
    \node at (0.0, 0.0) (qip) {\begin{tabular}{c} \textbf{QIP-Complete} \rule{0pt}{5pt} \\ \hline \rule{0pt}{11pt} $\max\limits_{\substack{\rho \in \mathcal{D}(\mathcal{H}), \\\sigma \in \operatorname{BSE}_G}} F(\mathcal{N}(\rho), \sigma)$ \end{tabular}};
    
    \node (qip2) at (0.0, -2.0) {\begin{tabular}{c} \textbf{QIP(2)-Complete} \rule{0pt}{5pt} \\ \hline \rule{0pt}{10pt}$\max\limits_{\sigma \in \operatorname{BSE}_G} F(\rho, \sigma)$ \end{tabular}};
    
    \node [below of=qip2, yshift=-1.25cm](qszk) {\begin{tabular}{c} \textbf{QSZK-Complete} \rule{0pt}{5pt} \\ \hline \rule{0pt}{11pt}$\min\limits_{\sigma \in \operatorname{Sym}_G} \frac{1}{2} \left\Vert \rho - \sigma \right\Vert_1$\\$\max\limits_{\sigma \in \operatorname{Sym}_G} F(\rho, \sigma)$ \end{tabular}};
    
    \node [below of=qipeb, yshift=-1.25cm] (qma) {\begin{tabular}{c} \textbf{QMA-Complete} \\ \hline \rule{0pt}{11pt}$\max\limits_{\rho\in \mathcal{D}(\mathcal{H})}\  \Tr\!\left[ \Pi^G \mathcal{N}(\rho) \right]$ \end{tabular}};
    
    \node at (2.25, -6.75) (bqp) {\begin{tabular}{c} \textbf{BQP-Complete} \\ \hline \rule{0pt}{11pt}$\Tr\!\left[\Pi^G \rho \right]$ \\ \rule{0pt}{8pt}$\frac{1}{\vert G \vert} \sum\limits_{g \in G} \left\Vert [U(g), \rho] \right\Vert^2_2$ \end{tabular}};

\draw (bqp) -- (qszk);
\draw (qip2) -- (qszk);
\draw (qip2) -- (qip);
\draw (qma) -- (qip2);
\draw (bqp) -- (qma);
\draw (qipeb) -- (qszk);
\draw (qipeb) -- (qma);
\end{tikzpicture}
\caption{List of complete symmetry-testing problems and the corresponding quantum complexity class. The notations are the same as described in Table~\ref{tab:SymTestComplexity}. The cells are organized such that if a cell is connected to a cell above it, the complexity class for the lower cell is a subset of that for the higher cell. For example, QMA is a subset of QIP(2).}
\label{fig:ClassComplexityDiagram}
\end{figure}
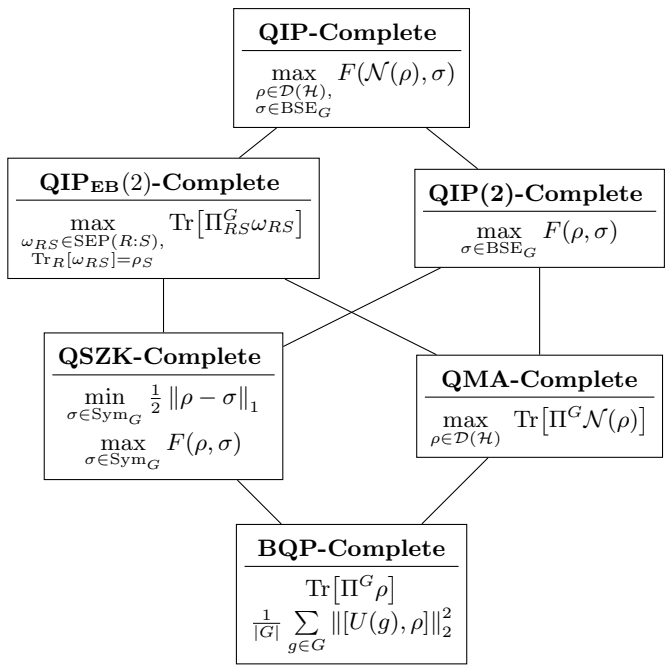

Before we delve into the different results of this work, we present a short review of quantum computational complexity theory. As mentioned previously, comprehensive reviews can be found in \cite{watrous2009complexity, VW15}.

Complexity theory is the branch of study that broadly seeks to classify the difficulty of computational problems \cite{Goldreich2008,arora2009computational}. This field of study has led to answers to several questions of the form: ``Is Problem $A$ harder to solve than Problem $B$?" or ``Given a potential solution to Problem $A$, is it efficient to verify its authenticity?" Broadly speaking, computational problems are placed into computational complexity classes. All problems within a class can be thought of as having similar complexity or difficulty. Examples of such complexity classes are P (the class of problems efficiently solvable on a classical computer), NP (the class of problems whose solutions can be verified efficiently), etc. Quantum complexity theory is a generalization of classical complexity theory where the input resources and the underlying computation are allowed to be quantum mechanical. An important example is BQP, the class of problems efficiently solvable using quantum computers. 

In anticipation of the results of this paper, we define two more concepts that will be essential -- `hardness' and `completeness'. A problem is said to be hard for a computational class if it is at least as hard to solve as the hardest problem of the class. 
A problem is complete for a class if it is hard for the class, and additionally, belongs to the class. A property of complete problems is that every other problem in the class can be efficiently mapped to a complete problem. In other words, the ability to solve a complete problem for a class can be efficiently repurposed to solve any other problem in that class. Therefore, a complete problem indeed completely characterizes the difficulty of the class. 

Two different methods exist to show that a problem is hard for a given class. First, we pick another problem that is known to be complete for the class and efficiently map that problem to the problem of interest. Another method is to take the definition of the class itself and show that an arbitrary problem in the class can be efficiently mapped to the problem of interest. 

Another important concept needed to fully specify the complexity of a problem is a polynomial-time generated family of circuits. Given a classical description/encoding of a quantum circuit, $x \in S$, where $S \subseteq \{0, 1\}^*$ is a set of binary strings, the set of quantum circuits $\left\{Q_x \mid x \in S\right\}$ is said to be polynomial-time generated if there exists a Turing machine that takes in as input the string $x$ and outputs an encoding of the quantum circuit $Q_x$ in polynomial time. This particular definition allows us to limit the power of the computational model to circuits that are ``polynomially complex" by limiting the process by which such circuits are created.

Lastly, we define promise problems. A promise problem can be thought of as a yes-no rewriting of a general decision problem. More concretely, a promise problem is a pair $L = (L_{\operatorname{yes}}, L_{\operatorname{no}})$, where $L_{\operatorname{yes}}, L_{\operatorname{no}}$ are subsets of all possible inputs such that $L_{\operatorname{yes}} \cap  L_{\operatorname{no}}  = \emptyset$. The inputs of the two subsets are called yes-instances and no-instances, respectively. An algorithm is said to ``decide" a promise problem if, given an input from $L_{\operatorname{yes}} \cup L_{\operatorname{no}}$, it can determine to which subset the input belongs.

Throughout this work, we will make reference to various established complexity classes via their archetypal, complete promise problems. In view of both conciseness and comprehensiveness, those relevant promise problems are presented here. 

\subsection{BQP}

\label{subsec:BQP}

The class of bounded-error quantum polynomial time (BQP) promise problems is often referred to as the class of problems efficiently solvable on a quantum computer \cite[Chapter~4]{nielsenchuang}. The classical analog of BQP is the class of bounded-error probabilistic polynomial time (BPP) problems, which is the class of problems efficiently solvable on a classical computer with access to random bits. A promise problem is a member of BQP if there exists an efficient quantum algorithm solving it in polynomial time with a success probability of at least $2/3$.

Here, we give the definition of BQP for convenience, wherein we restrict the circuits considered to be unitary circuits; it is known that the computational power of BQP does not change under this restriction. Let $L = (L_{\operatorname{yes}}, L_{\operatorname{no}})$ be a promise problem, $\alpha, \beta : \mathbb{N} \rightarrow [0, 1]$ arbitrary functions, and $p$ a polynomial function. Then $L \in \operatorname{BQP}_p(\alpha, \beta)$ if there exists a polynomial-time generated family $Q  = \{Q_n : n \in \mathbb{N}\}$ of unitary circuits, where each circuit $Q_n$
\begin{itemize}
    \item takes $n + p(n)$ input qubits -- the first $n$ qubits are used for the input $x \in L$, and the next $p(n)$ input qubits are extra ancilla qubits that the verifier is allowed,
    \item produces as output one decision qubit labeled by~$D$ and $n + p(n) - 1$ garbage qubits labeled by~$G$.
\end{itemize} 
In what follows, we write each $Q_n$ as $Q_{SA \to DG}$, thereby suppressing the dependence on the input length $n = |x|$ and explicitly indicating the systems involved at the input and output of the unitary. In addition, the circuit~$Q_n$ has the following properties:
\begin{enumerate}
    \item Completeness: For all $x \in L_{\operatorname{yes}}$,
        \begin{align}
            \label{eq:BQP-completeness}
             \Pr[Q &\text{ accepts }  x] \nonumber \\ 
             &\coloneqq \left \| (\bra{1}_D \otimes I_G) Q_{SA \to DG}(\ket{x}_S \otimes \ket{0}_A)\right\|_2^2 \nonumber \\
             &\geq \alpha (\left\vert x \right\vert).
        \end{align}
    \item Soundness: For all $x \in L_{\operatorname{no}}$,
        \begin{equation}
            \label{eq:BQP-soundness}
            \Pr[Q \text{ accepts } x ] \leq \beta (\left\vert x \right\vert),
        \end{equation}
\end{enumerate}
where acceptance is defined as obtaining the outcome one upon measuring the decision qubit register $D$ of the state $Q_{SA \to DG}(\ket{x}_S \otimes \ket{0}_A)$. Then $\operatorname{BQP} = \bigcup_p \operatorname{BQP}_p(2/3, 1/3)$, where the union is over every polynomial-bounded function $p$.

\subsection{QIP}

\label{subsec:QIP}

Quantum interactive proof systems (QIP) denote a powerful complexity class in quantum computational complexity theory. Indeed, a landmark result of the field is QIP = PSPACE \cite{jain2010qip}. The interactive proof system model involves messages between a computationally-bounded verifier and a prover with limitless computational power. These interactions may consist of some number of rounds $m$, in which case these models can be classified by the number of exchanges as QIP($m$). After all the messages have been exchanged, the verifier makes a decision to either accept or reject based on these interactions. Thus, the class QIP refers to all such promise problems that can be framed in this manner.

\begin{figure*}
    \centering
    \includegraphics[width=0.75\linewidth]{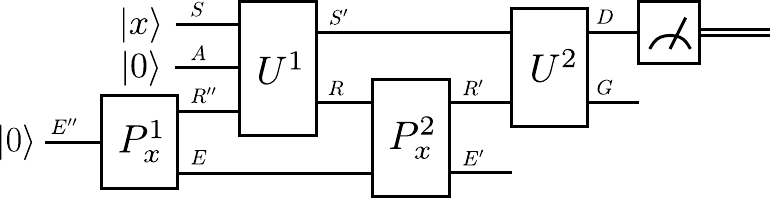}
    \caption{A general QIP(3) algorithm. The unitaries $P^1_x$ and $P^2_x$ are implemented by an all-powerful prover, and the probability of measuring the decision qubit to be in the state $\ket{1}$ is the acceptance probability of the algorithm.}
\label{fig:QIP(3)}
\end{figure*}

More formally, the definition both of QIP($m$) and QIP are given in \cite{kitaev2000parallelization,watrous2003pspaceqip} to be as follows: \
Let $m \in \mathbb{N}$, and let $\alpha,\beta :\mathbb{N} \rightarrow [0,1]$ be functions. Then let QIP$(m,\alpha,\beta)$ denote the class of promise problems $L$ for which there exists an $m$-message verifier $V$ such that
\begin{enumerate}
    \item for all $x \in L$, $\exists$ a prover $P$ such that the pair $(V,P)$ accepts with probability at least $\alpha(|x|)$, and,
    \item for all $x \notin L$, $\forall$ provers $P$, the pair $(V,P)$ accepts with probability at most $\beta(|x|)$.
\end{enumerate}
Usually, interactive proof classes are denoted solely by the number of messages exchanged, $\operatorname{QIP}(m)$. An important finding for QIP is that $\operatorname{QIP} = \operatorname{QIP}(3)$, which implies that no further computational power is afforded by increasing the number of messages exchanged beyond three \cite{kitaev2000parallelization}. A general QIP(3) algorithm can be seen in Figure~\ref{fig:QIP(3)}.

The problem of close images was the first QIP-Complete problem to be proposed \cite{kitaev2000parallelization}, and it is stated as follows:
\begin{definition}[Problem of Close Images]
\label{def:prob_close_images}
For constants $0 \leq \beta < \alpha \leq 1$, the input consists of two polynomial-time computable quantum circuits that agree on the number of output qubits and realize the quantum channels $\mathcal{N}_1$ and $\mathcal{N}_2$. Decide which of the following holds:
\begin{align}
    \textit{Yes: } & \quad \max_{\rho_1, \rho_2} F(\mathcal{N}_1(\rho_1),\mathcal{N}_2(\rho_2)) \geq \alpha, \\
    \textit{No: } & \quad  \max_{\rho_1, \rho_2} 
    F(\mathcal{N}_1(\rho_1),\mathcal{N}_2(\rho_2)) \leq \beta,
\end{align}
where the optimization is over all input states $\rho_1$ and $\rho_2$.
\end{definition}

\subsection{QMA}

The quantum Merlin--Arthur (QMA) class is equivalent to QIP(1); that is, this model consists of a single message exchanged between a computationally unbounded prover and a computationally limited verifier.

The definition of QMA can be found in \cite{watrous2009complexity}, reproduced here for convenience. Let $L = (L_{\operatorname{yes}}, L_{\operatorname{no}})$ be a promise problem, let $p, q$ be polynomially-bounded functions, and let $\alpha, \beta : \mathbb{N} \rightarrow [0, 1]$ be functions. Then $L \in \operatorname{QMA}_{p, q}(\alpha, \beta)$ if  there exists a polynomial-time generated family of unitary circuits $Q  = \{Q_n : n \in \mathbb{N}\}$, where each circuit $Q_n$ 
\begin{itemize}
    \item takes $n + p(n) + q(n)$ input qubits -- the first $n$ qubits are used for the input $x \in L$,  the next $p(n)$ input qubits are extra ancilla qubits that the verifier is allowed, and the last $q(n)$ qubits are given by the prover,
    \item produces as output one decision qubit labeled by~$D$ and $\left. n + p(n) + q(n) - 1\right.$ garbage qubits labeled by~$G$.
\end{itemize} 
As before, we write $Q_n$ as $Q_{SAP \to DG}$, thereby suppressing the dependence on the input length $n = |x|$ and explicitly indicating the systems involved at the input and output of the unitary. In addition, the circuit~$Q_n$ has the following properties:
\begin{enumerate}
    \item Completeness: For all $x \in L_{\operatorname{yes}}$, there exists a $q(\vert x\vert )$-qubit state $\sigma_P$ such that
    \begin{align}
         \Pr[Q \text{ accepts }(x, \sigma)]  & =
         \langle 1 \vert_D \operatorname{Tr}_G [\omega_{DG}] \vert 1 \rangle_D \label{eq:QMA_completeness} \\
        & \geq \alpha (\vert x\vert ),
    \end{align}
    where
    \begin{equation}
    \omega_{DG} \coloneqq Q_n (\outerproj{x}_S \otimes \outerproj{0}_A \otimes \sigma_P) Q_n^\dagger .
    \end{equation}
    \item Soundness: For all $x \in L_{\operatorname{no}}$ and every $q(\vert x\vert )$-qubit state $\sigma_P$, the following inequality holds:
    \begin{equation}
        \label{eq:QMA_soundness}
        \Pr[Q \text{ accepts }(x, \sigma)] \leq \beta (\vert x\vert ).
    \end{equation}
\end{enumerate}
Then $\operatorname{QMA} = \bigcup_{p, q} \operatorname{QMA}_{p, q}(2/3, 1/3)$, where the union is over all polynomial-bounded functions $p$ and $q$. 

\subsection{QSZK}

The complexity class quantum statistical zero-knowledge (QSZK) gives a quantum analog of the classical statistical zero-knowledge class \cite{watrous2002qszk, Wat06}, which can be phrased in terms of an interactive proof system.

We reproduce the definition of QSZK here for convenience. Let $V$ be a verifier and $P$ a prover that acts on some input $x$. Define the mixed state of the verifier and message qubits after $j$ messages to be $\rho_{V,M}(x,j)$. Then the pair $(V,P)$ is a quantum statistical zero-knowledge proof system for a promise problem $L$ if
\begin{enumerate}
    \item $(V,P)$ is an interactive proof system for $L$, and 
    \item there exists a polynomial-time preparable set $\{\sigma_{x,j}\}_j$ of states such that 
    \begin{equation}
        x \in L \Rightarrow \forall j \quad \left\Vert \sigma_{x,j} - \rho_{V,M}(x,j)  \right\Vert_{1} \leq \delta(|x|),  
    \end{equation}
    for some $\delta$ such that $\delta(n) < 1/p(n)$ for sufficiently large $n$ and every polynomial $p$.
\end{enumerate}
The completeness and soundness requirement of this class comes from the underlying proof system; for the definition of QSZK, we restrict the completeness and soundness errors to be at most $1/3$.

For this class, as with many class definitions, it can be helpful to look at a QSZK-Complete promise problem. The quantum state distinguishability problem was originally proposed alongside the class definition in \cite{watrous2002qszk}, and so it is a natural choice. The problem statement is as follows:

\begin{definition}[Quantum State Distinguishability]
\label{def:QSDProblem}
Let $L = (L_{\operatorname{yes}}, L_{\operatorname{no}})$ be a promise problem, and let $\alpha$ and $\beta$ be constants satisfying $0 \leq \beta < \alpha \leq 1$. Given two quantum circuits $Q_0$ and $Q_1$ acting on $m$ qubits each and having $k$ specified output qubits, let $\rho_i$ denote the output mixed state obtained by running $Q_i$ on an input $\ket{0}^{\otimes m}$. Decide whether
\begin{align}
\textit{Yes: } & \quad \frac{1}{2}\left\Vert \rho_0 - \rho_1 \right\Vert_1 \geq \alpha,\\
\textit{No: } & \quad \frac{1}{2}\left\Vert \rho_0 - \rho_1 \right\Vert_1 \leq \beta.
\end{align}
\end{definition}

In \cite{watrous2002qszk}, it was shown that $(\alpha,\beta)$-Quantum State Distinguishability is QSZK-Complete for $\left. 0 \leq \beta < \alpha^2 \leq 1\right. $.

\subsection{QIP(2)}

The complexity class QIP(2) specifically denotes a set of promise problems that include two messages exchanged between the prover and verifier. The formal definition can be inferred from Section~\ref{subsec:QIP} by setting the number of messages to two, i.e., $m=2$. A general QIP(2) algorithm can be seen in Figure~\ref{fig:QIP(2)}.

\begin{figure*}
    \centering
    \includegraphics[width=0.75\linewidth]{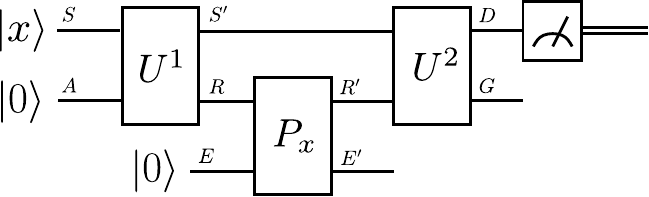}
    \caption{A general QIP(2) algorithm. The unitary $P_x$ is implemented by an all-powerful prover, and the probability of measuring the decision qubit to be in the state $\ket{1}$ is the acceptance probability of the algorithm.}
\label{fig:QIP(2)}
\end{figure*}

We now reproduce the canonical QIP(2)-Complete problem \cite{W02QIP,HMW13} as follows:
\begin{definition}[Problem of Close Image]
\label{def:prob-close-image}
Given is a circuit to realize a unitary extension $U_{AE^{\prime}\rightarrow BE}$ of a channel $\mathcal{N}_{A\rightarrow B}$, such that
\begin{multline}
\mathcal{N}_{A\rightarrow B}(\omega_{A}) =
\\
\operatorname{Tr}_{E}[U_{AE^{\prime}\rightarrow BE} (\omega_{A}\otimes\ket{0}\!\bra{0}_{E^{\prime}}) (U_{AE^{\prime}\rightarrow BE})^{\dag}]
\end{multline}
for every input state $\omega_{A}$, and a circuit to realize a purification of the state $\rho_{B}$. Decide which of the following holds:
\begin{align}
    \textit{Yes: } & \quad \max_{\sigma_{A}}F(\rho_{B},\mathcal{N}_{A\rightarrow B}(\sigma_{A})) \geq \alpha ,\\
    \textit{No: } & \quad \max_{\sigma_{A}}F(\rho_{B},\mathcal{N}_{A\rightarrow B}(\sigma_{A})) \leq \beta,
\end{align}
where the optimization is over every input state $\sigma_{A}$.
\end{definition}

Note that the Problem of Close Image is different from the Problem of Close Images (see Definition~\ref{def:prob_close_images}). In the former, we bound the fidelity between a channel and a state, whereas in the latter, we bound the fidelity between two channels.

\subsection{\texorpdfstring{\qipeb}{QIP-EB(2)}}

The complexity class \qipeb\ was introduced in  \cite{philip2023quantum} and represents a modification of QIP(2). By inspecting Figure~\ref{fig:QIP(2)} and recalling the Stinespring dilation theorem (see, e.g., \cite{wilde_2017}), we see that the prover's action in a QIP(2) protocol is equivalent to performing a quantum channel that has input system~$R$ and output system $R'$ (see also \cite[Figure~1]{JUW09}). The idea behind \qipeb\  is that the prover is constrained to performing an entanglement-breaking channel. Such a channel has the following form \cite{HSR03}:
\begin{equation}
\rho  \to \sum_{x} \Tr[\mu_x \rho]\phi_x,
\label{eq:EB-ch-def}
\end{equation}
where $\{\mu_x\}_x$ is a rank-one positive operator-valued measure (i.e., each $\mu_x$ is a rank-one positive semi-definite operator and $\sum_x \mu_x = I$) and $\{\phi_x\}_x$ is a set of pure states.

The canonical \qipeb-Complete problem is as follows \cite[Theorem~11]{philip2023quantum}:

\begin{definition}
\label{def:prob-close-image-EB-style}
Given circuits to generate a unitary extension of a channel $\mathcal{N}_{G\rightarrow S}$ and a purification of a state~$\rho_{S}$,
decide which of the following holds:
\begin{align}
    \textit{Yes: } & \ \max_{\substack{\{  (p(x),\psi^{x})\}  _{x},\\
            \left\{  \varphi^{x}\right\}_{x},\\
            \sum_{x}p(x)\psi_{S}^{x}=\rho_{S}
            }}   \sum_{x}p(x)F(\psi_{S}^{x},\mathcal{N}_{G\rightarrow S}(\varphi_{G}^{x})) \geq \alpha ,\\
    \textit{No: } & \ \max_{\substack{\{  (p(x),\psi^{x})\}  _{x},\\
            \left\{  \varphi^{x}\right\}_{x},\\
            \sum_{x}p(x)\psi_{S}^{x}=\rho_{S}
            }}   \sum_{x}p(x)F(\psi_{S}^{x},\mathcal{N}_{G\rightarrow S}(\varphi_{G}^{x})) \leq \beta,
\end{align}
        where the optimization is over every pure-state decomposition of $\rho_{S}$, as $\sum_{x}p(x)\psi_{S}^{x}=\rho_{S}$, and $\left\{  \varphi^{x}\right\}_{x}$ is a set of pure states.
\end{definition}

\subsection{QAM}

The quantum Arthur--Merlin (QAM) class was introduced in \cite{marriott2005quantum}, and it can be understood as a variation of QMA in which the verifier and prover are given access to shared randomness in advance. It can also be understood as a restricted version of QIP(2) in which the first message of the verifier is restricted to being a uniformly random classical bitstring. As such, the following containments hold: QMA $\subseteq$ QAM $\subseteq$ QIP(2).

Let us recall its definition here. Let $L = (L_{\operatorname{yes}}, L_{\operatorname{no}})$ be a promise problem, let $p, q, r$ be polynomially-bounded functions, and let $\alpha, \beta : \mathbb{N} \rightarrow [0, 1]$ be functions. Then $L \in \operatorname{QAM}_{p, q, r}(\alpha, \beta)$ if there exists a polynomial-time generated family of unitary circuits $Q  = \left\{Q_{n,y} : n \in \mathbb{N}, y \in \mathcal{Y}\right\}$, where $y$ is a uniformly random bitstring consisting of $r(n)$ bits, so that $\log_2|\mathcal{Y}| = r(n)$, and each circuit~$Q_{n,y}$ 
\begin{itemize}
    \item takes $n + p(n) + q(n)$ input qubits -- the first $n$ qubits are used for the input $x \in L$,  the next $p(n)$ input qubits are extra ancilla qubits that the verifier is allowed, and the last $q(n)$ qubits are given by the prover,
    \item produces as output one decision qubit labeled by~$D$ and $\left. n + p(n) + q(n) - 1\right.$ garbage qubits labeled by~$G$.
\end{itemize} 
We write $Q_{n,y}$ as $Q^y_{SAP \to DG}$, thereby suppressing the dependence on the input length $n = |x|$ and explicitly indicating the systems involved at the input and output of the unitary. We also use the shorthand $Q^y \equiv Q^y_{SAP \to DG}$. In addition, each set~$\{Q_{n,y}\}_{y\in\mathcal{Y}}$ of circuits has the following properties:
\begin{enumerate}
    \item Completeness: For all $x \in L_{\operatorname{yes}} $, there exists a set $\{\sigma^y_P\}_{y\in \mathcal{Y}}$ of $q(\vert x\vert )$-qubit states  such that
    \begin{align}
         & \frac{1}{|\mathcal{Y}|} \sum_{y \in \mathcal{Y}} \Pr[Q^y \text{ accepts }(x, \sigma^y)]  \notag \\
         & =
         \frac{1}{|\mathcal{Y}|} \sum_{y \in \mathcal{Y}} \langle 1 \vert_D \operatorname{Tr}_G [\omega^y_{DG}] \vert 1 \rangle_D \label{eq:QAM_completeness} \\
        & \geq \alpha (\vert x\vert ),
    \end{align}
    where
    \begin{equation}
    \omega^y_{DG} \coloneqq Q^{y} (\outerproj{x}_S \otimes \outerproj{0}_A \otimes \sigma^y_P) (Q^{y})^\dagger .
    \end{equation}
    
    \item Soundness: For all $x \in L_{\operatorname{no}}$,  and every set $\{\sigma^y_P\}_{y\in \mathcal{Y}}$ of $q(\vert x\vert )$-qubit states, the following inequality holds:
    \begin{equation}
        \label{eq:QAM_soundness}
        \frac{1}{|\mathcal{Y}|} \sum_{y \in \mathcal{Y}} \Pr[Q^y \text{ accepts }(x, \sigma^y)] \leq \beta (\vert x\vert ).
    \end{equation}
\end{enumerate}
The acceptance probability
\begin{equation}
\frac{1}{|\mathcal{Y}|} \sum_{y \in \mathcal{Y}} \Pr[Q^y \text{ accepts }(x, \sigma^y)]    
\end{equation}
can be understood as the probability of acceptance conditioned on a fixed value of $y$, which is then averaged over the shared uniform randomness (i.e., here we are applying the law of total probability). Then $\operatorname{QAM} = \bigcup_{p,q,r} \operatorname{QAM}_{p, q, r}(2/3, 1/3)$, where the union is over all polynomial-bounded functions $p$, $q$, and $r$.

\section{Results: Symmetry-Testing Problems and Quantum Computational Complexity}
\label{sec:results}

In this section, we present the main results of our work connecting the quantum complexity classes hierarchy and different symmetry testing algorithms. The results presented here all have a similar proof idea -- prove that the promise problem is in the complexity class of interest and that the problem is hard for the same complexity class. For all the hardness results, we either map an existing complete problem to the problem of interest, or we show that a generic problem that defines the class can be rewritten in terms of the problem of interest. In the latter case, we select the group and the input state such that the acceptance probability of the problem maps to the corresponding symmetry quantity. In most cases, we pick the group to be $C_2 = \{I, V\}$, with $V^2 = I$. We find that this is the simplest choice of the group $G$. The choice of $V$ is then ``reverse-engineered" in a way to match the symmetry quantity. While this preview discussion might seem a bit abstract as of now, we make the concepts more concrete in the specific examples that follow, with the first being discussed in more detail around \eqref{eq:GBS_choice_V}--\eqref{eq:GBS_Projection}.

\subsection{Testing \texorpdfstring{$G$}{G}-Bose Symmetry of a State is BQP-Complete}
\label{sec:GBS-BQP}

In this section, we show that testing the $G$-Bose Symmetry of a state is BQP-Complete. We begin now by specifying this problem statement in precise terms. 

\begin{problem}
[$\left( \alpha,\beta\right)  $-State-$G$-Bose-Symmetry]Let $\alpha$ and $\beta$ be such that $0\leq\beta<\alpha\leq1$. Given are descriptions of a circuit $U_{RS}^{\rho}$ that generates a purification of a state $\rho_S$ and circuit descriptions of a representation $\{U_S(g)\}_{g \in G}$ of a group $G$. Decide which of the following holds:
\begin{align}
\textit{Yes: }  & \quad \Tr\! \left[\Pi^G_S \rho_S \right] \geq \alpha,\\
\textit{No: }  &  \quad \Tr\! \left[\Pi^G_S \rho_S \right] \leq \beta,
\end{align}
where the group representation projector $\Pi^G_S$ is defined in~\eqref{eq:group_proj_GBS}.
\end{problem}

As observed in \cite[Section~3.1]{LRW22}, the measure $\Tr\! \left[\Pi^G_S \rho_S \right]$ is a faithful symmetry measure, in the sense that it is equal to one if and only if the state $\rho_S$ is Bose-symmetric.

\begin{theorem}
\label{thm:BQP-GBS} The promise problem State-$G$-Bose-Symmetry is BQP-Complete.

\begin{enumerate}
\item $\left(  \alpha,\beta\right)  $-State-$G$-Bose-Symmetry is in BQP for all $\beta<\alpha$, whenever the gap between $\alpha$ and $\beta$ is larger than an inverse polynomial in the input length.

\item $\left(1-  \varepsilon, \varepsilon\right)  $-State-$G$-Bose-Symmetry is BQP-Hard, even when $\varepsilon$ decays exponentially in the input length.
\end{enumerate}
Thus, $\left(  \alpha,\beta\right)  $-State-$G$-Bose-Symmetry is BQP-Complete for all $\left(  \alpha,\beta\right)  $ such that $0\leq \beta<\alpha\leq 1$.
\end{theorem}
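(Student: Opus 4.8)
The plan is to prove the two parts separately. For \textbf{containment in BQP}, the key observation is that $\Tr[\Pi^G_S \rho_S]$ is exactly the acceptance probability of the standard symmetry-test circuit: prepare a control register in the uniform superposition $\frac{1}{\sqrt{|G|}}\sum_{g\in G}\ket{g}$, apply the controlled unitary $\sum_g \ket{g}\!\bra{g}\otimes U_S(g)$ to the state $\rho_S$ (which is prepared by running $U^\rho_{RS}$ on $\ket{0}$ and discarding $R$), then apply the inverse Fourier/preparation map on the control and measure whether it returns to $\ket{0}$. This is precisely the algorithm of \cite{LRW22}; one just needs to note that all the ingredients---the purification circuit $U^\rho_{RS}$, the circuits $\{U_S(g)\}$, and the controlled-$U_S(g)$ operation built from them, together with the unitary preparing $\frac{1}{\sqrt{|G|}}\sum_g \ket g$ (here one can take $G = \mathbb{Z}_{2^k}$ or a product of such, so this is just Hadamards, or more generally use the assumption that the group operations are given as circuits)---are polynomial-size, so $Q_n$ is a polynomial-time generated family. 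The acceptance probability equals $\Tr[\Pi^G_S\rho_S]$ by a short computation, so Yes-instances accept with probability $\geq\alpha$ and No-instances with probability $\leq\beta$; standard majority-vote amplification boosts any inverse-polynomial gap to $(2/3,1/3)$, giving membership in BQP.

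For \textbf{BQP-hardness}, the plan is to reduce an arbitrary promise problem $L\in\mathrm{BQP}$ to $(1-\varepsilon,\varepsilon)$-State-$G$-Bose-Symmetry. Take the BQP verifier circuit $Q_{SA\to DG}$ for input $x$ from the definition in Section~\ref{subsec:BQP}. Define the state to be tested as the output of this circuit: let $\rho^\rho_{RS}$ in the problem be realized by the circuit that prepares $Q_{SA\to DG}(\ket x_S\otimes\ket0_A)$, and designate the decision qubit $D$ as the relevant system on which symmetry is tested, with everything else ($G$) traced out. Choose the group to be $G=\mathbb{Z}_2$ acting on the decision qubit by $\{I_D, Z_D\}$ (or, to make it a genuine Bose-symmetry condition rather than just $\rho=Z\rho Z$, act by $\{I_D,\,\outerproj{1}_D - \outerproj{0}_D\}$; more carefully, one wants the symmetric subspace of the representation to be exactly $\mathrm{span}\{\ket1_D\}$). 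With the representation $\{I_D, \outerprod{1}{1}_D-\outerprod00_D\}$, the projector is $\Pi^G_D = \outerproj{1}_D$, so $\Tr[\Pi^G_D \rho_D] = \bra1_D\rho_D\ket1_D = \Pr[Q\text{ accepts }x]$. Then a Yes-instance of $L$ gives acceptance probability $\geq\alpha_{\mathrm{BQP}}$ and a No-instance gives $\leq\beta_{\mathrm{BQP}}$; first amplify the original BQP circuit so that $\alpha_{\mathrm{BQP}}\geq 1-\varepsilon$ and $\beta_{\mathrm{BQP}}\leq\varepsilon$ with $\varepsilon$ exponentially small (possible by parallel repetition, since amplification only changes the circuit by a polynomial factor), and the reduction maps $x$ to an instance of $(1-\varepsilon,\varepsilon)$-State-$G$-Bose-Symmetry. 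All of this is computable in polynomial time from $x$, establishing hardness. Combining the two parts yields BQP-completeness for every $(\alpha,\beta)$ with $0\le\beta<\alpha\le1$ and an inverse-polynomial gap (and BQP-hardness even for exponentially small gaps).

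The \textbf{main obstacle} is choosing the group representation in the hardness reduction so that the Bose-symmetry measure $\Tr[\Pi^G\rho]$ coincides \emph{exactly} with the verifier's acceptance probability $\bra1_D\rho_D\ket1_D$, rather than merely being monotonically related to it: a naive choice like $\{I,Z\}$ on the decision qubit gives $\Pi^G=\outerproj0+\outerproj1=I$ on that qubit (trivial), so one must enlarge the system or pick a representation whose symmetric subspace is one-dimensional and spanned by the accept state, which is what the $\{I_D,\,\outerproj1_D-\outerproj0_D\}$ choice accomplishes. A secondary point to handle cleanly is that the problem as stated takes a purification circuit of $\rho_S$; since $Q_{SA\to DG}(\ket x_S\otimes\ket0_A)$ is already a pure state on $DG$, taking $S\equiv D$ and $R\equiv G$ makes $U^\rho_{RS}\equiv Q_n$ literally a purification circuit, so this is immediate. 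The remaining verifications---that the test circuit's acceptance probability equals $\Tr[\Pi^G\rho]$, and that amplification behaves as claimed---are routine.
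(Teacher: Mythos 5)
Your proposal is correct and follows essentially the same route as the paper: containment via the symmetry-test circuit of \cite{LRW22} (whose acceptance probability is exactly $\Tr[\Pi^G_S\rho_S]$) plus standard amplification, and hardness via the representation $\{I_D,\,\outerproj{1}_D-\outerproj{0}_D\}=\{I_D,-Z_D\}$ of $C_2$ on the decision qubit, which is precisely the paper's choice $V_D=-Z_D$ yielding $\Pi^G_D=\outerproj{1}_D$ and hence an exact match with the verifier's acceptance probability. The obstacle you flag (that $\{I,Z\}$ gives the trivial projector) and your resolution of it coincide with the paper's construction, so no further comment is needed.
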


\begin{remark}
    In the statement of Theorem~\ref{thm:BQP-GBS}, the first part indicates the largest range of parameters for which we can show that the problem is contained in BQP. Similarly, the second part indicates the largest range of parameters for which we can show that the problem is BQP-hard. Both of these parameter ranges include the case when $\alpha$ and $\beta$ are constants. As such, this leads to the final statement above that the problem is BQP-complete for constant values of $\alpha$ and $\beta$ satisfying the inequality constraint given. We present all subsequent theorems in a similar way.
\end{remark}

\begin{proof}[Proof of Theorem~\ref{thm:BQP-GBS}]
To show that the problem is BQP-Complete, we need to demonstrate two facts: first, that the problem is in BQP, and second, that it is BQP-Hard. Let us begin by proving that the problem is in BQP. In \cite[Chapter~8]{harrow2005applications} (see also \cite[Algorithm 1]{LRW22}), an algorithm was proposed to test for $G$-Bose symmetry of a state $\rho_S$ given a circuit description of unitary that generates a purification of the state and circuit descriptions of a unitary representation of a group~$G$, $\{U(g)\}_{g \in G}$. Since the algorithm can be performed efficiently, the problem is contained in~BQP. 

Next, we show that any problem in the BQP class can be reduced to an instance of this problem. The acceptance and rejection probabilities are encoded in the state of the decision qubit, with zero indicating acceptance. Now, we need to map this problem to an instance of State-$G$-Bose-Symmetry; i.e., using the circuit descriptions for a general BQP algorithm, we need to define a state~$\rho_{S'}$ and a unitary representation $\{U_{S'}(g)\}_{g \in G}$, and also show how the symmetry-testing condition $\operatorname{Tr}[\Pi_{S'}^G \rho_{S'}]$ can be written in terms of the BQP algorithm's acceptance probability. To this end, we define the group $G$ to be the cyclic group on two elements $C_2 = \{I, V\}$ such that $V^2 = I$, where $V$ is simply given by
\begin{equation}
    \label{eq:GBS_choice_V}
    V_D = -Z_D,
\end{equation}
and the input state to be 
\begin{equation}
    \rho_{D} = \Tr_{G} [Q_{SA \to DG} (\outerproj{x}_{S} \otimes \outerproj{0}_{A}) (Q_{SA \to DG})^\dagger].
\end{equation}
As such, we are making the identification $S' \leftrightarrow D$ between the system label $S'$ of a general symmetry-testing problem and the system $D$ for a BQP algorithm. The group representation and circuit to generate $\rho_D$ are thus efficiently implementable.
Furthermore, the state of interest is just the state of the decision qubit, and the group projector for the  unitary representation above is given by
\begin{equation}
\label{eq:GBS_Projection}
    \Pi^G_D = \frac{1}{2} (I_D - Z_D) = \outerprod{1}{1}_D.
\end{equation}

Furthermore, we find that the symmetry-testing condition $\Tr[\Pi_D^G \rho_D]$ maps to the BQP algorithm's acceptance probability as follows:
\begin{align}
& \Tr[\Pi_D^G \rho_D] \notag \\
& = 
\Tr[\outerproj{1}_D \rho_D] \notag\\
& = \Tr[(\outerproj{1}_D \otimes I_G) \times \notag \\
& \qquad (Q_{SA \to DG} (\outerproj{x}_{S} \otimes \outerproj{0}_{A}) (Q_{SA \to DG})^\dagger)] \notag\\
& = \left\Vert (\bra{1}_D \otimes I_G) Q_{SA \to DG} \ket{x}_S \ket{0}_A \right\Vert^2_2.
\end{align}
Comparing with \eqref{eq:BQP-completeness}, we observe that the acceptance probability of the BQP algorithm exactly matches the symmetry-testing condition of the constructed $G$-Bose symmetry-testing problem. As such, we have proven that any BQP problem can be efficiently mapped to a $G$-Bose symmetry test, concluding the proof.
\end{proof}

\subsection{Testing \texorpdfstring{$G$}{G}-Symmetry of a State Using Hilbert--Schmidt Norm is BQP-Complete}
\label{sec:Sym-HS-BQP}

In this section, we show that testing the $G$-symmetry of a state using the Hilbert--Schmidt norm is BQP-Complete, and it is thus emblematic of the class of problems efficiently solvable using quantum computers.

\begin{problem}
 [$\left( \alpha,\beta\right)  $-State-HS-Symmetry]
 Given are a circuit description of a unitary $U_{RS}^{\rho}$ that generates a purification of the state~$\rho_S$ and circuit descriptions of a unitary representation $\{U_S(g)\}_{g \in G}$ of a group $G$. Let $\alpha$ and $\beta$ be such that $0\leq\beta<\alpha\leq \gamma$, where
\begin{equation}
    \gamma \coloneqq 2\left(1 - \frac{1}{|G|}\right).
    \label{eq:gamma-bnd}
\end{equation}
  Decide which of the following holds:
\begin{align}
\textit{Yes: } & \quad \frac{1}{\vert G \vert} \sum\limits_{g \in G} \left\Vert [U(g), \rho] \right\Vert^2_2 \leq \beta,\label{eq:HS-norm-condition-states} \\
\textit{No: } & \quad \frac{1}{\vert G \vert} \sum\limits_{g \in G} \left\Vert [U(g), \rho] \right\Vert^2_2 \geq \alpha,
\end{align}
where the Hilbert--Schmidt norm of a matrix $A$ is defined as $\left\Vert A \right\Vert_2 \coloneqq \sqrt{\operatorname{Tr}[A^\dag A]}$.
\end{problem}

As observed in \cite[Section~1]{bandyopadhyay2023efficient}, the quantity in~\eqref{eq:HS-norm-condition-states} is a faithful symmetry measure in the sense that it is equal to zero if and only if the state $\rho$ is $G$-symmetric, as in Definition~\ref{def:G-symmetry}.

The quantity $\gamma$ in \eqref{eq:gamma-bnd} arises as a natural upper bound for $\frac{1}{\vert G \vert} \sum\limits_{g \in G} \left\Vert [U(g), \rho] \right\Vert^2_2$ because
\begin{align}
& \frac{1}{\vert G \vert} \sum\limits_{g \in G} \left\Vert [U(g), \rho] \right\Vert^2_2 \notag \\
&  = \frac{1}{\vert G \vert} \sum\limits_{g \in G} 2 \left( \Tr[\rho^2] - \Tr[\rho U(g) \rho U(g)^\dagger] \right) \notag \\
&  = \frac{1}{\vert G \vert} \sum\limits_{g \in G, g \neq e} 2 \left( \Tr[\rho^2] - \Tr[\rho U(g) \rho U(g)^\dagger] \right) \notag \\
& \leq \gamma,
\end{align}
where the first equality follows from \eqref{eq:HS-proof-step-1} below and the inequality follows because $\Tr[\rho^2] \leq 1$ and $\Tr[\rho U(g) \rho U(g)^\dagger] \geq 0$. Furthermore, the upper bound is saturated by choosing $\rho$ to be $\outerproj{0}$ in a $d$-dimensional space and the unitary representation to be the Heisenberg--Weyl shift operators $\{X(x)\}_{x=0}^{d-1}$, such that $X(x)\ket{0} = \ket{x}$.

\begin{theorem}
\label{thm:BQP-Sym-HS} The promise problem State-HS-Symmetry is BQP-Complete.

\begin{enumerate}
\item $\left(  \alpha,\beta\right)  $-State-HS-Symmetry is in BQP for all $\beta<\alpha$. (It is implicit that the gap between $\alpha$ and $\beta$ is larger than an inverse polynomial in the input length.)

\item $\left(  \gamma-\varepsilon, \varepsilon\right)  $-State-HS-Symmetry is BQP-Hard, even when $\varepsilon$ decays exponentially in the input length.
\end{enumerate}

\noindent Thus, $\left(  \alpha,\beta\right)  $-State-HS-Symmetry is BQP-Complete for all $\left(  \alpha,\beta\right)  $ such that $0\leq \beta<\alpha\leq \gamma$.
\end{theorem}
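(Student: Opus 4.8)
The plan is to mirror the structure of the proof of Theorem~\ref{thm:BQP-GBS}: establish containment in BQP by exhibiting an efficient verifier algorithm, and establish BQP-hardness by embedding an arbitrary BQP computation into a State-HS-Symmetry instance. For the containment direction, I would invoke the algorithm of \cite{bandyopadhyay2023efficient} that estimates $\frac{1}{|G|}\sum_{g\in G}\|[U(g),\rho]\|_2^2$ given circuit descriptions of $U^\rho_{RS}$ and of the $\{U_S(g)\}_{g\in G}$. The key identity to record first is
\begin{equation}
\|[U(g),\rho]\|_2^2 = 2\left(\Tr[\rho^2] - \Tr[\rho U(g)\rho U(g)^\dag]\right),
\label{eq:HS-proof-step-1}
\end{equation}
which reduces the problem to estimating the two overlap terms; each of these can be obtained via a destructive-SWAP-test-style circuit on two copies of the purification (one copy conjugated by $U(g)$ for a uniformly random $g\in G$), so the whole quantity is estimable to inverse-polynomial additive precision in polynomial time. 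Since the yes/no gap $\alpha-\beta$ is inverse-polynomial, this places the problem in BQP; a standard Chernoff/amplification argument boosts the success probability to $2/3$.

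For the hardness direction, the idea is again to take a generic $\mathrm{BQP}_p(1-\varepsilon,\varepsilon)$ circuit $Q_{SA\to DG}$, define
\begin{equation}
\rho_D \coloneqq \Tr_G[Q_{SA\to DG}(\outerproj{x}_S\otimes\outerproj{0}_A)(Q_{SA\to DG})^\dag],
\end{equation}
take the group to be $C_2=\{I,V\}$ acting on the decision qubit, and choose $V$ so that the HS-commutator quantity is a decreasing affine function of the acceptance probability $a \coloneqq \langle 1|_D\rho_D|1\rangle_D$. With $G=C_2$, we have $\frac{1}{|G|}\sum_{g}\|[U(g),\rho]\|_2^2 = \frac12\|[V_D,\rho_D]\|_2^2$, and using \eqref{eq:HS-proof-step-1} together with $\Tr[\rho_D^2]\le 1$, the natural choice is $V_D = Z_D$ (or $-Z_D$; the sign is irrelevant in the commutator), for which a short computation gives $\|[Z_D,\rho_D]\|_2^2 = 2(\Tr[\rho_D^2]-\Tr[\rho_D Z_D\rho_D Z_D])$, and since $Z_D\rho_D Z_D$ differs from $\rho_D$ only in the sign of the off-diagonal block, $\Tr[\rho_D^2]-\Tr[\rho_D Z_D\rho_D Z_D] = 2\,\|{\rm (offdiag\ block)}\|_2^2 = 4\,a(1-a)\,$ when $\rho_D$ is a pure qubit state; in general one gets a quantity that is an inverse-polynomially-faithful function of $a$ and vanishes exactly when $a\in\{0,1\}$. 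For a yes-instance, $a\ge 1-\varepsilon$ forces this quantity to be $O(\varepsilon)\le\varepsilon'$; for a no-instance, $a\le\varepsilon$ likewise forces it to be small — which is the wrong direction.

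This is where the main obstacle lies, and I expect it is resolved exactly as in the analogous separability/distinguishability constructions: the raw decision qubit gives a quantity that is symmetric under $a\mapsto 1-a$, so a single decision qubit cannot separate the yes and no cases. The fix is to first amplify the BQP circuit so that $a\ge 1-2^{-q}$ on yes-instances and $a\le 2^{-q}$ on no-instances, and then modify the circuit so that on no-instances the target system is driven to a state that is genuinely $G$-asymmetric — for instance, by adjoining an ancilla qubit initialized appropriately and applying a controlled operation that, conditioned on rejection, rotates the relevant register into the maximally commutator-violating configuration (the $\outerproj{0}$-vs-Heisenberg--Weyl-shift witness of value $\gamma$ already identified after \eqref{eq:gamma-bnd}); one must check that this post-processing is polynomial-time-generated and that the resulting $\rho_{S'}$ and representation $\{U_{S'}(g)\}_{g\in G}$ yield $\frac{1}{|G|}\sum_g\|[U_{S'}(g),\rho_{S'}]\|_2^2 \le \varepsilon$ in the yes-case and $\ge \gamma-\varepsilon$ in the no-case. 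Verifying these two bounds, and that $\varepsilon$ can be taken exponentially small while keeping everything polynomial-size, is the bulk of the work; the faithfulness remark after the problem statement and the saturation example for $\gamma$ are exactly the two ingredients needed to pin down the endpoints.
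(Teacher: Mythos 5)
Your containment argument matches the paper's: the identity $\|[U(g),\rho]\|_2^2 = 2(\Tr[\rho^2]-\Tr[\rho U(g)\rho U(g)^\dag])$, followed by destructive SWAP tests with a uniformly sampled group element to estimate the two overlaps. The genuine gap is in the hardness direction. You correctly diagnose that putting the group action on the reduced state $\rho_D$ of the decision qubit fails---$\|[Z_D,\rho_D]\|_2^2$ sees only the coherences of $\rho_D$, which are $O(\varepsilon)$ in \emph{both} the accept and reject cases---but your proposed repair (amplify, adjoin an ancilla, and conditionally rotate into a maximally asymmetric configuration) is left as a sketch and is not what is needed. The missing idea is simpler: keep $\rho$ \emph{pure} and embed the circuit in the group element rather than in the state. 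The paper sets $\rho_{SAC}=\outerproj{x}_S\otimes\outerproj{0}_{AC}$ (the input state itself, with one fresh ancilla qubit $C$) and $V_{SAC}=(Q_{SA\to DG})^\dag\,\mathrm{CNOT}_{DC}\,Q_{SA\to DG}$, with $G=C_2=\{I,V\}$. Because $\rho$ is pure, $\Tr[\rho^2]=1$ and $\Tr[\rho V\rho V^\dag]=\vert\bra{x}_S\bra{0}_{AC}V\ket{x}_S\ket{0}_{AC}\vert^2$; the CNOT copies the decision qubit into the fresh $\ket{0}_C$, so this amplitude equals $p_{\operatorname{rej}}$ exactly, and the symmetry quantity becomes $q=1-p_{\operatorname{rej}}^2$. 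A BQP yes-instance ($p_{\operatorname{rej}}\le\varepsilon$) then gives $q\ge 1-\varepsilon^2$, and a no-instance gives $q\le 2\varepsilon$; since $|G|=2$ here, $\gamma=1$, so this realizes the claimed $(\gamma-\varepsilon',\varepsilon')$ gap, and mapping acceptance to the \emph{asymmetric} side is harmless because BQP is closed under complement (your insistence on the opposite orientation is part of what led you astray). This construction needs no amplification of $Q$, no Heisenberg--Weyl witness, and no analysis of the coherences of a mixed state---the three things your sketch would still have to supply---and the only remaining bookkeeping is the observation (Appendix~\ref{app:error-mixed-state-HS-symmetry}) that an additive-$\varepsilon$ estimate of $q$ yields an additive-$\sqrt{\varepsilon}$ estimate of $p_{\operatorname{rej}}$.
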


\begin{proof}
To show that the problem is BQP-Complete, we first show that it is in the BQP class and then show that it is BQP-Hard. For the first part, let us briefly recall the development from \cite[Section~3.1]{bandyopadhyay2023efficient}. Consider the following equalities:
\begin{align}
    &\left\Vert [U(g), \rho] \right\Vert^2_2 \nonumber \\
    &= \left\Vert \rho U(g) - U(g)\rho \right\Vert^2_2 \nonumber \\
    &= \left\Vert \rho - U(g) \rho U(g)^\dagger \right\Vert^2_2 \nonumber \\
    &= \Tr[\rho^2] + \Tr[(U(g) \rho U(g)^\dagger)^2] - 2\Tr[\rho U(g) \rho U(g)^\dagger]\nonumber \\
    &= 2\left( \Tr[\rho^2] - \Tr[\rho U(g) \rho U(g)^\dagger] \right),
    \label{eq:HS-proof-step-1}
\end{align}
where the second equality is due to the unitary invariance of the Hilbert--Schmidt norm. Thus, we see that
\begin{align}
    &\frac{1}{\vert G \vert} \sum\limits_{g \in G} \left\Vert [U(g), \rho] \right\Vert^2_2 \nonumber \\
    &= \frac{1}{\vert G \vert} \sum\limits_{g \in G} 2 \left( \Tr[\rho^2] - \Tr[\rho U(g) \rho U(g)^\dagger] \right) \nonumber \\
    &= 2 \left( \Tr[\rho^2] - \Tr[\rho \mathcal{T}_G(\rho)] \right) \nonumber \\
    &= 2 \left( \Tr[\operatorname{SWAP} (\rho \otimes \rho)] - \Tr[\operatorname{SWAP} (\rho \otimes \mathcal{T}_G(\rho))] \right),
\end{align}
where $\mathcal{T}_G$ is the twirl channel given by 
\begin{equation}
    \mathcal{T}_G( \cdot ) \coloneqq  \frac{1}{\vert G \vert } \sum\limits_{g \in G} U(g) (\cdot) U(g)^\dagger
\end{equation}
and SWAP is the unitary swap operator.
The two terms can be individually estimated by means of the destructive SWAP test \cite{GC13}. To realize the twirl, one can pick an element $g$ uniformly at random and apply $U(g)$ to the state $\rho$. Since the twirl and the SWAP test can be efficiently performed, it follows that the problem is in the BQP class.

Next, we show that the problem is BQP-Hard by providing an efficient mapping from a general BQP problem to our problem of interest. Consider a general BQP algorithm as described in Section~\ref{subsec:BQP}. The output state of the BQP algorithm is given by
\begin{equation}
    Q_{SA \to DG} \ket{x}_S \ket{0}_A.
\end{equation}
Then, the acceptance and rejection probabilities of the BQP algorithm are given by
\begin{align}
    p_{\operatorname{acc}} &= \left\Vert (\bra{1}_D \otimes I_G) Q_{SA \to DG} \ket{x}_S \ket{0}_A \right\Vert^2_2, \\
    p_{\operatorname{rej}} &= 1-p_{\operatorname{acc}} \nonumber \\
    &= \left\Vert (\bra{0}_D \otimes I_G) Q_{SA \to DG} \ket{x}_S \ket{0}_A \right\Vert^2_2.
\end{align}

Now, we need to map this problem to an instance of State-HS-Symmetry; i.e., we need to define a state $\rho$ and a unitary representation $\{U(g)\}_{g \in G}$. To this end, let us define the group $G$ to be the cyclic group on two elements, $C_2 = \{I, V\}$, such that $V^2 = I$, where $V$ is given by
\begin{equation}
    V_{SAC} = (Q_{SA \to DG})^\dagger \operatorname{CNOT}_{DC} Q_{SA \to DG},
\end{equation}
and the input state to be 
\begin{equation}
    \rho_{SAC} = \outerprod{x}{x}_S \otimes \outerprod{0}{0}_{AC}.
\end{equation}
From \eqref{eq:HS-proof-step-1}, we see that
\begin{align}
    &\frac{1}{\vert G \vert} \sum\limits_{g \in G} \left\Vert [U(g), \rho] \right\Vert^2_2 \nonumber \\
    &= \frac{1}{|G|}\sum\limits_{g \in G} 2(\Tr[\rho^2] - \Tr[\rho U(g) \rho U(g)^\dagger] )\nonumber \\
    &= 1 - \Tr[\rho V \rho V^\dagger] \nonumber \\
    &= 1- \left\vert (\bra{x}_S \otimes \bra{0}_{AC}) V (\ket{x}_S \otimes \ket{0}_{AC}) \right\vert^2.
\end{align}
To show the equivalence, we now expand $V$ as follows:
\begin{multline}
    V (\ket{x}_S \otimes \ket{0}_{AC}) = \\
     (Q_{SA \to DG})^\dagger \operatorname{CNOT}_{DC} Q_{SA \to DG} (\ket{x}_S \otimes \ket{0}_{AC}).
\end{multline}
Next, we insert an identity operator $I_D$ to simplify:
\begin{multline}
    (Q_{SA \to DG})^\dagger \operatorname{CNOT}_{DC} Q_{SA \to DG} (\ket{x}_S \otimes \ket{0}_{AC})  \\ 
    = (Q_{SA \to DG})^\dagger \operatorname{CNOT}_{DC} (\outerproj{0}_D \otimes I_{GC}  \\ 
     + \outerproj{1}_D \otimes I_{GC}) Q_{SA \to DG} (\ket{x}_S \otimes \ket{0}_{AC}).
\end{multline}
Expanding, this reduces to
\begin{align}
   &(Q_{SA \to DG})^\dagger (\outerproj{0}_D \otimes I_{GC} \nonumber \\ 
   &\qquad + \outerproj{1}_D \otimes I_G \otimes X_C) Q_{SA \to DG} (\ket{x}_S \otimes \ket{0}_{AC}) \nonumber \\ 
   &= (Q_{SA \to DG})^\dagger (\outerproj{0}_D) Q_{SA \to DG} (\ket{x}_S \otimes \ket{0}_{AC}) \nonumber \\ 
   &\qquad + (Q_{SA \to DG})^\dagger (\outerproj{1}_D) Q_{SA \to DG} (\ket{x}_S \otimes \ket{0}_A \ket{1}_C).
\end{align}
Thus, by expanding $p_{\operatorname{rej}}$ as
\begin{align}
    p_{\operatorname{rej}} & = \left \| (\bra{0}_D \otimes I_G) Q_{SA \to DG}(\ket{x}_S \otimes \ket{0}_A)\right\|_2^2 \notag \\
    & = (\bra{x}_S \otimes \bra{0}_A) (Q_{SA \to DG})^\dagger (\outerproj{0}_D) \times \notag \\
    & \qquad \qquad Q_{SA \to DG} (\ket{x}_S \otimes \ket{0}_{A}),
\end{align}
we find that
\begin{align}
    &(\bra{x}_S \otimes \bra{0}_{AC}) V (\ket{x}_S \otimes \ket{0}_{AC}) \nonumber \\
    &= (\bra{x}_S \otimes \bra{0}_A) (Q_{SA \to DG})^\dagger (\outerproj{0}_D) \times \notag \\
    & \qquad \qquad Q_{SA \to DG} (\ket{x}_S \otimes \ket{0}_{A}) \notag \\
    &= p_{\operatorname{rej}}.
\end{align}
We then finally see that
\begin{equation}
    \label{eq:mixed-HS-sym-accep-prob}
    q \coloneqq \frac{1}{\vert G \vert} \sum\limits_{g \in G} \left\Vert [U(g), \rho] \right\Vert^2_2 = 1 - p_{\operatorname{rej}}^2.
\end{equation}
Thus, given a method to estimate $q$ within additive error $\varepsilon$, we can estimate $p_{\operatorname{rej}}$ within an additive error of $\sqrt{\varepsilon}$. A proof of this can be found in Appendix~\ref{app:error-mixed-state-HS-symmetry}. We can then estimate $p_{\operatorname{acc}} = 1 - \sqrt{1-q}$ within an additive error of $\sqrt{\varepsilon}$ as well.  As such, a general BQP problem can be efficiently mapped to our problem of interest, showing that State-HS-Symmetry is BQP-Hard. This along with that the fact that the problem lies in BQP, completes the proof of BQP-Completeness.
\end{proof}

\subsection{Testing \texorpdfstring{$G$}{G}-Bose Symmetry of the Output of a Channel is QMA-Complete}
\label{sec:GBS-QMA}

In this section, we show that testing the $G$-Bose symmetry of the output of a channel with optimized input is QMA-Complete. 

\begin{problem}
[$\left(\alpha,\beta\right)$-Channel-$G$-Bose-Symmetry]Let $\alpha$ and $\beta$ be such that $0\leq \beta < \alpha \leq1$. Given is a circuit description of a unitary $U_{BD' \to SD}^{\mathcal{N}}$ that realizes a unitary dilation of a channel
\begin{multline}
\label{eq:unitary_dilation_channel}
\mathcal{N}_{B \to S} (\cdot) \coloneqq \\
\operatorname{Tr}_D[U^\mathcal{N}_{BD' \to SD}((\cdot)_B \otimes \outerproj{0}_{D'}) (U^\mathcal{N}_{BD' \to SD})^\dagger]  
\end{multline}
and circuit descriptions of a unitary representation $\{U_S(g)\}_{g \in G}$ of a group~$G$. Decide which of
the following holds:
\begin{align}
\textit{Yes: } & \quad \max\limits_{\rho_B}\  \Tr\!\left[ \Pi^G_S \mathcal{N}_{B \to S}(\rho_{B}) \right] \geq \alpha,
\label{eq:fid-ch-pure-state}
\\
\textit{No: } & \quad \max\limits_{\rho_B}\  \Tr\!\left[ \Pi^G_S \mathcal{N}_{B \to S}(\rho_{B}) \right] \leq \beta,
\end{align}
where the optimization is over every input state $\rho_B$.
\end{problem}

Let us observe that the measure in \eqref{eq:fid-ch-pure-state} is a faithful symmetry measure, in the sense that it is equal to one if and only if there exists an input state $\rho_{B}$ such that the output state $\mathcal{N}_{B \to S}(\rho_{B})$ is Bose-symmetric. This follows from continuity of $\Tr\!\left[ \Pi^G_S \mathcal{N}_{B \to S}(\rho_{B}) \right]$ and from the arguments in \cite[Section~3.1]{LRW22}.

\begin{theorem}
\label{thm:QMA-Chl-GBS} The promise problem Channel-$G$-Bose-Symmetry is QMA-Complete.

\begin{enumerate}
\item $\left(  \alpha,\beta\right)  $-Channel-$G$-Bose-Symmetry is in QMA for all $\beta<\alpha$. (It is implicit that the gap between $\alpha$ and $\beta$ is larger than an inverse polynomial in the input length.)

\item $\left( 1-\varepsilon, \varepsilon\right)  $-Channel-$G$-Bose-Symmetry is QMA-Hard, even when $\varepsilon$ decays exponentially in the input length.
\end{enumerate}

\noindent Thus, $\left(  \alpha,\beta\right)  $-Channel-Bose-Symmetry is QMA-Complete for all $\left(  \alpha,\beta\right)  $ such that $0\leq \beta<\alpha\leq 1$.
\end{theorem}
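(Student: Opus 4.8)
The plan is to follow the template of the proof of Theorem~\ref{thm:BQP-GBS}, upgrading it from BQP to QMA by letting the prover supply the optimal channel input. Note first that $\max_{\rho_B}\Tr[\Pi^G_S\mathcal{N}_{B\to S}(\rho_B)]$ is the maximum of a linear functional over a compact convex set, hence attained (in fact at a pure input state); this is what makes a single-message QMA protocol the natural model.

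\textbf{Containment in QMA.} I would have the prover send a state $\rho_B$ (a candidate maximizer) on a register that the verifier will use as the channel input. The verifier runs the supplied dilation circuit $U^{\mathcal{N}}_{BD'\to SD}$ on $\rho_B\otimes\outerproj{0}_{D'}$, discards the environment, and runs the efficient $G$-Bose symmetry test of \cite{harrow2005applications} (see also \cite[Algorithm~1]{LRW22}) on the output system~$S$; that test accepts with probability exactly $\Tr[\Pi^G_S\mathcal{N}_{B\to S}(\rho_B)]$, using $(\Pi^G_S)^2=\Pi^G_S$. Completeness is immediate from the yes-condition. For soundness, the acceptance probability is linear in the state the prover supplies, so on a no-instance no strategy beats $\max_{\rho_B}\Tr[\Pi^G_S\mathcal{N}_{B\to S}(\rho_B)]\leq\beta$. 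With $\alpha-\beta$ an inverse polynomial, standard QMA amplification then places the problem in QMA.

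\textbf{QMA-hardness.} I would reduce from the canonical QMA verification circuit, amplified by parallel repetition to completeness/soundness $1-\varepsilon$ and $\varepsilon$ with $\varepsilon$ exponentially small. Given such a verifier $Q\equiv Q_{SAP\to DG}$ with input register $S$ initialized to $\ket{x}_S$, ancilla~$A$ to $\ket{0}_A$, witness register~$P$, decision qubit~$D$, and garbage~$G$, I build the channel that takes the witness as input and the decision qubit as output,
\begin{equation}
\mathcal{N}_{P\to D}(\sigma_P)\coloneqq\Tr_{G}\!\left[Q(\outerproj{x}_S\otimes\outerproj{0}_A\otimes\sigma_P)Q^{\dagger}\right],
\end{equation}
whose unitary dilation is obtained from $Q$ by absorbing the preparation of $\ket{x}_S\otimes\ket{0}_A$ into the dilation via bit-flip gates, so that it has the required form with a $\ket{0}$-initialized ancilla register. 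As in the BQP case I identify the channel output with~$D$, take $G=C_2=\{I,V\}$ with $V_D=-Z_D$, so $\Pi^G_D=\tfrac12(I_D-Z_D)=\outerproj{1}_D$, and hence $\Tr[\Pi^G_D\mathcal{N}_{P\to D}(\sigma_P)]=\Pr[Q\text{ accepts }(x,\sigma)]$ by the acceptance rule~\eqref{eq:QMA_completeness}. Maximizing over $\rho_B=\sigma_P$ shows that yes-instances map to instances of value $\geq1-\varepsilon$ and no-instances to instances of value $\leq\varepsilon$, so $(1-\varepsilon,\varepsilon)$-Channel-$G$-Bose-Symmetry is QMA-Hard. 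Together with containment, this yields QMA-completeness, with the parameter ranges following as in the remark after Theorem~\ref{thm:BQP-GBS}.

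\textbf{Main obstacle.} I expect no genuinely hard step; the argument is essentially a ``channel lift'' of Theorem~\ref{thm:BQP-GBS}, with the BQP verifier replaced by the QMA verifier and the prover's witness routed in as the channel input. The only points needing care are (a) recasting $Q$ as a legitimate channel dilation of the stated form, handled by the bit-flip trick, and (b) confirming that soundness survives a mixed (or padded) prover message, which it does because the acceptance probability is linear in that message, so the supremum over all prover strategies equals $\max_{\rho_B}\Tr[\Pi^G_S\mathcal{N}_{B\to S}(\rho_B)]$.
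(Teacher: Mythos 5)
Your proposal is correct and follows essentially the same route as the paper: containment via the prover supplying the channel input and the verifier running the $G$-Bose symmetry test of \cite[Algorithm~1]{LRW22}, and hardness by taking $G=C_2$ with $V_D=-Z_D$ so that $\Pi^G_D=\outerproj{1}_D$ and defining the channel $\mathcal{N}_{P\to D}$ from the QMA verifier circuit with the witness register as input. The extra details you flag (absorbing the $\ket{x}_S\otimes\ket{0}_A$ preparation into the dilation, and linearity of the acceptance probability in the prover's message) are left implicit in the paper but are handled correctly here.
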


\begin{proof}
To show that the problem is QMA-Complete, we need to demonstrate two facts: first, that the problem is in QMA, and second, that it is QMA-Hard. Let us begin by proving that the problem is in QMA. Let~$\rho_B$ be a state sent by the prover. This state is input to the channel $\mathcal{N}_{B \to S}$ defined in \eqref{eq:unitary_dilation_channel}. This leads to the output state $\mathcal{N}_{B \to S}(\rho_{B})$, on which the $G$-Bose symmetry test is conducted. From \cite[Algorithm 1]{LRW22}, we see that testing $G$-Bose symmetry can be done efficiently when circuit descriptions of the unitaries $\{U(g)\}_{g \in G}$ are provided. The acceptance probability of this test, for an input state~$\rho_B$, is equal to $\Tr\!\left[ \Pi^G_S \mathcal{N}_{B \to S}(\rho_{B}) \right]$. Since the prover can optimize this probability over all possible input states, the acceptance probability is then 
\begin{equation}
    \max\limits_{\rho_B}\  \Tr\!\left[ \Pi^G_S \mathcal{N}_{B \to S}(\rho_{B}) \right].
\end{equation} 
Thus, the entire estimation can be done efficiently when aided by an all-powerful prover, establishing that the problem lies in QMA.

To show that the problem is QMA-Hard, we pick an arbitrary QMA problem and map it to Channel-$G$-Bose-Symmetry. We use a similar construction proposed above in Section~\ref{sec:GBS-BQP}. For the mapping, we need to define a channel $\mathcal{N}$ and a unitary representation $\{U(g)\}_{g \in G}$; i.e., using the circuit descriptions for a general QMA algorithm, we need to define a channel~$\mathcal{N}_{P \to S'}$ and a unitary representation $\{U_{S'}(g)\}_{g \in G}$, and also show how the symmetry-testing condition 
\begin{equation}
    \max\limits_{\rho_{P}}\operatorname{Tr}[\Pi_{S'}^G \mathcal{N}_{P \to S'}(\rho_P)]
\end{equation} 
can be written in terms of the QMA algorithm's acceptance probability. To this end, we define the group $G$ to be the cyclic group on two elements $C_2 = \{I, V\}$ such that $V^2 = I$, where $V$ is simply given by
\begin{equation}
    V_D = -Z_D,
\end{equation}
and we define the channel $\mathcal{N}_{P \to D}$ to be 
\begin{multline}
    \mathcal{N}_{P \to D}(\cdot) \coloneqq \\
    \Tr_{G} [Q_{SAP \to DG} (\outerproj{x}_{S} \otimes\outerproj{0}_{A} \otimes (\cdot)_P) (Q_{SAP \to DG})^\dagger].
\end{multline} 
As such, we are making the identification $S' \leftrightarrow D$ between the system label $S'$ of a general symmetry-testing problem and the system $D$ for a QMA algorithm. The group representation and the channel are thus efficiently implementable.
Thus, the channel output state of interest is just the state of the decision qubit, and the group projector for the given unitary representation is given by
\begin{equation}
    \Pi^G_D = \frac{1}{2} (I_D - Z_D) = \outerprod{1}{1}_D.
\end{equation}
We then find, for a fixed input state $\rho_P$, the following equalities relating the symmetry-testing condition to the QMA algorithm's acceptance probability:
\begin{align}
    & \Tr\!\left[\Pi^G_D \mathcal{N}_{P \to D}(\rho_P)\right] \notag \\
   &  = \Tr\!\left[\outerproj{1}_D \mathcal{N}_{P \to D}(\rho_P)\right] \notag \\
   &  = \Tr [(\outerproj{1}_D \otimes I_G) (Q_{SAP \to DG} (\outerproj{x}_{A} \otimes\notag \\
   & \qquad\qquad  \outerproj{0}_{A} \otimes \rho_P) (Q_{SAP \to DG})^\dagger)]\notag  \\
   & = \Pr[Q \text{ accepts } (x,\rho_P)].
\end{align}
The prover then optimizes this probability over every input state $\rho_P$, and we observe that the acceptance probability of the QMA algorithm exactly matches the symmetry-testing condition of the constructed $G$-Bose symmetry testing algorithm. As such, we have proven that any QMA problem can be efficiently mapped to an instance of a Channel-$G$-Bose symmetry-testing problem, concluding the proof. 
\end{proof}

\subsection{Testing \texorpdfstring{$G$}{G}-Symmetry of a State using Trace Norm is QSZK-Complete}
\label{sec:GS-TD-QSZK}

In Section~\ref{sec:Sym-HS-BQP}, we showed that testing $G$-symmetry of a state using the Hilbert--Schmidt norm is BQP-Complete. In this section, we show that testing the $G$-symmetry of a state using the trace norm is QSZK-Complete. As such, the complexity of a $G$-symmetry test depends on the measure being used, much like what was observed in \cite[Section~V]{RASW23}.

\begin{problem}
[$\left( \alpha, \beta \right)  $-State-$G$-Sym-TD]
\label{prob:MinTD-QSZK}
Let $\alpha$ and $\beta$ be such that $0\leq\beta<\alpha\leq1$. Given are a circuit description of a unitary $U_{RS}^{\rho}$ that generates a purification of a state~$\rho_S$ and circuit descriptions of a unitary representation $\{U_S(g)\}_{g\in G}$ of a group~$G$. Decide which of the following holds:
\begin{align}
\textit{Yes: } & \quad \min\limits_{\sigma \in \operatorname{Sym}_G} \frac{1}{2} \left \Vert \rho - \sigma\right  \Vert_1 \geq \alpha,
\label{eq:TD-asym-meas}\\
\textit{No: }  &  \quad \min\limits_{\sigma \in \operatorname{Sym}_G} \frac{1}{2} \left\Vert \rho - \sigma \right\Vert_1 \leq \beta,
\end{align}
where the set $\operatorname{Sym}_G$ is defined as follows:
\begin{equation}
    \operatorname{Sym}_G \coloneqq \{\sigma \in \mathcal{D}(\mathcal{H}) : U(g)\sigma U(g)^\dagger = \sigma \quad \forall g \in G \}.
    \label{eq:def-G-Symm-set}
\end{equation}
\end{problem}

Let us observe that the asymmetry measure in \eqref{eq:TD-asym-meas} is faithful, in the sense that it is equal to zero if and only if the state $\rho$ is $G$-symmetric. This follows from the faithfulness of the trace norm and its continuity properties.

\begin{theorem}
\label{thm:QSZK-Min-TD} The promise problem State-$G$-Sym-TD is QSZK-Complete.
\begin{enumerate}
\item $\left( \alpha,\beta\right) $-State-$G$-Sym-TD is in QSZK for all $2\beta<\alpha$. (It is implicit that the gap between $\alpha$ and $2\beta$ is larger than an inverse polynomial in the input length.)

\item $\left( 1-\varepsilon, \varepsilon\right) $-State-$G$-Sym-TD is QSZK-Hard, even when $\varepsilon$ decays exponentially in the input length.
\end{enumerate}

\noindent Thus, $\left( \alpha,\beta\right) $-State-$G$-Sym-TD is QSZK-Complete for all $\left(  \alpha,\beta\right)  $ such that $0<2\beta<\alpha<1$.
\end{theorem}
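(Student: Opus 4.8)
The plan is to establish QSZK-completeness by proving the two directions separately, mirroring the structure used in Theorems~\ref{thm:BQP-GBS} and \ref{thm:QMA-Chl-GBS} but now leaning on the Quantum State Distinguishability problem (Definition~\ref{def:QSDProblem}) as the canonical QSZK-complete anchor. For containment in QSZK, the key observation is that $\min_{\sigma\in\operatorname{Sym}_G}\frac12\Vert\rho-\sigma\Vert_1$ is itself computed by the twirl: the closest $G$-symmetric state to $\rho$ in trace norm is $\mathcal{T}_G(\rho) = \frac{1}{|G|}\sum_{g\in G}U(g)\rho U(g)^\dagger$. This is a standard fact following from the contractivity of $\mathcal{T}_G$ under the trace norm together with the fact that $\mathcal{T}_G$ fixes every element of $\operatorname{Sym}_G$, so that $\frac12\Vert\rho-\mathcal{T}_G(\rho)\Vert_1 \le \frac12\Vert\rho-\sigma\Vert_1$ for any $\sigma\in\operatorname{Sym}_G$. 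Hence the problem reduces to estimating $\frac12\Vert\rho-\mathcal{T}_G(\rho)\Vert_1$, i.e.\ distinguishing the state prepared by $U^\rho_{RS}$ from the state obtained by additionally sampling $g$ uniformly and applying $U(g)$; both are polynomial-time preparable, so this is an instance of Quantum State Distinguishability and thus lies in QSZK. The slight loss in parameters ($2\beta<\alpha$ rather than $\beta<\alpha$) is exactly what one expects from the $\alpha$ vs.\ $\alpha^2$ / quadratic gap amplification inherent in the QSZK-completeness of state distinguishability, so I would track that constant carefully but not belabor it.

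For QSZK-hardness, I would reduce from Quantum State Distinguishability: given circuits $Q_0,Q_1$ preparing $\rho_0,\rho_1$ on a system $S$, build a state on an enlarged system that carries a control qubit $C$, and design a group representation so that the twirl over the group has the effect of symmetrizing/decohering the control qubit. Concretely, I expect the construction to take $\rho_{CS} = \outerproj{+}_C\otimes(\text{something}) $ or a controlled preparation $\sum_{i}\frac12\outerproj{i}_C\otimes\rho_i$ dressed up so that $G$-symmetry of the joint state forces $\rho_0=\rho_1$; the group $G$ should be chosen (again likely $C_2$, or a product with $C_2$) with a representation $U_{CS}(g)$ that, e.g., flips $C$ while swapping the roles of the two branches, so that the distance of $\rho_{CS}$ to the nearest $G$-symmetric state is a monotone function of $\frac12\Vert\rho_0-\rho_1\Vert_1$. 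The acceptance condition $\min_\sigma \frac12\Vert\rho-\sigma\Vert_1 \ge \alpha$ then corresponds to ``far apart'' and the $\le\beta$ branch to ``close,'' completing the reduction. Because state distinguishability is QSZK-complete even with exponentially small error on one side after amplification, this yields the claimed $(1-\varepsilon,\varepsilon)$-hardness.

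I anticipate the main obstacle to be the hardness construction: getting a group representation $\{U_{CS}(g)\}_{g\in G}$ whose twirl channel $\mathcal{T}_G$ maps $\rho_{CS}$ to a $G$-symmetric state whose trace distance from $\rho_{CS}$ equals (or is a clean, gap-preserving function of) $\frac12\Vert\rho_0-\rho_1\Vert_1$, rather than some less controllable quantity. The temptation is to simply twirl the control qubit with $\{I,X\}$, but one must check that the resulting symmetric state really is $\frac12(\rho_0\oplus\rho_1)$-like and that no $G$-symmetric state does better — this requires the same ``closest symmetric state is the twirl'' lemma used for containment, applied in reverse to lower-bound the distance. A secondary but routine point is bookkeeping the factor-of-two parameter degradation so that the final ranges $2\beta<\alpha$ (containment) and $0<2\beta<\alpha<1$ (completeness) come out as stated; I would handle this by invoking the known quadratic amplification for QSZK at the end rather than threading it through the reduction. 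The fidelity-based variant (Section~\ref{sec:GS-Fid-QSZK}) should then follow by the standard Fuchs--van de Graaf inequalities relating trace distance and fidelity, so I would set up the trace-norm construction to be reusable there.
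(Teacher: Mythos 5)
Your overall architecture matches the paper's: containment by comparing $\rho$ to its twirl $\overline{\rho}=\mathcal{T}_G(\rho)$ and running a distinguishing protocol, and hardness by reducing from Quantum State Distinguishability via a control qubit $F$, the state $\tau_{FB}=\frac{1}{2}\left(\outerproj{0}_F\otimes\omega^0+\outerproj{1}_F\otimes\omega^1\right)$, and the group $\{I_F\otimes I_B,\,X_F\otimes I_B\}$ (the paper first polarizes $\rho^0,\rho^1$ into $\omega^0,\omega^1$ using Watrous's Theorem~1 and fixes the final parameters with his Lemmas~2 and~3, which is the step you defer to ``known amplification'').

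There is, however, one concretely wrong step that you lean on in both directions: the claim that $\mathcal{T}_G(\rho)$ is the \emph{exact} trace-norm minimizer over $\operatorname{Sym}_G$. Your justification does not prove this: contractivity of $\mathcal{T}_G$ together with $\mathcal{T}_G(\sigma)=\sigma$ gives $\left\Vert \mathcal{T}_G(\rho)-\sigma\right\Vert_1\leq\left\Vert \rho-\sigma\right\Vert_1$, which bounds the distance from the \emph{twirl} to $\sigma$, not the distance from $\rho$ to its twirl. And the statement itself is false in general: taking $G$ cyclic so that $\operatorname{Sym}_G$ is the set of incoherent states (a special case the paper itself highlights), it is known that the dephased state is not always the closest incoherent state in trace distance (qutrit counterexamples exist in the coherence literature), even though it is for qubits. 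What is true, and what the paper actually uses, is the two-sided bound obtained from DPI plus the triangle inequality,
\begin{equation}
\minSymG{\rho}{\sigma}\;\leq\;\td{\rho}{\overline{\rho}}\;\leq\;2\,\minSymG{\rho}{\sigma},
\end{equation}
i.e.\ the twirl is only a factor-$2$ approximation of the minimizer. This is precisely the origin of the condition $2\beta<\alpha$ in the theorem --- not the $\alpha$ vs.\ $\alpha^2$ quadratic loss from QSZK state distinguishability, to which you misattribute it. Once you replace your ``closest symmetric state is the twirl'' lemma with this two-sided bound, both halves of your plan go through: for containment, a No instance gives $\td{\rho}{\overline{\rho}}\leq 2\beta$ and a Yes instance gives $\td{\rho}{\overline{\rho}}\geq\alpha$ (the lower bound is immediate since $\overline{\rho}\in\operatorname{Sym}_G$); for hardness, the ``reverse'' lower bound you worry about is again just $\minSymG{\tau_{FB}}{\sigma}\geq\frac{1}{2}\td{\tau_{FB}}{\overline{\tau}_{FB}}=\frac{1}{4}\cdot\frac{1}{2}\left\Vert\omega^0-\omega^1\right\Vert_1$. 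One further caveat on containment: routing through the QSD completeness theorem requires $2\beta<\alpha^2$, which is weaker than the stated $2\beta<\alpha$; the paper instead gives the distinguishing protocol directly and argues the zero-knowledge property for it, which is how it attains the better parameter range.
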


\begin{proof}
To prove that the problem is QSZK-Complete, we need to show two facts. First, we need to show that the problem is in the QSZK class. Next, we show that the problem is QSZK-Hard; i.e., every problem in QSZK can be efficiently mapped to this problem.

First, we show that it is in QSZK. We begin with a simple calculation. Consider the case of a No instance, for which the following inequality holds
\begin{equation}
    \minSymG{\rho}{\sigma} \leq \beta.
\end{equation}
Let $\sigma^{\ast}\in$ Sym$_{G}$ be a state achieving the minimum, so that
\begin{equation}
    \label{eq:Minsig*}
    \td{\rho}{\sigma^\ast} \leq \beta.
\end{equation}
Define $\overline{\rho}$  as the result of twirling $\rho$ with respect to the group elements of $G$. More concretely,
\begin{equation}
    \overline{\rho} \coloneqq \mathcal{T}_G(\rho) = \frac{1}{\vert G \vert} \sum\limits_{g \in G} U(g) \rho U^\dagger(g).
\end{equation}
From the data-processing inequality for the trace distance \cite[Chapter 9]{wilde_2017}, and the fact that $\mathcal{T}_G(\sigma^\ast) = \sigma^\ast$, we conclude that
\begin{align}
    \td{\overline{\rho}}{\sigma^\ast} &= \td{\mathcal{T}_G(\rho)}{\mathcal{T}_G(\sigma^\ast)} \nonumber \\
    &\leq \td{\rho}{\sigma^\ast} \nonumber \\
    &\leq \beta.
    \label{eq:DPIMinSymG}
\end{align}
Now, using the triangle inequality for the trace distance, \eqref{eq:Minsig*}, and \eqref{eq:DPIMinSymG}, we find that
\begin{align}
    \frac{1}{2}\left\Vert \rho-\overline{\rho}\right\Vert _{1} & \leq\frac{1}%
    {2}\left\Vert \rho-\sigma^{\ast}\right\Vert _{1}+\frac{1}{2}\left\Vert
    \sigma^{\ast}-\overline{\rho}\right\Vert _{1} \label{eq:triangle-TD-pf-qszk-1}\\
    & \leq\beta + \beta = 2\beta.
    \label{eq:triangle-TD-pf-qszk-2}
\end{align}

Having established the above, we now construct a QSZK algorithm consisting of the following steps:
\begin{enumerate}
\item The verifier randomly prepares the state
$\rho$ or $\overline{\rho}$. The verifier can prepare the latter state by
preparing $\rho$ and performing the group twirl~$\mathcal{T}_{G}$.

\item The
verifier sends the state to the prover, who performs an optimal measurement to
distinguish $\rho$ from $\overline{\rho}$. 

\item The verifier accepts if the prover
can guess the state that was prepared, and the maximum acceptance probability of the prover is given
by \cite{helstrom_quantum_1969,holevo_analogue_1972}
\begin{equation}
\frac{1}{2}\left( 1+ \td{\rho}{\overline{\rho}}\right).
\end{equation}
\end{enumerate}

In the case of a No instance, by applying \eqref{eq:triangle-TD-pf-qszk-1}--\eqref{eq:triangle-TD-pf-qszk-2}, this probability is bounded from above as
\begin{equation}
\frac{1}{2}\left(  1+\frac{1}{2}\left\Vert \rho-\overline{\rho}\right\Vert
_{1}\right)  \leq\frac{1}{2}\left(  1+2\beta\right)  =\frac{1}%
{2}+\beta.
\end{equation}
In the case of a Yes instance, we find that%
\begin{align}
\begin{aligned}
\frac{1}{2}\left\Vert \rho-\overline{\rho}\right\Vert _{1}  & \geq\frac{1}%
{2}\left\Vert \rho-\sigma^{\ast}\right\Vert _{1}\\
& \geq \alpha.
\end{aligned}
\end{align}
This then implies, in this case, that the acceptance probability satisfies
\begin{align}
\frac{1}{2}\left(  1+\frac{1}{2}\left\Vert \rho-\overline{\rho}\right\Vert
_{1}\right)    & \geq\frac{1}{2}\left(  1+\alpha\right)  \\
& =\frac{1}{2}+\frac{1}{2}\alpha.
\end{align}
Thus, there is a gap as long as%
\begin{equation}
\frac{1}{2}+\frac{1}{2}\alpha > \frac{1}{2}+\beta,
\end{equation}
which is the same as $\alpha > 2\beta$.

The interactive proof system is quantum statistical zero-knowledge because, in the case of a Yes instance, the verifier can efficiently simulate the whole interaction on their own, and the statistical difference between the simulation and the actual protocol is negligible. Thus, the problem is in the QSZK class.

Next, we show that an arbitrary problem in the QSZK class can be efficiently mapped to this problem. We do so by mapping a known QSZK-Complete problem to this problem. We pick the $(\alpha_s, \beta_s)$-State Distinguishability Problem (see Definition~\ref{def:QSDProblem}).

\begin{figure}
    \centering
    \includegraphics[width=0.75\columnwidth]{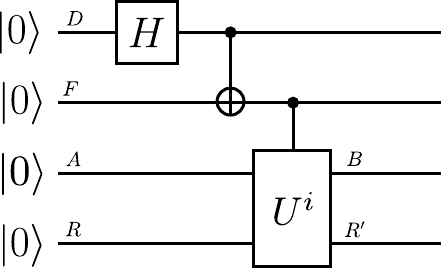}
    \caption{Circuit to create the classical--quantum state $\tau_{FB}$. The unitaries $U^0$ and $U^1$ generate purification of states $\omega^0$ and $\omega^1$, respectively. More concretely, $\left.\omega^i_B = \Tr_{R'} [U^i \outerproj{00}_{AR} (U^i)^\dagger]\right.$ for $i\in\{0,1\}$.}
    \label{fig:GenerateRhoBar}
\end{figure}

Given circuits to generate the states $\rho^0_B$ and $\rho^1_B$ with the following soundness and completeness parameters
\begin{align}
    \textit{Yes: } \quad &\td{\rho^0_B}{\rho^1_B} \geq \alpha_s, \\
    \textit{No: } \quad &\td{\rho^0_B}{\rho^1_B} \leq \beta_s,
\end{align}
we use the construction from \cite[Theorem~1]{watrous2002qszk} to create circuits that generate states $\omega^0$, $\omega^1$ such that
\begin{align}
    \textit{Yes: } \quad &\td{\omega^0}{\omega^1} \geq 1-2^{-n} \\
    \textit{No: } \quad &\td{\omega^0}{\omega^1} \leq 2^{-n}.
    \label{eq:NO-instance-wat-map}
\end{align}

The value of $n$ will be chosen later in the proof. The procedure from \cite[Theorem~1]{watrous2002qszk} runs in time polynomial in $n$ and the size of the circuits that generate $\rho^i$, and thus it is efficient.

Next, we define the following state
\begin{equation}
    \tau_{FB} \coloneqq \frac{1}{2} \left(\outerproj{0}_F \otimes \omega^0_B + \outerproj{1}_F \otimes \omega^1_B \right),
\end{equation}
and define the group $G$ to be $\{I_F \otimes I_B, X_F \otimes I_B\}$. A circuit to create the state~$\tau_{FB}$ is given in Figure~\ref{fig:GenerateRhoBar}. Twirling this state with respect to the group elements results in the state 
\begin{equation}
    \overline{\tau}_{FB} \coloneqq \mathcal{T}_G(\tau_{FB}) = \pi_F \otimes \frac{1}{2} (\omega^0_B + \omega^1_B),
\end{equation}
where $\pi_F\coloneqq \frac{1}{2}(\outerproj{0}_F + \outerproj{1}_F).$
Then we find that
\begin{align}
& \tau_{FB}- \overline{\tau}_{FB} \notag \\
& = \frac{1}{2} \outerproj{0} \otimes \omega^0 + \frac{1}{2} \outerproj{1} \otimes \omega^1 - \pi \otimes \frac{1}{2} \left( \omega^0 + \omega^1 \right) \notag\\
& = \frac{1}{2} \outerproj{0} \otimes \omega^0 + \frac{1}{2} \outerproj{1} \otimes\omega^1 \notag \\
& \qquad - \left( \frac{1}{2} \outerproj{0} + \frac{1}{2} \outerproj{1} \right)  \otimes\frac{1}{2} \left(\omega^0 + \omega^1 \right) \notag\\
& = \frac{1}{2} \outerproj{0} \otimes\left(  \omega^0 - \frac{1}{2}\left( \omega^0 + \omega^1 \right) \right) \notag \\
& \qquad + \frac{1}{2} \outerproj{1} \otimes \left( \omega^1 - \frac{1}{2} \left( \omega^0 + \omega^1 \right)  \right) \notag\\
& = \frac{1}{2} \outerproj{0} \otimes\frac{1}{2} \left( \omega^0 - \omega^1 \right)  + \frac{1}{2} \outerproj{1} \otimes \frac{1}{2} \left(  \omega^1 - \omega^0 \right),
\end{align}
which implies that
\begin{equation}
\label{eq:QSD-MinTD-mapping}
    \td{\tau_{FB}}{\overline{\tau}_{FB}} 
     =\frac{1}{4}\left\Vert \omega^0 - \omega^1 \right\Vert_{1}.
\end{equation}

We now map Yes instances of Quantum-State-Distinguishability to Yes instances of State-$G$-Sym-TD. 
For a Yes instance,
\begin{equation}
    \td{\rho^0}{\rho^1} \geq \alpha_s \implies \td{\omega^0}{\omega^1} \geq 1-2^{-n}.
\end{equation}
Define $\sigma^\ast$ to be a state that achieves the following minimum:
\begin{equation}
    \td{\tau_{FB}}{\sigma^\ast} \coloneqq \minSymG{\tau_{FB}}{\sigma}.
\end{equation}
Using the triangle inequality and the data-processing inequality (the latter being similar to how it was used before in \eqref{eq:DPIMinSymG}), we find that
\begin{align}
    \td{\tau_{FB}}{\overline{\tau}_{FB}} &\leq \td{\tau_{FB}}{\sigma^\ast} + \td{\sigma^\ast}{\overline{\tau}_{FB}} \nonumber \\
    &\leq 2\left( \td{\tau_{FB}}{\sigma^\ast} \right) \nonumber \\
    &= 2\left( \minSymG{\tau_{FB}}{\sigma} \right).
\end{align}
Thus, using \eqref{eq:QSD-MinTD-mapping}, we see that
\begin{align}
    \minSymG{\tau_{FB}}{\sigma} &\geq \frac{1}{2} \left(\td{\tau_{FB}}{\overline{\tau}_{FB}} \right) \nonumber \\
    &= \frac{1}{4} \left( \td{\omega^0}{\omega^1} \right) \nonumber \\
    &\geq \frac{1-2^{-n}}{4}.
\end{align}
As such, the Yes instances are mapped as follows:
\begin{equation}
    \td{\rho^0}{\rho^1} \geq \alpha_s \Rightarrow \minSymG{\tau_{FB}}{\sigma} \geq \frac{1-2^{-n}}{4}.
\end{equation}

Similarly, consider a NO instance of Quantum-State-Distinguishability,
\begin{equation}
    \td{\rho^0}{\rho^1} \leq \beta_s.
\end{equation}
Then using \eqref{eq:NO-instance-wat-map} and \eqref{eq:QSD-MinTD-mapping},
\begin{align}
    \minSymG{\tau_{FB}}{\sigma} &\leq \td{\tau_{FB}}{\overline{\tau}_{FB}} \nonumber \\
    &= \frac{1}{2} \left(\td{\omega^0}{\omega^1} \right) \nonumber \\
    &\leq 2^{-n-1} .
\end{align}
As such, we have shown that $\left(\alpha_s, \beta_s \right)  $-Quantum-State-Distinguishability is efficiently mapped to $\left( \frac{1}{4}(1-2^{-n}), 2^{-n-1} \right)  $-State-$G$-Sym-TD. Thus, a gap exists between the soundness and completeness conditions if
\begin{align}
    \frac{(1-2^{-n})}{4} &> 2^{-n-1} ,
\end{align}
which is equivalent to $n > \operatorname{log}_2(3)$.

To conclude that State-$G$-Sym-TD is QSZK-Complete for arbitrary constants $\alpha$ and $\beta$, one can use the constructions from Lemmas~2 and 3 of \cite{watrous2002qszk} to manipulate the parameters $\frac{1}{4}(1-2^{-n})$ and $ 2^{-n-1}$ as desired. The reasoning for this latter statement is similar to that given at the end of the proof of \cite[Theorem~6]{watrous2002qszk}.
\end{proof}

\subsection{Testing \texorpdfstring{$G$}{G}-Symmetry of a State using Fidelity is QSZK-Complete}
\label{sec:GS-Fid-QSZK}

In this section, we show that testing $G$-Symmetry of a state using fidelity is QSZK-Complete, where the fidelity of states $\rho$ and $\sigma$ is defined as  \cite{Uhl76}
\begin{equation}
F(\rho,\sigma)\coloneqq \left \Vert \sqrt{\rho} \sqrt{\sigma} \right \Vert_1^2 .    
\end{equation}
To show hardness, we provide an efficient mapping from the problem State-$G$-Sym-TD, defined in Section~\ref{sec:GS-TD-QSZK}, to the problem of interest State-$G$-Sym-Fid. 

\begin{problem}
[$\left( \alpha,\beta\right)  $-State-$G$-Sym-Fid]
\label{prob:GSym-QSZK}
Let $\alpha$ and $\beta$ be such that $0\leq\beta<\alpha\leq1$. Given are a circuit description of a unitary $U_{RS}^{\rho}$ that generates a purification of a state~$\rho_S$ and circuit descriptions of a unitary  representation $\{U(g)\}_{g \in G}$ of a group $G$. Decide which of the following holds:
\begin{align}
\textit{Yes: } & \quad \max\limits_{\sigma \in \operatorname{Sym}_G} F(\rho, \sigma) \geq \alpha,
\label{eq:fid-sym-meas-yes} \\
\textit{No: } & \quad \max\limits_{\sigma \in \operatorname{Sym}_G} F(\rho, \sigma) \leq \beta,
\end{align}
where the set $\operatorname{Sym}_G$ is defined in \eqref{eq:def-G-Symm-set}.
\end{problem} 

Let us observe that the symmetry measure in \eqref{eq:fid-sym-meas-yes} is faithful, in the sense that it is equal to one if and only if the state $\rho$ is $G$-symmetric. This follows from the faithfulness and continuity properties of fidelity.

\begin{theorem}
\label{thm:QSZK-GSym} The promise problem State-$G$-Sym-Fid is QSZK-Complete.
\begin{enumerate}
\item $\left( \alpha,\beta\right) $-State-$G$-Sym-Fid is in QSZK for all $\beta<4\alpha-3$. (It is implicit that the gap between $4\alpha - 3$ and $\beta$ is larger than an inverse polynomial in the input length.)

\item $\left( 1-\varepsilon, \varepsilon\right) $-State-$G$-Sym-Fid is QSZK-Hard, even when $\varepsilon$ decays exponentially in the input length.
\end{enumerate}
Thus, $\left( \alpha,\beta\right) $-State-$G$-Sym-Fid is QSZK-Complete for all $\left(  \alpha,\beta\right)  $ such that $0\leq \beta<4\alpha-3\leq 1$.
\end{theorem}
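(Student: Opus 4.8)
The plan is to establish the two halves of the theorem separately --- membership in QSZK and QSZK-hardness --- and then combine them with the standard QSZK parameter-manipulation lemmas (Lemmas~2 and~3 of \cite{watrous2002qszk}) to obtain completeness over the full range $0\le\beta<4\alpha-3\le 1$.

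For membership, the key idea is that the optimization over all of $\operatorname{Sym}_G$ can be replaced --- at the cost of a controlled amount of slack --- by the single twirled state $\overline{\rho}\coloneqq\mathcal{T}_G(\rho)$, which lies in $\operatorname{Sym}_G$ and whose purification the verifier can prepare efficiently (implement the twirl by applying the group unitaries controlled on a uniformly-prepared register, which then becomes part of the purifying system). In the No case, $\overline{\rho}\in\operatorname{Sym}_G$ gives $F(\rho,\overline{\rho})\le\beta$ immediately. In the Yes case, if $\sigma^\ast\in\operatorname{Sym}_G$ attains $F(\rho,\sigma^\ast)\ge\alpha$, then monotonicity of fidelity under the channel $\mathcal{T}_G$ together with $\mathcal{T}_G(\sigma^\ast)=\sigma^\ast$ yields $F(\overline{\rho},\sigma^\ast)\ge\alpha$, and the triangle inequality for the sine distance $\sqrt{1-F(\cdot,\cdot)}$ then gives $\sqrt{1-F(\rho,\overline{\rho})}\le 2\sqrt{1-\alpha}$, i.e., $F(\rho,\overline{\rho})\ge 4\alpha-3$. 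Hence the problem Karp-reduces to deciding whether the two efficiently preparable states $\rho$ and $\overline{\rho}$ have fidelity at least $4\alpha-3$ or at most $\beta$; the hypothesis $\beta<4\alpha-3$ is exactly what makes this a nontrivial (inverse-polynomial) gap. This fidelity-closeness problem lies in QSZK: it is a fidelity formulation of a quantum state distinguishability problem, decidable by an Uhlmann-theorem-based interactive protocol (in which the prover applies the alignment unitary between the two purifications) that is statistical zero-knowledge after the gap is amplified via \cite{watrous2002qszk}; alternatively one may invoke QSZK $=$ co-QSZK together with the QSZK-completeness of Quantum State Distinguishability.

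For hardness, I would reduce from $(1-\varepsilon,\varepsilon)$-State-$G$-Sym-TD, which is QSZK-hard by Theorem~\ref{thm:QSZK-Min-TD}. The reduction is the identity on instances, since both problems take as input a purification circuit for $\rho_S$ and circuit descriptions of a representation $\{U_S(g)\}_{g\in G}$; only the analysis differs. By the Fuchs--van de Graaf inequalities applied over $\operatorname{Sym}_G$: a Yes instance of State-$G$-Sym-TD has $\tfrac{1}{2}\|\rho-\sigma\|_1\ge 1-\varepsilon$ for \emph{every} $\sigma\in\operatorname{Sym}_G$, whence $F(\rho,\sigma)\le 1-(1-\varepsilon)^2<2\varepsilon$ and thus $\max_{\sigma\in\operatorname{Sym}_G}F(\rho,\sigma)<2\varepsilon$; a No instance has some $\sigma^\ast\in\operatorname{Sym}_G$ with $\tfrac{1}{2}\|\rho-\sigma^\ast\|_1\le\varepsilon$, whence $F(\rho,\sigma^\ast)\ge(1-\varepsilon)^2\ge 1-2\varepsilon$ and thus $\max_\sigma F(\rho,\sigma)\ge(1-\varepsilon)^2$. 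So State-$G$-Sym-TD$(1-\varepsilon,\varepsilon)$ sends Yes instances to No instances and No instances to Yes instances of State-$G$-Sym-Fid with parameters $\big((1-\varepsilon)^2,\,2\varepsilon\big)$ --- that is, it reduces to the \emph{complement} of that problem. Since QSZK $=$ co-QSZK, State-$G$-Sym-Fid$\big((1-\varepsilon)^2,2\varepsilon\big)$ is QSZK-hard, and after relabeling $\varepsilon$ (the factor of two and the square only shrink the completeness threshold) this gives QSZK-hardness of $(1-\varepsilon,\varepsilon)$-State-$G$-Sym-Fid with $\varepsilon$ still exponentially small.

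The main obstacle I anticipate is the containment direction, specifically two points: (i) verifying that the composition of fidelity monotonicity with the sine-distance triangle inequality really produces the clean threshold $4\alpha-3$ with no further loss; and (ii) confirming that the reduced fidelity-closeness instance genuinely lands in a parameter regime for which QSZK-membership is available. Here it matters that one argues \emph{natively in fidelity} rather than first converting to trace distance: the Fuchs--van de Graaf conversion is quadratically lossy, after which $\beta<4\alpha-3$ no longer obviously certifies a gap, whereas an Uhlmann-based protocol (or a fidelity formulation of quantum state distinguishability) uses the gap $4\alpha-3-\beta>0$ directly. Everything else --- the reductions, the zero-knowledge simulation argument, and the final parameter amplification --- is routine once these points are settled.
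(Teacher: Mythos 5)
Your proposal is correct and follows essentially the same route as the paper: for containment, it replaces the optimization over $\operatorname{Sym}_G$ by the twirled state $\overline{\rho}$, uses monotonicity of fidelity under $\mathcal{T}_G$ and the sine-distance triangle inequality to get the $4\alpha-3$ threshold, and decides the resulting fidelity gap with an Uhlmann-based prover protocol; for hardness, it applies the Fuchs--van de Graaf inequalities to reduce State-$G$-Sym-TD to the complement of State-$G$-Sym-Fid and invokes closure of QSZK under complement. The anticipated ``obstacles'' you flag are exactly the points the paper handles in the same way, so no substantive divergence remains.
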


\begin{figure*}
    \centering    
    \includegraphics[width=0.9\textwidth]{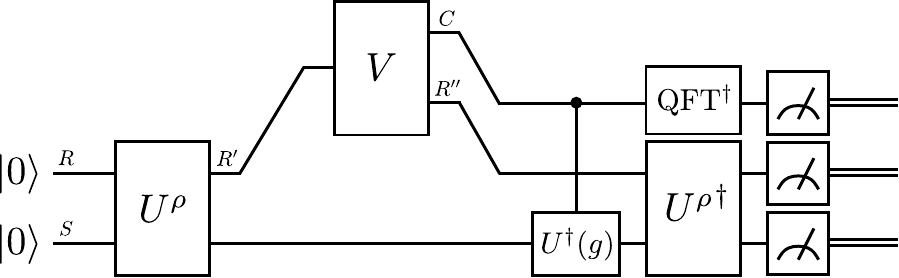}
    \caption{
    QSZK algorithm to estimate the fidelity $F(\rho, \overline{\rho})$ given a unitary $U^\rho$ that prepares a purification of $\rho$ and a unitary representation $\{U(g)\}_{g \in G}$ over which the twirled state $\overline{\rho}$ is defined. The probability of measuring the all-zeros state gives an estimate of the required fidelity. The isometry $V$ is implemented by an all-powerful prover. QFT is an abbreviation of the standard quantum Fourier transform, which in this case takes $\ket{0}$ to $\vert G \vert^{-1/2} \sum_{g \in G} \ket{g}$.}
    \label{fig:QSZK_MaxFid}
\end{figure*}

\begin{proof}
Before we get into the proof, let us recall the sine distance of two quantum states $\rho, \sigma$  \cite{R06}:
\begin{equation}
    \label{eq:sine-distance}
    P(\rho, \sigma) \coloneqq \sqrt{1-F(\rho, \sigma)}.
\end{equation}
The sine distance has a triangle-inequality property for states $\rho,\sigma,\omega$:
\begin{equation}
    \label{eq:triangle-sine}
    P(\rho, \sigma) \leq P(\rho, \omega) + P(\sigma, \omega).
\end{equation}
Furthermore, it has a data-processing inequality inherited from that of fidelity:
\begin{equation}
    P(\rho, \sigma) \geq P(\mathcal{N}(\rho), \mathcal{N}(\sigma)).
    \label{eq:DPI-sine-distance}
\end{equation}

To prove that State-$G$-Sym-Fid is QSZK-Complete, we need to show two results. First, we need to show that the problem belongs to QSZK and, second, that the problem is QSZK-Hard. 

We first show that the problem belongs to QSZK. To this end, we propose an algorithm to estimate the quantity $F(\rho, \overline{\rho})$. The underlying principle is Uhlmann's theorem \cite{Uhl76}, which can be simply understood as follows:
\begin{equation}
    F(\rho_S, \sigma_S) = \max_{V_{R \to R'}} \left\vert \langle \psi^\sigma \vert_{R'S} (V_{R \to R'} \otimes I_S) \vert \psi^\rho \rangle_{RS} \right\vert^2,
\end{equation}
where the maximization is over every isometry $V_{R \to R'}$ and $\vert \psi^\rho \rangle_{RS}$ and $\vert \psi^\sigma \rangle_{RS}$ are purifications of $\rho_S$ and $\sigma_S$, respectively. In other words, the fidelity of two states is given by the maximum squared overlap between their purifications. To calculate the fidelity of $\rho$ and $\overline{\rho}$, we then need purifications of both states. We are given a unitary $U^\rho$ to prepare a purification of $\rho$ that is used in the following manner:
\begin{align}
    \ket{\psi^\rho}_{RS} &= U^\rho_{RS} \ket{0}_{RS} , \notag \\
    \rho_S &= \Tr_{R} [\outerproj{\psi^\rho}_{RS}].
\end{align}
The following unitary generates a purification of $\overline{\rho}$:
\begin{equation}
    U^{\overline{\rho}}_{CRS} \coloneqq 
\left(\sum_g \outerproj{g}_C \otimes U_S(g)\right)  \operatorname{QFT}_C U^\rho_{RS},
\end{equation}
as follows:
\begin{align}
    \ket{\psi^{\overline{\rho}}} &= U^{\overline{\rho}}_{CR''S} \ket{0}_{CR''S} \notag\\
    &= \left(\sum_{g\in G} \outerproj{g}_C \otimes U_S(g) \right) \frac{1}{\sqrt{\vert G \vert}} \sum_{g'\in G } \ket{g'}_C  \ket{\psi^\rho}_{R''S} \notag\\
    &= \frac{1}{\sqrt{\vert G \vert}} \sum_{g\in G} \ket{g}_C \otimes U_S(g) \ket{\psi^\rho}_{R''S}.
\end{align}
Thus, performing the partial trace over $R''C$, we see that
\begin{align}
    \Tr_{R''C} [\outerproj{\psi^{\overline{\rho}}}_{R''C}] &= \frac{1}{\vert G \vert} \sum_g U_S(g) \rho_S U^\dagger_S(g) \notag\\
    &= \overline{\rho}.
\end{align}
Therefore, using the unitaries $U^\rho_{RS}$ and 
$U^{\overline{\rho}}_{R'S}$ (where $R' \equiv R''C)$, we can apply Uhlmann's theorem in tandem with an all-powerful prover (to implement the isometry~$V$) to estimate the fidelity. The construction for the algorithm can be seen in Figure~\ref{fig:QSZK_MaxFid}.

In the case of a Yes-instance,
\begin{equation}
    \maxSymG{\rho}{\sigma} \geq \alpha.
\end{equation}
Let $\sigma^\ast$ be a state achieving the maximum, i.e., 
\begin{align}
    F(\rho, \sigma^\ast) &\geq \alpha, \notag \\
    \Leftrightarrow \qquad P(\rho, \sigma^\ast) &\leq \sqrt{1-\alpha}.
\end{align}
Then
\begin{align}
    P(\rho, \overline{\rho}) &\leq P(\rho, \sigma^\ast) + P(\sigma^\ast, \overline{\rho}) \notag\\
    &\leq 2 P(\rho, \sigma^\ast)\notag\\
    &\leq 2\sqrt{1-\alpha},\notag\\
    \Leftrightarrow \qquad F(\rho, \overline{\rho}) &\geq 4\alpha - 3,
\end{align}
where the second inequality follows from the data-processing inequality (see \eqref{eq:DPI-sine-distance}) under the application of the twirling channel $\mathcal{T}_G$ and the fact that $\sigma^\ast$ is unchanged under the application of this channel.

In the case of a No-instance,
\begin{equation}
    F(\rho, \overline{\rho}) \leq \maxSymG{\rho}{\sigma} \leq \beta.
\end{equation}
Thus, there exists a gap as long as 
\begin{equation}
    \alpha > \frac{3+\beta}{4}.
\end{equation}
The interactive proof system is quantum statistical zero-knowledge because, in the case of a Yes instance, the input state $\rho$ is close to the twirled state $\overline{\rho}$. Thus, the verifier can efficiently simulate the whole interaction on their own,  and the statistical difference between the simulation and the actual protocol is negligible. As such, the problem is in the QSZK class.

Next, we show that the problem is QSZK-Hard. To do this, we map the QSZK-Complete problem State-$G$-Sym-TD to our problem. To show this, we make use of two standard inequalities that relate the trace distance and fidelity of two states \cite{fuchs1999cryptographic}:
\begin{align}
    1-\sqrt{F(\rho, \sigma)} \leq \td{\rho}{\sigma}, \label{eq:fid-TD-LHS}\\
    \sqrt{1-F(\rho, \sigma)} \geq \td{\rho}{\sigma}. \label{eq:fid-TD-RHS}
\end{align}

Consider a No-instance of State-$G$-Sym-TD. Then,
\begin{equation}
    \frac{1}{2} \min\limits_{\sigma \in \text{Sym}_G} \left\Vert \rho - \sigma \right\Vert_1 \leq \beta.
\end{equation}
Using \eqref{eq:fid-TD-LHS}, we see that
\begin{equation}
    \min\limits_{\sigma \in \text{Sym}_G} 1-\sqrt{F(\rho, \sigma)} \leq \min\limits_{\sigma \in \text{Sym}_G} \td{\rho}{\sigma} \leq \beta.
\end{equation}
Therefore, after some basic algebra,
\begin{equation}
    \max\limits_{\sigma \in \text{Sym}_G} F(\rho, \sigma) \geq (1-\beta)^2.
\end{equation}

Similarly, consider a Yes-instance of State-$G$-Sym-TD. Then,
\begin{equation}
    \min\limits_{\sigma \in \text{Sym}_G} \td{\rho}{\sigma} \geq \alpha.
\end{equation}
Using \eqref{eq:fid-TD-RHS}, we see that
\begin{equation}
    \alpha \leq \min\limits_{\sigma \in \text{Sym}_G} \td{\rho}{\sigma} \leq \min\limits_{\sigma \in \text{Sym}_G} \sqrt{1-F(\rho, \sigma)}.
\end{equation}
Therefore, after some basic algebra,
\begin{equation}
    \max\limits_{\sigma \in \text{Sym}_G} F(\rho, \sigma) \leq 1-\alpha^2.
\end{equation}

Thus, $(\alpha, \beta)$-State-$G$-Sym-TD reduces to $((1-\beta)^2, 1-\alpha^2)$-State-$G$-Sym-Fid. Since we mapped Yes (No) instances to No (Yes) instances, this proves that State-$G$-Sym-Fid belongs to co-QSZK. Since QSZK is closed under complement, the problem belongs to QSZK \cite{watrous2002qszk}. Thus, State-$G$-Sym-Fid is QSZK-Hard.
\end{proof}

\subsection{Testing \texorpdfstring{$G$}{G}-Bose Symmetric Extendibility of a State is QIP(2)-Complete}
\label{sec:GBSE-QIP2}

In this section, we show that testing $G$-Bose symmetric extendibility (G-BSE) of a state is QIP(2)-Complete. 

\begin{problem}
[$\left( \alpha,\beta\right)  $-State-$G$-BSE]Let $\alpha$ and $\beta$ be such that $0\leq\beta<\alpha\leq1$. Given are a circuit description of a unitary $U_{RS}^{\rho}$ that generates a purification of a state~$\rho_S$ and circuit descriptions of a unitary representation $\{U_{RS}(g)\}_{g \in G}$ of a group $G$. Decide which of the following holds:
\begin{align}
\textit{Yes: } & \quad \max\limits_{\sigma_S \in \operatorname{BSE}_G} F(\rho_S, \sigma_S) \geq \alpha,
\label{eq:G-BSE-sym-meas-yes}\\
\textit{No: } & \quad \max\limits_{\sigma_S \in \operatorname{BSE}_G} F(\rho_S, \sigma_S) \leq \beta,
\end{align}
where the set $\operatorname{BSE}_{G}$ is defined as
\begin{multline}
\operatorname{BSE}_{G}\coloneqq\\
\left\{\begin{array}[c]{c} \sigma_{S}:\exists\ \omega_{RS}\in\mathcal{D}(\mathcal{H}_{RS}), \operatorname{Tr}_R[\omega_{RS}]=\sigma_S, \\ 
\omega_{RS}=U_{RS}(g)\omega_{RS},\ \forall g\in G
\end{array}
\right\}  .
\label{eq:G-BSE-states-set}
\end{multline}
\end{problem}

Let us observe that the symmetry measure in \eqref{eq:G-BSE-sym-meas-yes} is faithful, in the sense that it is equal to one if and only if the state $\rho$ is $G$-Bose symmetric extendible. This follows from the faithfulness and continuity properties of fidelity.

\begin{theorem}
\label{thm:QIP2-GBSE} The promise problem State-$G$-BSE is QIP(2)-Complete.
\begin{enumerate}
\item $\left( \alpha,\beta\right) $-State-$G$-BSE is in QIP(2) for all $\beta<\alpha$. (It is implicit that the gap between $\alpha$ and $\beta$ is larger than an inverse polynomial in the input length.)

\item $\left( 1-\varepsilon, \varepsilon\right) $-State-$G$-BSE is QIP(2)-Hard, even when $\varepsilon$ decays exponentially in the input length.
\end{enumerate}

\noindent Thus, $\left( \alpha,\beta\right) $-State-$G$-BSE is QIP(2)-Complete for all $\left(  \alpha,\beta\right)  $ such that $0\leq \beta<\alpha\leq 1$.
\end{theorem}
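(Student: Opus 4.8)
The plan is to establish the two halves separately, following the template already used in this paper: first show containment of State-$G$-BSE in QIP(2) by constructing an explicit two-message protocol, and then show QIP(2)-hardness by reducing the canonical QIP(2)-Complete problem (the Problem of Close Image, Definition~\ref{def:prob-close-image}) to State-$G$-BSE. For the containment direction, I would have the verifier prepare a purification $\ket{\psi^\rho}_{R_0 S}$ of $\rho_S$ using the given circuit $U^\rho_{RS}$, then send system $S$ (together with a newly prepared register $R$ initialized to $\ket{0}$) to the prover, whose job is to act with an isometry producing a purification of a candidate $\sigma_S \in \operatorname{BSE}_G$; the prover returns the $R$-system, and the verifier runs the $G$-Bose-symmetry test of \cite[Algorithm~1]{LRW22} on $RS$ (i.e.\ coherently applies $\frac{1}{|G|}\sum_g U_{RS}(g)$ controlled on an ancilla prepared in a uniform superposition via QFT, then measures) combined with an Uhlmann-type overlap test between the returned state and the original purification. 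The key observation is that the maximum acceptance probability of this protocol equals exactly $\max_{\sigma_S \in \operatorname{BSE}_G} F(\rho_S,\sigma_S)$: the prover's freedom in choosing the isometry realizes the Uhlmann maximization over purifications, and the Bose-symmetry projection forces the returned extension to lie in the symmetric subspace, so the purified state being tested against corresponds precisely to a $G$-Bose symmetric extendible $\sigma_S$. This is a genuinely two-message protocol (verifier $\to$ prover $\to$ verifier), hence lands in QIP(2).

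For the hardness direction, I would start from an instance of the Problem of Close Image: circuits for a unitary dilation $U_{AE'\to BE}$ of a channel $\mathcal{N}_{A\to B}$ and for a purification of a state $\rho_B$, with the promise that $\max_{\sigma_A} F(\rho_B,\mathcal{N}_{A\to B}(\sigma_A))$ is either $\geq \alpha$ or $\leq \beta$. The target is to build a state and a group representation so that the G-BSE fidelity measure reproduces this channel-state fidelity. The natural construction: let the group be $C_2 = \{I,V\}$ acting on a composite system, where $V$ is built so that the Bose-symmetric-subspace constraint on the extension $\omega_{RS}$ encodes "the $S$-part is the channel output $\mathcal{N}(\sigma_A)$ for some input $\sigma_A$ fed through the dilation." Concretely, I would take $\rho_S$ in the problem to be the given $\rho_B$, let $R$ carry the channel's input/environment registers, and define $V_{RS}$ using the dilation unitary $U_{AE'\to BE}$ (conjugated appropriately, in the spirit of the $V_{SAC} = (Q)^\dagger \operatorname{CNOT}\, Q$ trick from Section~\ref{sec:Sym-HS-BQP}) so that $\omega_{RS}$ being fixed by $V_{RS}$ is equivalent to $\omega_{RS}$ being (a purification-compatible version of) $U(\sigma_A \otimes \ket{0}\!\bra{0})U^\dagger$ for some $\sigma_A$, with $\Tr_R$ returning the channel output. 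Then $\max_{\sigma_S \in \operatorname{BSE}_G} F(\rho_S,\sigma_S)$ collapses to $\max_{\sigma_A} F(\rho_B, \mathcal{N}_{A\to B}(\sigma_A))$, carrying Yes/No instances to Yes/No instances; since $\varepsilon$ can be taken exponentially small after the standard amplification, this gives $(1-\varepsilon,\varepsilon)$-hardness.

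The main obstacle I anticipate is the hardness reduction's encoding step: making precise the claim that the Bose-symmetry constraint $\omega_{RS} = V_{RS}\,\omega_{RS}$ together with the extension constraint $\Tr_R[\omega_{RS}] = \rho_S$ exactly captures "output of the dilation on an arbitrary input," without accidentally enlarging or shrinking the feasible set. In particular one must check that an arbitrary extension fixed by $V_{RS}$ decomposes so that tracing out $R$ yields precisely $\mathcal{N}_{A\to B}(\sigma_A)$ for some valid state $\sigma_A$ (and conversely that every such channel output arises from a legitimate Bose-symmetric extension), which requires carefully choosing which ancilla registers to include in $R$ versus $S$ and verifying the eigenvalue-$1$ eigenspace of $V_{RS}$ has the intended structure — a place where an off-by-one in the register bookkeeping would break the equivalence. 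The containment direction's calculation, by contrast, should be routine once the protocol is written down, since it is a direct combination of Uhlmann's theorem (as in the proof of Theorem~\ref{thm:QSZK-GSym}) with the Bose-symmetry test already analyzed in \cite{LRW22}.
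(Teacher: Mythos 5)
Your overall architecture matches the paper's for containment but diverges for hardness, and both halves deserve comment.

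\textbf{Containment.} The paper simply invokes \cite[Algorithm~3]{LRW22}, which is the protocol you are reaching for, but your description of the message pattern has a soundness-breaking slip: you have the verifier send system $S$ to the prover and then ``run the test on $RS$.'' In the correct protocol the verifier \emph{keeps} $S$, sends only the purifying register to the prover, receives back a register $R$, and applies the Bose-symmetry projection to $RS$. The constraint $\operatorname{Tr}_R[\omega_{RS}]=\rho_S$ in the definition of $\operatorname{BSE}_G$ is enforced precisely by the fact that the prover never touches $S$; if the prover ever holds $S$, it can discard it and substitute a perfectly Bose-symmetric-extendible state, and the acceptance probability no longer equals $\max_{\sigma_S\in\operatorname{BSE}_G}F(\rho_S,\sigma_S)$. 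This is easily repaired, but as written the protocol does not prove containment.

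\textbf{Hardness.} Here you take a genuinely different route. The paper reduces directly from an arbitrary QIP(2) protocol $(U^1,P,U^2)$: it sets $\rho_{S'}=\operatorname{Tr}_R[\outerproj{\psi}_{S'R}]$ with $\ket{\psi}_{S'R}=U^1\ket{x}\ket{0}$, defines $V_{R'S'}=(U^2)^\dagger(-Z_D\otimes I_G)U^2$ so that $\Pi^G_{R'S'}=(U^2)^\dagger(\outerproj{1}_D\otimes I_G)U^2$ is exactly the acceptance projector, and then invokes \cite[Theorem~III.3]{LRW22} to identify $\max_P\operatorname{Tr}[\Pi^G P(\outerproj{0}_E\otimes\outerproj{\psi}_{S'R})P^\dagger]$ with $\max_{\sigma_{S'}\in\operatorname{BSE}_G}F(\rho_{S'},\sigma_{S'})$. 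Your reduction from the Problem of Close Image also works, and the obstacle you flag does resolve cleanly: take $S=B$, $R=E$, and
\begin{equation}
V_{BE}\coloneqq U_{AE'\to BE}\bigl(I_A\otimes(2\outerproj{0}_{E'}-I_{E'})\bigr)(U_{AE'\to BE})^\dagger ,
\end{equation}
so that $V^2=I$ and $(I+V)/2=U(I_A\otimes\outerproj{0}_{E'})U^\dagger$ projects onto $U(\mathcal{H}_A\otimes\ket{0}_{E'})$. States supported on that subspace are exactly those of the form $U(\sigma_A\otimes\outerproj{0}_{E'})U^\dagger$, and tracing out $E$ yields exactly $\mathcal{N}_{A\to B}(\sigma_A)$, so $\operatorname{BSE}_G$ coincides with the image of the channel and the two optimization problems have identical values; Yes/No instances map with unchanged parameters. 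The trade-off: the paper's reduction is self-contained from the class definition and exhibits the structural template (group unitary equals the conjugated acceptance reflection) that it reuses for the \qipeb\ and QIP results, whereas yours is shorter if one grants the QIP(2)-completeness of Close Image with the stated parameters and makes the link to the known complete problem explicit.
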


\begin{proof}
To show that the problem is QIP(2)-Complete, we need to demonstrate two facts: first, that the problem is in QIP(2), and second, that it is QIP(2)-Hard. Let us begin by proving that the problem is in QIP(2). In our previous work \cite[Algorithm 3]{LRW22}, we proposed an algorithm to test for $G$-Bose symmetric extendibility of a state $\rho_S$ given a circuit description of unitary that generates a purification of the state and circuit descriptions of a unitary representation $\{U_{RS}(g)\}_{g \in G}$ of a group $G$ (see also \cite[Figure~6]{LRW22}). By inspection, the algorithm can be conducted efficiently given two messages exchanged with an all-powerful prover; therefore, this promise problem is clearly in QIP(2).

To show that the problem is QIP(2)-Hard, we map an arbitrary QIP(2) problem to a $G$-BSE problem. Specifically, from the circuit descriptions for a QIP(2) algorithm, we will identify a state $\rho_{S'}$ and a unitary representation $\{V_{R'S'}(g)\}_{g\in G}$ corresponding to a $G$-BSE problem, and we will show how the symmetry-testing condition
\begin{equation}
    \max\limits_{\sigma_{S'} \in \operatorname{BSE}_G} F(\rho_{S'}, \sigma_{S'})
\end{equation} 
can be written in terms of the QIP(2) algorithm's acceptance probability.

To begin, recall that a QIP(2) problem consists of a first verifier circuit $U^1_{SA \to S'R}$, a prover circuit $P_{RE \to R'E'}$, and a second verifier circuit $U^2_{S'R' \to DG}$, where $D$ is the decision qubit. (Here and in what follows, we keep implicit the dependence of the prover's unitary on the problem input $x$.)
The acceptance probability is given by
\begin{multline}
    p_{\operatorname{acc}} = \max\limits_{P_{RE \to R'E'}} \big\Vert (\bra{1}_D \otimes I_{E'G}) U^2_{S'R' \to DG} \times \\ P_{RE \to RE'} U^1_{SA \to S'R} \ket{x}_{S} \ket{0}_A \ket{0}_E \big\Vert^2_2.
\end{multline}
Define $\ket{\psi}_{S'R} \coloneqq  U^1_{SA \to S'R} \ket{x}_S \ket{0}_A$, and we identify the aforementioned state $\rho_{S'}$ as
\begin{equation}
    \rho_{S'} \coloneqq \Tr_{R} [\outerproj{\psi}_{S'R}].
\end{equation}
The acceptance probability can then be rewritten as
\begin{multline}
    p_{\operatorname{acc}} = \max\limits_{P_{RE \to R'E'}} \operatorname{Tr} [(\outerproj{1}_D \otimes I_{E'G}) U^2  \times \\
    P(\outerproj{0}_{E} \otimes \outerproj{\psi}_{S'R}) P^\dagger (U^2)^\dagger],
\end{multline}
where we omitted system labels for brevity.
Using the cyclicity of trace, consider that
\begin{multline}
    \label{eq:QIP(2)-accep-prob}
    p_{\operatorname{acc}} = \max\limits_{P_{RE \to R'E'}} \operatorname{Tr} [(U^2)^\dagger (\outerproj{1}_D \otimes I_{E'G}) U^2 \times \\
    P(\outerproj{0}_{E} \otimes \outerproj{\psi}_{S'R}) P^\dagger].
\end{multline}
Motivated by this, we pick the group $G$ to be $C_2$ with unitary representation $\{I_{R'S'},V_{R'S'}\}$, where 
\begin{equation}
    V_{R'S'} \coloneqq (U^2_{R'S' \to DG})^\dagger \left(- Z_D \otimes I_G \right) U^2_{R'S' \to DG},
\end{equation}
We note that $V^2_{RS} = I_{RS}$, establishing that this is indeed a representation of $C_2$. The resulting group projection is then
\begin{align}
    \Pi_{R'S'}^G &= \frac{1}{2}(I_{R'S'} + V_{R'S'}) \notag\\ 
    &= \frac{1}{2} \left( {U^2}^{\dag}(I_D \otimes I_G)U^2 + {U^2}^{\dag} \left( -Z_D \otimes I_G \right) U^2 \right) \notag\\
    &=\frac{1}{2} \left( {U^2}^{\dag} \left( I_{DG} - \left( Z_D \otimes I_G \right) \right) U^2\right) \notag\\
    &= {U^2}^{\dag} (\outerproj{1}_D \otimes I_G)U^2,
\end{align}
which is precisely the acceptance projection in the first line of \eqref{eq:QIP(2)-accep-prob}. That is,
\begin{equation}
    p_{\operatorname{acc}} = \max\limits_{P_{RE \to R'E'}} \operatorname{Tr} [ \Pi_{R'S'}^G
    P(\outerproj{0}_{E} \otimes \outerproj{\psi}_{S'R}) P^\dagger].
\end{equation}
Now invoking \cite[Theorem III.3]{LRW22}, we conclude that
\begin{align}
    & \max\limits_{P_{RE \to R'E'}} \operatorname{Tr} [ \Pi_{R'S'}^G
    P(\outerproj{0}_{E} \otimes \outerproj{\psi}_{S'R}) P^\dagger] \notag \\
    & = \max\limits_{P_{RE \to R'E'}} \left \Vert  \Pi_{R'S'}^G
    P(\ket{0}_{E} \otimes \ket{\psi}_{S'R}) \right\Vert_2^2 \notag \\
    & = \max_{\sigma_{S'} \in \operatorname{BSE}_G} F(\rho_{S'}, \sigma_{S'}),
\end{align}
where $\operatorname{BSE}_{G}$ in this case is
\begin{multline}
\operatorname{BSE}_{G}\coloneqq\\
\left\{\begin{array}[c]{c} \sigma_{S'}:\exists\ \omega_{R'S'}\in\mathcal{D}(\mathcal{H}_{R'S'}), \operatorname{Tr}_{R'}[\omega_{R'S'}]=\sigma_{S'}, \\ 
\omega_{R'S'}=V_{R'S'}\omega_{R'S'},
\end{array}
\right\}  .
\label{eq:G-BSE-states-set-QIP2}
\end{multline}
To help visualize the reduction,
the $G$-Bose symmetric extendibility test corresponding to a general QIP(2) algorithm is depicted in Figure~\ref{fig:QIP(2)-GBSE}.
    
\begin{figure*}
    \centering
    \includegraphics[width=0.9\linewidth]{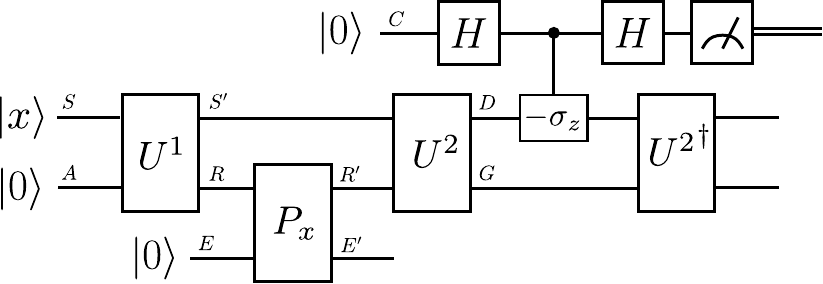}
    \caption{Circuit to map an arbitrary QIP(2) computation to a $G$-Bose symmetric extendibility test.}
\label{fig:QIP(2)-GBSE}
\end{figure*}

Thus, the acceptance probability of the QIP(2) algorithm exactly matches the symmetry-testing condition of the constructed $G$-BSE problem. As such, any QIP(2) problem can be efficiently mapped to a $G$-BSE problem, proving that the problem State-$G$-BSE is QIP(2)-Hard. Along with the fact that the problem lies in the QIP(2) class, this concludes the proof.
\end{proof}

\subsection{Testing \texorpdfstring{$G$}{G}-Bose Symmetric Separable Extendibility is \texorpdfstring{\qipeb}{QIP-EB(2)}-Complete}

\label{sec:GBS-QIPEB}

In this section, we introduce the following problem: decide whether a state has a separable extension that is $G$-Bose symmetric. We also prove that it is a  \qipeb-Complete problem. 

\begin{problem}
[$\left(\alpha,\beta\right)$-Sep-Ext-$G$-Bose-Symmetry]Let $\alpha$ and $\beta$ be such that $0\leq \beta < \alpha \leq1$. Given is a circuit description of a unitary $U_{SS'}^{\rho}$ that generates a purification of a state $\rho_S$, as well as circuit descriptions of a unitary representation $\{U_{RS}(g)\}_{g \in G}$ of a group~$G$. Decide which of
the following holds:
\begin{align}
\textit{Yes: } & \quad 
\max_{\substack{\omega_{RS} \in \operatorname{SEP}(R:S), \\ \Tr_R[\omega_{RS}] = \rho_S}} \Tr[\Pi^G_{RS} \omega_{RS}]
 \geq \alpha, 
\label{eq:test-G-bose-sep-sym-ext}
\\
\textit{No: }  & \quad \max_{\substack{\omega_{RS} \in \operatorname{SEP}(R:S), \\ \Tr_R[\omega_{RS}] = \rho_S}} \Tr[\Pi^G_{RS} \omega_{RS}]
 \leq \beta,
\end{align}
where $\Pi^G_{RS}$ is defined in \eqref{eq:Pi-proj-RS-def}.
\end{problem}

Let us observe that the following equality holds:
\begin{multline}
    \max_{\substack{\omega_{RS} \in \operatorname{SEP}(R:S), \\ \Tr_R[\omega_{RS}] = \rho_S}} \Tr[\Pi^G_{RS} \omega_{RS}] = \\
    \max_{\substack{\omega_{RS} \in \operatorname{SEP}(R:S), \\ \Tr_R[\omega_{RS}] = \rho_S, \\ \sigma_{RS} \in \text{B-Sym}_G}} F(\omega_{RS},\sigma_{RS}),
    \label{eq:fid-qip-eb-2-rewrite}
\end{multline}
where the set $\operatorname{B-Sym}_{G}$ in this case is defined as
\begin{equation}
\operatorname{B-Sym}_{G} \coloneqq
\left\{ \sigma_{RS}: \sigma_{RS} = U_{RS}(g)\sigma_{RS},\ \forall g\in G \right\}.
\label{eq:G-BSSE-states-set}
\end{equation}
The identity in \eqref{eq:fid-qip-eb-2-rewrite} follows as a consequence of \cite[Theorem~3.1]{LRW22}. Rewriting the expression in this way allows for a fidelity interpretation of the symmetry condition, which implies that the symmetry-testing condition in \eqref{eq:test-G-bose-sep-sym-ext} is  equal to one if and only if there exists a separable extension of $\rho_S$  that is Bose-symmetric according to the unitary representation $\{U_{RS}(g)\}_{g \in G}$. As such, this is a faithful measure of $G$-Bose symmetric separable extendibility.

\begin{theorem}
\label{thm:QIPEB2-Sep-Ext-GBS} The promise problem Sep-Ext-$G$-Bose-Symmetry is \qipeb-Complete.

\begin{enumerate}
\item $\left(  \alpha,\beta\right)  $-Sep-Ext-$G$-Bose-Symmetry is in \qipeb\ for all $\beta<\alpha$. (It is implicit that the gap between $\alpha$ and $\beta$ is larger than an inverse polynomial in the input length.)

\item $\left(   1-\varepsilon, \varepsilon\right)  $-Sep-Ext-$G$-Bose-Symmetry is \qipeb-Hard, even when $\varepsilon$ decays exponentially in the input length.
\end{enumerate}

\noindent Thus, $\left(  \alpha,\beta\right)  $-Sep-Ext-$G$-Bose-Symmetry is \qipeb-Complete for all $\left(  \alpha,\beta\right)  $ such that $0\leq \beta<\alpha\leq 1$.
\end{theorem}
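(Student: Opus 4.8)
\emph{Plan.} The goal is to show both that Sep-Ext-$G$-Bose-Symmetry lies in $\qipeb$ — by exhibiting a symmetry-testing protocol whose prover is entanglement-breaking — and that it is $\qipeb$-hard, by a reduction from the class definition that closely parallels the QIP(2)-hardness argument in the proof of Theorem~\ref{thm:QIP2-GBSE}. The conceptual bridge in both directions is the elementary fact that applying an entanglement-breaking channel to one half of a purification produces exactly the separable extensions of the reduced state.

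\emph{Membership in $\qipeb$.} I would give the following two-message protocol. The verifier uses $U^\rho_{SS'}$ to prepare the purification $\ket{\psi^\rho}_{SS'}$, retains $S$, and sends $S'$ to the prover; the prover applies a channel from $S'$ to the register $R$ and returns $R$; the verifier runs the standard $G$-Bose-symmetry test on $RS$ (a control register initialized in $\ket{0}$, the group QFT, the controlled unitary $\sum_{g}\outerproj{g}\otimes U_{RS}(g)$, the inverse QFT, and a measurement of the control register, accepting on the all-zeros outcome), which accepts a state $\omega_{RS}$ with probability $\Tr[\Pi^G_{RS}\omega_{RS}]$. The crux is that, whenever the prover's channel $S'\to R$ is entanglement-breaking — of the form in \eqref{eq:EB-ch-def}, $\xi\mapsto\sum_x\Tr[\mu_x\xi]\,\psi^x_R$ with rank-one $\mu_x$ and pure $\psi^x_R$ — the state it produces on $RS$ is $\sum_x p(x)\,\psi^x_R\otimes\phi^x_S$ with each $\phi^x_S$ pure and $\sum_x p(x)\phi^x_S=\rho_S$, i.e., a separable extension of $\rho_S$; conversely every separable extension of $\rho_S$ arises this way, since its product decomposition is a pure-state ensemble of $\rho_S$ and every such ensemble is steerable from the purifying system $S'$ by a suitable POVM. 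Hence the optimal acceptance probability over entanglement-breaking provers equals $\max_{\omega_{RS}\in\operatorname{SEP}(R:S),\,\Tr_R[\omega_{RS}]=\rho_S}\Tr[\Pi^G_{RS}\omega_{RS}]$ (equivalently, by \cite[Theorem~3.1]{LRW22}, the fidelity expression in \eqref{eq:fid-qip-eb-2-rewrite}), so completeness and soundness are immediate and the problem is in $\qipeb$.

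\emph{$\qipeb$-hardness.} I would take an arbitrary promise problem $L\in\qipeb$, which is specified by a QIP(2) verifier with pre-message circuit $U^1_{SA\to S'R}$ and post-message circuit $U^2_{S'R'\to DG}$ (decision qubit $D$), together with the promise that completeness and soundness hold when the prover's channel $R\to R'$ is entanglement-breaking. Put $\ket{\psi}_{S'R}\coloneqq U^1_{SA\to S'R}\ket{x}_S\ket{0}_A$ and $\rho_{S'}\coloneqq\Tr_R[\outerproj{\psi}_{S'R}]$, so $U^1$ acting on $\ket{x}\ket{0}$ furnishes the required purification circuit; take $G=C_2=\{I,V\}$ with representation $V_{R'S'}\coloneqq(U^2_{S'R'\to DG})^\dagger(-Z_D\otimes I_G)\,U^2_{S'R'\to DG}$, exactly as in Theorem~\ref{thm:QIP2-GBSE}, so that $V_{R'S'}^2=I_{R'S'}$ and $\Pi^{C_2}_{R'S'}=(U^2)^\dagger(\outerproj{1}_D\otimes I_G)U^2$ is the acceptance projector. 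Applying the entanglement-breaking $\leftrightarrow$ separable-extension correspondence in reverse, the acceptance probability of $L$ equals $\max_{\omega_{R'S'}\in\operatorname{SEP}(R':S'),\,\Tr_{R'}[\omega_{R'S'}]=\rho_{S'}}\Tr[\Pi^{C_2}_{R'S'}\omega_{R'S'}]$, i.e., the symmetry-testing quantity of the constructed instance. Thus yes- and no-instances map to yes- and no-instances; amplifying the $\qipeb$ gap (e.g.\ by parallel repetition) so that $\varepsilon$ is exponentially small then yields hardness in the stated parameter range, and together with membership this proves $\qipeb$-completeness.

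\emph{Main obstacle.} The delicate part is the entanglement-breaking $\leftrightarrow$ separable-extension equivalence: one must verify that an arbitrary pure-state ensemble of $\rho_S$ — with however many terms are needed to attain the maximum over separable extensions — is steerable from the fixed-dimension purifying system $S'$ by a (possibly many-outcome) POVM, and that restricting to rank-one POVMs with pure outputs does not shrink the reachable set; these rely on the standard (Hughston--Jozsa--Wootters) classification of ensembles of a state together with the freedom to refine a POVM. A minor further point is attainment of the two maxima over the compact sets of separable extensions and of entanglement-breaking channels (so that finite ensembles suffice), together with the routine bookkeeping that the amplified $\qipeb$ parameters survive the reduction; everything else mirrors the QIP(2) case.
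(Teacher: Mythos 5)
Your proposal is correct and follows essentially the same route as the paper's proof: the same two-message protocol with an entanglement-breaking prover acting on the purifying system for membership, the same identification of entanglement-breaking-channel outputs with separable extensions of $\rho_S$, and the same $C_2$ representation $V = (U^2)^\dagger(-Z_D\otimes I_G)U^2$ yielding the acceptance projector for hardness. Your extra care about the converse steering direction (that every pure-state ensemble of $\rho_S$, hence every separable extension, is reachable by a rank-one POVM on the purifying system) is a point the paper leaves implicit, but it is the standard Hughston--Jozsa--Wootters argument and does not constitute a different approach.
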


\begin{proof}
To show that the problem is \qipeb-Complete, we need to demonstrate two facts: first, that the problem is in \qipeb, and second, that it is \qipeb-Hard. Let us begin by proving that the problem is in \qipeb. 

\begin{figure}
    \centering
    \includegraphics[width=1.0\columnwidth]{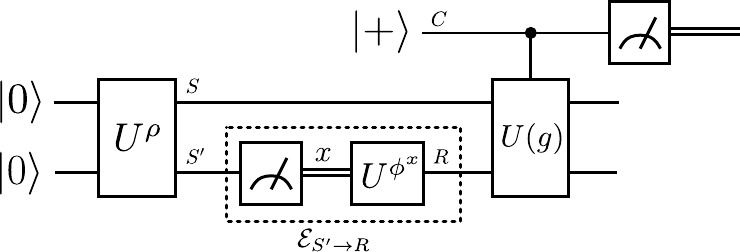}
    \caption{\qipeb\ algorithm to test for $G$-Bose symmetry of a separable extension of the state $\rho_S$, where the prover's actions are depicted in the dashed box. The prover's channel $\mathcal{E}_{S' \to R}$ is entanglement breaking.}
    \label{fig:sep-ext-GBS}
\end{figure}

The algorithm to estimate the quantity in \eqref{eq:test-G-bose-sep-sym-ext} is given in Figure~\ref{fig:sep-ext-GBS}. The general form of an entanglement-breaking channel is given by
\begin{equation}
    \mathcal{E}_{S' \to R}(\cdot) = \sum_{x \in \mathcal{X}} \Tr[\mu^x_{S'}(\cdot)]\phi^x_{R},
\end{equation}
as discussed in \eqref{eq:EB-ch-def}.
Thus, the acceptance probability of the algorithm for a fixed channel $\mathcal{E}_{S' \to R}$ is given by
\begin{align}
    &\Tr[\Pi^G_{RS} \mathcal{E}_{S' \to R}(\outerproj{\psi^\rho}_{SS'})] \notag \\
    &= \sum_{x \in \mathcal{X}} \Tr[\Pi^G_{RS} \Tr_{S'}[\mu^x_{S'}(\outerproj{\psi^\rho}_{SS'})] \otimes \phi^x_{R}] \notag \\
    &= \sum_{x \in \mathcal{X}} p(x) \Tr[\Pi^G_{RS} (  \phi^x_{R} \otimes \psi^x_{S})],
    \label{eq:qipeb2-fixed-accep}
\end{align}
where 
\begin{align}
    p(x) &\coloneqq \Tr[\mu^x_{S'}(\outerproj{\psi^\rho}_{SS'})], \\
    \psi^x_{S} &\coloneqq \frac{1}{p(x)} \Tr_{S'}[\mu^x_{S'}(\outerproj{\psi^\rho}_{SS'})].
\end{align}
Note that
\begin{align}
    \sum_{x \in \mathcal{X}} p(x) \psi^x_S 
    &= \sum_{x \in \mathcal{X}} \Tr_{S'}[\mu^x_{S'}(\outerproj{\psi^\rho}_{SS'})] \notag \\
    &= \Tr_{S'}[\outerproj{\psi^\rho}_{SS'}] \notag \\
    &= \rho_S.
\end{align}

Thus, maximizing over all possible entanglement-breaking channels, the acceptance probability is given by
\begin{multline}
    \max_{\{(p(x), \psi^x_S, \phi^x_{R})\}_x} \sum_{x \in \mathcal{X}} p(x) \Tr[\Pi^G_{RS} (\phi^x_{R} \otimes \psi^x_{S})]  \\
    = \max_{\omega_{RS} \in \operatorname{SEP}(R:S)} \Tr[\Pi^G_{RS} \omega_{RS}],
\end{multline}
with the condition that $\Tr_R[\omega_{RS}] = \sum_x p(x) \psi^x_S = \rho_S$.
Since the entire algorithm can be performed efficiently when augmented by an entanglement-breaking prover, the problem is in \qipeb.

Next, we show that the problem is \qipeb-Hard. To do this, we need to map a general \qipeb\ problem to an instance of Sep-Ext-$G$-Bose-Symmetry; i.e., using the circuit descriptions for a general \qipeb\ algorithm, we need to define a state~$\rho_{S}$ and a unitary representation $\{U_{RS}(g)\}_{g \in G}$.  

Consider a general interactive proof system in \qipeb\ that begins with the verifier preparing a bipartite pure state $\psi_{RS}$, followed by the system $R$ being sent to the prover, who subsequently performs an entanglement-breaking channel $\mathcal{E}_{R \to R'}$ and sends the $R'$ register to the verifier. The verifier then performs a unitary $V_{R^{\prime}S\rightarrow DG}$, measures the decision qubit, and accepts if the outcome $\ket{1}$ is observed. Indeed, the acceptance probability is given by
\begin{equation}
    \max_{\mathcal{E} \in \operatorname{EB}} \Tr[ (\outerproj{1}_{D} \otimes I_{G}) \mathcal{V}_{R'S \rightarrow DG} (\mathcal{E}_{R \rightarrow R'}(\psi_{RS}))],
\end{equation}
where $\mathcal{V}_{R'S \rightarrow DG}$ is the unitary channel corresponding to the unitary operator $V_{R'S \rightarrow DG}$ and EB denotes the set of entanglement-breaking channels. Following the reasoning in \eqref{eq:qipeb2-fixed-accep}, the output of an entanglement-breaking channel can be written in the form
\begin{equation}
    \mathcal{E}_{R\rightarrow R^{\prime}}(\psi_{RS})= \sum_{x} p(x) \phi_{R'}^x \otimes \psi_S^x,
\end{equation}
with the condition that $\sum_x p(x) \psi^x_S = \Tr_R[\psi_{RS}] = \rho_S$. In other words, the output of the entanglement-breaking channel is a separable extension of the state $\rho_S$. Thus, the acceptance probability is given by
\begin{multline}
    \max_{\omega_{R'S} \in \operatorname{SEP}(R':S)} \Tr[ (\outerproj{1}_{D} \otimes I_{G}) V (\omega_{R'S}) V^\dagger]  \\
    = \max_{\omega_{R'S} \in \operatorname{SEP}(R':S)} \Tr[( V^\dagger (\outerproj{1}_{D} \otimes I_{G}) V )\omega_{R'S}],
\end{multline}
subject to the constraint $\Tr_{R'}[\omega_{R'S}] = \rho_S$, where we have used the shorthand $V \equiv V_{R'S \to DG}$. The second inequality results from the cyclicity of trace. 

Let us then define the group $G$ to be the cyclic group on two elements $C_2 = \{I, W\}$ such that $W^2 = I$, where $W$ is simply given by
\begin{equation}
    W_{R'S} \coloneqq (V_{RS' \to DG})^\dagger \left(- Z_D \otimes I_G \right) V_{R'S \to DG},
\end{equation}
We note that $W^2_{R'S} = I_{R'S}$, establishing that this is indeed a representation of $C_2$. The resulting group projection is then
\begin{align}
    \Pi_{R'S}^G &= \frac{1}{2}(I_{R'S} + W_{R'S}) \notag\\ 
    &= \frac{1}{2} \left( V^\dagger(I_D \otimes I_G)V + V^\dagger \left( -Z_D \otimes I_G \right) V \right) \notag\\
    &=\frac{1}{2} \left( V^\dagger \left( I_{DG} - \left( Z_D \otimes I_G \right) \right) V \right) \notag\\
    &= V^\dagger (\outerproj{1}_D \otimes I_G) V.
\end{align}
Next, we define the state $\rho_S$ to be
\begin{equation}
    \rho_S \coloneqq \Tr_R[\psi_{RS}].
\end{equation}

Thus, the symmetry-testing condition of the instance of Sep-Ext-$G$-Bose-Symmetry is given by
\begin{multline}
     \max_{\omega_{R'S} \in \operatorname{SEP}(R':S)} \Tr[\Pi^G_{R'S} \omega_{R'S}]  = \\
     \max_{\omega_{R'S} \in \operatorname{SEP}(R':S)} \Tr[ V^\dagger (\outerproj{1}_D \otimes I_G) V \omega_{R'S}],
\end{multline}
where the maximization over $\omega_{R'S}$ is subject to the constraint that $\Tr_{R'}[\omega_{R'S}] = \rho_S$. 
This exactly matches the acceptance probability of the \qipeb\ problem, establishing that Sep-Ext-$G$-Bose-Symmetry is \qipeb-Hard, thus completing the proof. 
\end{proof}

\subsection{Testing \texorpdfstring{$G$}{G}-Bose Symmetric Extendibility of the Output of a Channel is QIP-Complete}
\label{sec:GBSE-QIP}

In this section, we show that testing the $G$-Bose symmetric extendibility (G-BSE) of the output of a channel state is QIP-Complete.

\begin{problem}
[$\left( \alpha,\beta\right)  $-Channel-$G$-BSE]Let $\alpha$ and $\beta$ be such that $0\leq\beta < \alpha \leq1$. Given are descriptions of circuits $U_{BC' \to S'S}^{\mathcal{N}}$ that prepare a unitary dilation of a channel
\begin{multline}
\mathcal{N}_{B \to S} (\cdot) \coloneqq \\
\operatorname{Tr}_{S'}[U^\mathcal{N}_{BC' \to S'S}((\cdot)_B \otimes \outerproj{0}_{C'}) (U^\mathcal{N}_{BC' \to S'S})^\dagger]    
\end{multline}
and descriptions of a unitary representation $\{U_S(g)\}_{g \in G}$ of a group $G$. Decide which of
the following holds:
\begin{align}
\textit{Yes: } & \quad  \max\limits_{\substack{\rho_B \in \mathcal{D}(\mathcal{H}_B), \\\sigma_S \in \operatorname{BSE}_G}} F(\mathcal{N}_{B \to S}(\rho_B), \sigma_S) \geq \alpha,
\label{eq:ch-G-BSE-sym-meas-yes}\\
\textit{No: } & \quad \max\limits_{\substack{\rho_B \in \mathcal{D}(\mathcal{H}_B), \\\sigma_S \in \operatorname{BSE}_G}} F(\mathcal{N}_{B \to S}(\rho_B), \sigma_S) \leq \beta,
\end{align}
where the set $\operatorname{BSE}_{G}$ is defined to be:
\begin{multline}
\operatorname{BSE}_{G}\coloneqq\\
\left\{\begin{array}[c]{c} \sigma_{S}:\exists\ \omega_{RS}\in\mathcal{D}(\mathcal{H}_{RS}), \operatorname{Tr}_R[\omega_{RS}]=\sigma_S, \\ 
\omega_{RS}=U_{RS}(g)\omega_{RS},\ \forall g\in G
\end{array}
\right\}.
\end{multline}
\end{problem}

Let us observe that the symmetry measure in \eqref{eq:ch-G-BSE-sym-meas-yes} is faithful, in the sense that it is equal to one if and only if there is a channel input state $\rho_B$ such that the output state $\mathcal{N}_{B \to S}(\rho_B)$ is $G$-Bose symmetric extendible.

\begin{theorem}
\label{thm:ChlOut-QIP-GBSE} The promise problem Channel-$G$-BSE is QIP-Complete. 

\begin{enumerate}
\item $\left( \alpha,\beta\right) $-Channel-$G$-BSE is in QIP for all $\beta<\alpha$. (It is implicit that the gap between $\alpha$ and $\beta$ is larger than an inverse polynomial in the input length.)

\item $\left( 1-\varepsilon, \varepsilon\right) $-Channel-$G$-BSE is QIP-Hard, even when $\varepsilon$ decays exponentially in the input length.
\end{enumerate}

\noindent Thus, $\left( \alpha,\beta\right) $-Channel-$G$-BSE is QIP-Complete for all $\left(  \alpha,\beta\right)  $ such that $0\leq \beta<\alpha\leq 1$.
\end{theorem}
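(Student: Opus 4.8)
The plan is to prove the two halves of completeness separately, reusing as black boxes the State-$G$-BSE machinery of Theorem~\ref{thm:QIP2-GBSE} and the channel-input trick of Section~\ref{sec:GBS-QMA}. For containment, I would invoke $\operatorname{QIP}=\operatorname{QIP}(3)$ \cite{kitaev2000parallelization} and build a three-message protocol: in the first message the prover sends a purification $\ket{\psi}_{B\widetilde{B}}$ of a candidate input $\rho_B$; the verifier applies the given dilation $U^{\mathcal{N}}_{BC'\to S'S}$ to obtain, on $\widetilde{B}S'S$, a purification of $\mathcal{N}_{B\to S}(\rho_B)$ with output system $S$; and then the verifier runs the $G$-Bose-symmetric-extendibility test, \cite[Algorithm~3]{LRW22}, on this purification, which costs two further messages exchanged with the prover. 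Each verifier step is efficient, so the whole protocol is in $\operatorname{QIP}(3)=\operatorname{QIP}$. By Theorem~III.3 of \cite{LRW22}, the best acceptance probability for a fixed first message equals $\max_{\sigma_S\in\operatorname{BSE}_G}F(\mathcal{N}_{B\to S}(\rho_B),\sigma_S)$, so maximizing over the prover's first message yields exactly the quantity in \eqref{eq:ch-G-BSE-sym-meas-yes}; completeness and soundness then follow from the promise gap, with error reduction inherited from QIP amplification.

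For hardness I would reduce directly from an arbitrary problem in $\operatorname{QIP}=\operatorname{QIP}(3)$, mirroring the reduction in Section~\ref{sec:GBSE-QIP2} but with the input-optimization layer of Section~\ref{sec:GBS-QMA} prepended. Write a general $\operatorname{QIP}(3)$ protocol (Figure~\ref{fig:QIP(3)}) as: the prover sends a first message on register $B$; the verifier applies $V^1$ to $\outerproj{x}\otimes\outerproj{0}_A$ and the received register, keeping system $S$ and returning a register $R$ to the prover; the prover applies $P^2$ to $R$ together with private ancillas and returns $R'$; the verifier applies $V^2_{SR'\to DG}$ and accepts on measuring $\ket{1}_D$. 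I would then define the channel by
\begin{equation}
\mathcal{N}_{B\to S}(\cdot)\coloneqq\operatorname{Tr}_R\!\left[V^1\big(\outerproj{x}\otimes\outerproj{0}_A\otimes(\cdot)_B\big)(V^1)^\dagger\right],
\end{equation}
with $R$ playing the role of the channel environment, and take the group to be $C_2$ with representation $\{I_{R'S},V_{R'S}\}$, where $V_{R'S}\coloneqq (V^2_{SR'\to DG})^\dagger(-Z_D\otimes I_G)V^2_{SR'\to DG}$, exactly as in the proof of Theorem~\ref{thm:QIP2-GBSE}; the associated projector is $\Pi^G_{R'S}=(V^2)^\dagger(\outerproj{1}_D\otimes I_G)V^2$. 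For any fixed input state $\rho_B$, the register $R$ returned to the prover (together with any system the prover kept from its first move) purifies $\mathcal{N}_{B\to S}(\rho_B)$, so the same application of Theorem~III.3 of \cite{LRW22} used in Section~\ref{sec:GBSE-QIP2} shows that the best acceptance probability over the prover's second move equals $\max_{\sigma_S\in\operatorname{BSE}_G}F(\mathcal{N}_{B\to S}(\rho_B),\sigma_S)$. Maximizing over the prover's first message then gives precisely the symmetry measure in \eqref{eq:ch-G-BSE-sym-meas-yes}, and since the map from the protocol description to $(\mathcal{N},\{U_S(g)\}_{g\in G})$ is polynomial-time, this establishes QIP-hardness of $(1-\varepsilon,\varepsilon)$-Channel-$G$-BSE; combined with containment it yields QIP-completeness for all $\left(\alpha,\beta\right)$ with $0\le\beta<\alpha\le 1$.

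The main structural point to get right is that, in a $\operatorname{QIP}(3)$ protocol, the prover moves first, so the first prover message must be absorbed by the channel-input optimization $\max_{\rho_B}$ while the remaining verifier--prover--verifier pattern is what gets encoded into the $G$-BSE test; one has to check that Uhlmann's theorem (via Theorem~III.3 of \cite{LRW22}) still applies when $\rho_B$ is mixed, which it does because the returned register plus the prover's retained register jointly form the purifying system on which the prover acts. Beyond this, the argument is a direct composition of the two already-established reductions, so I do not expect further obstacles; the only remaining bookkeeping is the standard amplification of the promise gap, handled exactly as in the QIP(2) case.
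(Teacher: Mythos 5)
Your proposal is correct and follows essentially the same route as the paper's proof: containment via a three-message protocol combining an extra prover message for the channel input with \cite[Algorithm~3]{LRW22}, and hardness by encoding the final verifier unitary of a generic QIP(3) protocol into a $C_2$ representation $V=(V^2)^\dagger(-Z_D\otimes I_G)V^2$ while absorbing the first prover message into the $\max_{\rho_B}$ optimization, with Theorem~III.3 of \cite{LRW22} supplying the fidelity identification. The structural point you flag about the returned register plus the prover's retained ancilla serving as the purifying system is exactly the observation the paper relies on.
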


\begin{proof}
To show that the problem is QIP-Complete, we need to demonstrate two facts: first, that the problem is in QIP, and second, that it is QIP-Hard. Let us begin by proving that the problem is in QIP. In our previous work \cite[Algorithm 3]{LRW22}, we proposed an algorithm to test for $G$-Bose Symmetric Extendibility of a state $\rho_S$ given a circuit description of unitary that generates a purification of the state and circuit descriptions of a unitary  representation $\{U_{RS}(g)\}_{g \in G}$ of a group $G$. By inspection, the algorithm can be executed efficiently given two messages exchanged with an all-powerful prover. The optimal input state to the channel is sent by another message of the prover, thus adding up to three messages in total. As such, the algorithm is clearly in QIP.

To show that the problem is QIP-Hard, we map an arbitrary QIP problem to an instance $(\mathcal{N}, \{U_{RS}(g)\}_{g \in G})$ of Channel-$G$-BSE. Since $\text{QIP(3)} \equiv \text{QIP}$ \cite{kitaev2000parallelization}, our goal is to find a correspondence between an arbitrary QIP(3) protocol and a choice of channel and group, $(\mathcal{N}, \{U_{RS}(g)\}_{g \in G})$. Specifically, from the circuit descriptions for a QIP(3) algorithm, we will identify a channel $\mathcal{N}_{R''\to S'}$ and a unitary representation $\{V_{R'S'}(g)\}_{g\in G}$ corresponding to a $G$-BSE problem, and we will show how the symmetry-testing condition
\begin{equation}
    \max\limits_{\substack{\rho_{R''} \in \mathcal{D}(\mathcal{H}_{R''}), \\\sigma_{S'} \in \operatorname{BSE}_G}} F(\mathcal{N}_{R'' \to S'}(\rho_{R''}), \sigma_{S'})
\end{equation}
can be written in terms of the QIP(3) algorithm's acceptance probability.

To begin, recall that an arbitrary QIP(3) problem consists of three messages exchanged and involves a first prover unitary $P^1_{E'' \to R''E}$, a first verifier unitary $U^1_{SAR'' \to S'R}$, a second prover unitary $P^2_{RE \to R'E'}$, and a second verifier unitary $U^2_{S'R' \to DG}$, where $D$ is the decision qubit. (Here we leave the dependence of the prover unitaries on $x$ to be implicit.)
The acceptance probability is thus,
\begin{multline}
    p_{\operatorname{acc}} = \max\limits_{\substack{P^1_{E'' \to R''E}, \\
    P^2_{RE \to R'E'}}} \big\Vert (\bra{1}_D \otimes I_{GE'}) U^2_{S'R' \to DG} P^2_{RE \to R'E'} \\ 
    \times  U^1_{SAR'' \to S'R} P^1_{E'' \to R''E} \ket{x}_{S} \ket{0}_{A} \ket{0}_{E''} \big\Vert^2_2.
\end{multline}
Defining the first state after the action of the prover's unitary $P^1$ to be
\begin{equation}
    \ket{\psi}_{R''E} \coloneqq P^1_{E'' \to R''E} \ket{0}_{E''} ,
\end{equation}
and the isometry
\begin{equation}
    W_{R'' \to S'R} \coloneqq U^1_{SAR'' \to S'R}  \ket{x}_{S} \ket{0}_{A},
\end{equation}
the acceptance probability can then be written as 
\begin{align}
    p_{\operatorname{acc}} & = \max\limits_{\psi_{R''E},P^2_{RE \to R'E'}} \operatorname{Tr} [(\outerproj{1}_D \otimes I_{E'G}) U^2  \notag \\ 
    & \qquad \times P^2 W \psi_{R''E}  {W}^\dagger {P^2}^\dagger {U^2}^\dagger ]\notag \\
    & = \max\limits_{\psi_{R''E},P^2_{RE \to R'E'}} \operatorname{Tr} [{U^2}^\dagger(\outerproj{1}_D \otimes I_{E'G}) U^2  \notag \\
    & \qquad \times P^2 W \psi_{R''E}  {W}^\dagger {P^2}^\dagger  ] ,
    \label{eq:QIP-accep-prob}
\end{align}
where we have used cyclicity of trace in the last line. We can also identify the aforementioned channel $\mathcal{N}_{R''\to S'}$  as follows:
\begin{equation}
    \mathcal{N}_{R''\to S'}(\cdot) \coloneqq 
    \Tr_{R} [W (\cdot)_{R''}  W^\dagger].
\end{equation}

Motivated by the expression in \eqref{eq:QIP-accep-prob}, we pick the group~$G$ to be $C_2$ with unitary representation $\{I_{R'S'},V_{R'S'}\}$, where 
\begin{equation}
    V_{R'S'} \coloneqq (U^2_{R'S' \to DG})^\dagger \left(- Z_D \otimes I_G \right) U^2_{R'S' \to DG}.
\end{equation}
We note that $V^2_{RS} = I_{RS}$, proving that this is indeed a representation of $C_2$. The resulting group projection is then
\begin{align}
    \Pi_{R'S'}^G &= \frac{1}{2}(I_{R'S'} + V_{R'S'}) \notag\\ 
    &= \frac{1}{2} \left( {U^2}^{\dag}(I_D \otimes I_G)U^2 + {U^2}^{\dag} \left( -Z_D \otimes I_G \right) U^2 \right) \notag\\
    &=\frac{1}{2} \left( {U^2}^{\dag} \left( I_{DG} - \left( Z_D \otimes I_G \right) \right) U^2\right) \notag\\
    &= {U^2}^{\dag} (\outerproj{1}_D \otimes I_G)U^2,
\end{align}
which is precisely the acceptance projection in \eqref{eq:QIP-accep-prob}. That is,
\begin{equation}
    p_{\operatorname{acc}} = \max\limits_{\substack{\psi_{R''E}, \\ P^2_{RE \to R'E'}}} \operatorname{Tr} [\Pi_{R'S'}^G P^2 W \psi_{R''E}  {W}^\dagger {P^2}^\dagger  ]
\end{equation}
Now invoking \cite[Theorem III.3]{LRW22}, we conclude that
\begin{align}
    p_{\operatorname{acc}} & = \max\limits_{\psi_{R''E},P^2_{RE \to R'E'}} \operatorname{Tr} [\Pi_{R'S'}^G P^2 W \psi_{R''E}  {W}^\dagger {P^2}^\dagger  ] \notag \\
    & = \max\limits_{\psi_{R''E},P^2_{RE \to R'E'}} \left\Vert \Pi_{R'S'}^G P^2 W \ket{\psi}_{R''E}    \right\Vert_2^2 \notag \\
    &  = \max\limits_{\substack{\rho_{R''} \in \mathcal{D}(\mathcal{H}_{R''}), \\\sigma_{S'} \in \operatorname{BSE}_G}} F(\mathcal{N}_{R'' \to S'}(\rho_{R''}), \sigma_{S'}),
\end{align}
where in this case $ \operatorname{BSE}_G$ is defined in the same way as in~\eqref{eq:G-BSE-states-set-QIP2}. To help visualize the reduction, 
the $G$-Bose symmetric extendibility test corresponding to a general QIP algorithm is depicted in Figure~\ref{fig:QIP-Chlout-GBSE}.

\begin{figure*}
    \centering
    \includegraphics[width=0.9\linewidth]{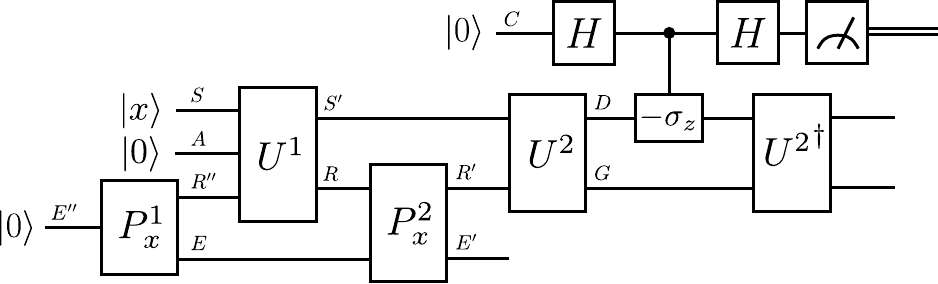}
    \caption{Circuit to map an arbitrary QIP algorithm to a $G$-Bose symmetric extendibility test on the output of a channel.}
\label{fig:QIP-Chlout-GBSE}
\end{figure*}

Thus, the acceptance probability of the QIP algorithm now exactly matches the symmetry-testing condition of the constructed $G$-BSE problem. As such, any QIP problem can be efficiently mapped to testing $G$-BSE of the output of a channel, proving that the problem Channel-$G$-BSE is QIP-Hard. Along with the fact that the problem lies in the QIP class, this concludes the proof.
\end{proof}

\subsection{Testing Hamiltonian Symmetry Using Maximum Spectral Norm is in QMA}
\label{sec:HS-QMA}

In this section, we show that testing whether a Hamiltonian is symmetric with respect to a group representation and the maximum spectral norm is in QMA. In particular, we consider the following task: given a group~$G$ with unitary representation $\{U(g)\}_{g\in G}$, a time $t \in \mathbb{R}$, and a classical description of a local or sparse Hamiltonian $H$, estimate the following quantity:
\begin{equation}
    \max_{g \in G} \left\Vert [U(g), e^{-iHt}] \right\Vert^2_\infty,
    \label{eq:max-spec-norm-Ham}
\end{equation}
where the spectral norm of a matrix $A$ is defined as
\begin{equation}  
\left \Vert A \right \Vert_\infty \coloneqq \sup_{\ket{\psi} \in \mathcal{H} } \left\{\left \Vert A \ket{\psi} \right \Vert_2 : \left \Vert  \ket{\psi} \right \Vert_2 = 1\right\}.
\end{equation}
The quantity in \eqref{eq:max-spec-norm-Ham} is a faithful measure of asymmetry in the following sense:
\begin{align}
      \max_{g \in G} \left\Vert [U(g), e^{-iHt}] \right\Vert^2_\infty  & =  0 \quad \forall t \in (-\delta,\delta), \notag  \\
     \Leftrightarrow  \qquad
     [U(g), e^{-iHt}]   & =  0 \quad\forall g \in G, t \in (-\delta,\delta), \notag \\
      \Leftrightarrow \qquad
     [U(g), H]   & =  0 \quad\forall g \in G, 
\end{align}
where $\delta > 0$. The first equivalence follows from faithfulness of the spectral norm, and the second equivalence follows by taking the derivative of the second line at $t=0$.

\begin{problem}[Ham-Sym-Max-Spec]
\label{prob:QMA-HamSym-test} Let $\alpha$ and $\beta$ be such that $0\leq\beta<\alpha\leq 2$, and fix $t \in \mathbb{R}$. Given are circuit descriptions of a unitary representation $\{U(g)\}_{g \in G}$ of a group~$G$ and a classical description of a $k$-local or sparse Hamiltonian~$H$. Decide which of the following holds:
\begin{align}
\textit{Yes:} & \quad \max_{g \in G} \left\Vert [U(g), e^{-iHt}] \right\Vert^2_\infty \geq \alpha,\\
\textit{No:} & \quad \max_{g \in G} \left\Vert [U(g), e^{-iHt}] \right\Vert^2_\infty \leq \beta,
\end{align}
\end{problem}

In what follows, we show that Ham-Sym-Max-Spec is in QMA, and it remains an interesting open question to determine whether this problem is QMA-Hard or hard for some other complexity class.

\begin{theorem}
\label{thm:QMA-HamSym-test} The promise problem Ham-Sym-Max-Spec is in QMA.
\end{theorem}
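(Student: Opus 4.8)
The plan is to place Ham-Sym-Max-Spec in QMA by a protocol in which the prover supplies a single group element together with a single quantum witness, and the verifier runs a Hadamard test assisted by a Hamiltonian-simulation subroutine. The starting point is an algebraic reformulation of \eqref{eq:max-spec-norm-Ham}: for each $g\in G$, both $A_g \coloneqq U(g)\,e^{-iHt}$ and $B_g \coloneqq e^{-iHt}\,U(g)$ are unitary, so for every state $\ket{\psi}$,
\[
\bigl\Vert [U(g),e^{-iHt}]\ket{\psi}\bigr\Vert_2^2 = \bigl\Vert (A_g-B_g)\ket{\psi}\bigr\Vert_2^2 = 2 - 2\operatorname{Re}\bra{\psi}M_g\ket{\psi},
\]
where $M_g \coloneqq A_g^\dagger B_g = e^{iHt}\,U(g)^\dagger\,e^{-iHt}\,U(g)$ is unitary. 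Taking the supremum over $\ket{\psi}$ and then the maximum over $g$ shows that a Yes instance is exactly the assertion that there exist $g\in G$ and a state $\ket{\psi}$ with $\operatorname{Re}\bra{\psi}M_g\ket{\psi}\le 1-\alpha/2$, while a No instance guarantees $\operatorname{Re}\bra{\psi}M_g\ket{\psi}\ge 1-\beta/2$ for all $g$ and all $\ket{\psi}$, i.e., $\lambda_{\min}\!\bigl((M_g+M_g^\dagger)/2\bigr)\ge 1-\beta/2$ for every $g$.

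The protocol then reads as follows. The prover sends a polynomial-length classical encoding of a group element $g$ and a quantum state $\ket{\psi}$ on the system register. The verifier measures the $g$-register in the computational basis (so $g$ may be taken classical) and runs a Hadamard test for $\operatorname{Re}\bra{\psi}M_g\ket{\psi}$: prepare an ancilla in $\ket{+}$, apply a controlled-$M_g$, apply a Hadamard to the ancilla, measure it, and accept on outcome $1$, which occurs with probability $\tfrac12\bigl(1-\operatorname{Re}\bra{\psi}M_g\ket{\psi}\bigr)$. The controlled-$M_g$ gate is built from the supplied circuits for $U(g)$ and $U(g)^\dagger$ together with controlled simulations of $e^{\mp iHt}$; since $H$ is $k$-local or sparse with a classical description and $|t|$ is bounded, each such simulation is implementable up to spectral-norm error $\varepsilon$ by a circuit of size polynomial in the system size, $|t|$, the locality/sparsity, and $1/\varepsilon$, via a standard Hamiltonian-simulation routine.

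For completeness, in a Yes instance the prover sends an optimal $g^\ast$ and the bottom eigenvector of $(M_{g^\ast}+M_{g^\ast}^\dagger)/2$, giving acceptance probability at least $\alpha/4 - O(\varepsilon)$; for soundness, in a No instance every prover message and every computational-basis outcome $g$ (with its residual, possibly mixed, witness) yields acceptance probability at most $\beta/4 + O(\varepsilon)$ by the uniform bound above. Choosing $\varepsilon$ to be a small multiple of the (inverse-polynomial) promise gap $\alpha-\beta$ leaves an inverse-polynomial separation between the completeness and soundness thresholds, and standard QMA gap amplification then boosts this to the canonical $(2/3,1/3)$, so the problem lies in QMA. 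I expect the only mildly delicate points to be: confirming that a single copy of $\ket{\psi}$ suffices — it does, since the Hadamard test returns a Bernoulli sample whose mean is exactly $\tfrac12(1-\operatorname{Re}\bra{\psi}M_g\ket{\psi})$, requiring no tomography; balancing the Hamiltonian-simulation error against the merely inverse-polynomial gap; and arguing soundness against a dishonest prover who entangles the $g$-register or the witness with an unused side system, which is handled by measuring the $g$-register first and appealing to the uniform worst-case bound. None of these is a serious obstruction, consistent with the paper leaving the hardness of this problem open.
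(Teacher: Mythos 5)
Your proposal is correct and follows essentially the same route as the paper: the prover supplies a group element (made classical by a computational-basis measurement) together with a witness state, and the verifier runs a Hadamard-type test whose acceptance probability equals $\tfrac{1}{4}\left\Vert [U(g),e^{-iHt}]\ket{\psi}\right\Vert_2^2$, so that maximizing over the prover's message yields $\tfrac{1}{4}\max_{g\in G}\left\Vert [U(g),e^{-iHt}]\right\Vert_\infty^2$. The only cosmetic difference is that the paper interleaves \emph{uncontrolled} evolutions $e^{\pm iHt}$ with controlled-$U(g)^{\dag}$ and controlled-$U(g)$ (avoiding controlled Hamiltonian simulation), whereas you use a fully controlled $M_g$; your explicit treatment of the simulation error and gap amplification is a welcome addition but does not change the argument.
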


\begin{proof}
\begin{figure}
    \centering
    \includegraphics[width=1.0\columnwidth]{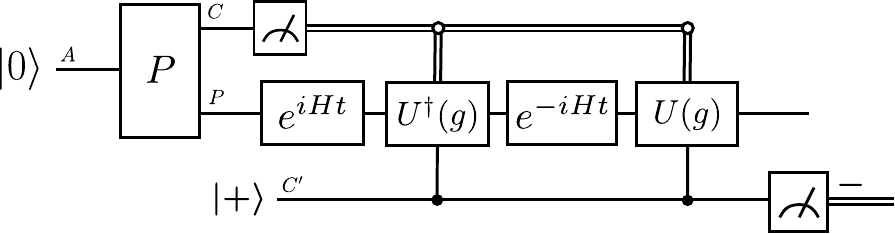}
    \caption{Circuit depicting a QMA test for Hamiltonian symmetry with respect to a group, where it is understood that the unitary $P$ is implemented by an all-powerful prover. The final measurement is in the Hadamard basis, and the algorithm accepts if the $\ket{-}$ outcome occurs.}
    \label{fig:ham-sym-max-spec}
\end{figure}
Consider the following steps of a QMA interactive proof (see Figure~\ref{fig:ham-sym-max-spec}):
\begin{enumerate}
    \item The prover sends a state in registers $C$ and $P$, with the dimension of $C$ being equal to $|G|$ and the dimension of $P$ being equal to the dimension of $H$.
    \item The verifier measures the register $C$ and obtains the outcome $g\in G$.
    \item The verifier adjoins a qubit $C'$ in the state $|+\rangle$, performs the Hamiltonian evolution $e^{iHt}$, the controlled unitary $|0\rangle\!\langle 0| \otimes I + |1\rangle\!\langle 1| \otimes U^\dag(g) $, the Hamiltonian evolution $e^{-iHt}$, and the controlled unitary $|0\rangle\!\langle 0| \otimes I + |1\rangle\!\langle 1| \otimes U(g) $.
    \item The verifier measures the qubit $C'$ in the Hadamard basis $\{|+\rangle,|-\rangle\}$ and accepts if the outcome $|-\rangle$ occurs.
\end{enumerate}
As noted in the previous section, there exist multiple methods to realize an efficient circuit for the Hamiltonian evolutions $e^{-iHt}$ and $e^{iHt}$ (see \cite{childs2018toward} and references therein). We also note that there are some similarities, as well as key differences, between this algorithm and that given in Figure~3 of \cite{pg2021exponential}.

Let us now analyze the acceptance probability of this interactive proof. It suffices for the prover to send a pure state, as this maximizes the acceptance probability. Let us expand a fixed pure state $|\psi\rangle_{CP}$ of registers $C$ and $P$ as follows:
\begin{equation}
    |\psi\rangle_{CP} = \sum_{g \in G} \sqrt{p(g)} |g\rangle_C |\psi_g\rangle_P,
\end{equation}
where $\{p(g)\}_g$ is a probability distribution and $\{|\psi_g\rangle_P\}_g$ is a set of states. After the verifier's measurement in Step~2, the probability of obtaining outcome $g\in G$ is $p(g)$ and the post-measurement state of register $P$ is $|\psi_g\rangle_P$. Conditioned on the outcome $g$ and defining the unitary $W(g,t) \equiv U(g)e^{-iHt} U^\dag(g)e^{iHt}$, the acceptance probability of Steps~3-4 is then given by
\begin{multline}
     \left\Vert (\bra{-}_{C'} \otimes I_P) \frac{1}{\sqrt{2}} (|0\rangle_{C'} |\psi_g\rangle_P +|1\rangle_{C'} W(g,t) |\psi_g\rangle_P) \right\Vert_2^2  \\
     = \frac{1}{4} \left\Vert  (I - W(g,t) )|\psi_g\rangle_P \right\Vert_2^2.
\end{multline}
Thus, for a fixed state $ |\psi\rangle_{CP}$ of the prover, the acceptance probability is given by
\begin{equation}
    \frac{1}{4} \sum_{g\in G} p(g) \left\Vert  (I - W(g,t) )|\psi_g\rangle_P \right\Vert_2^2,
\end{equation}
and finally maximizing over all such states leads to the following expression for the acceptance probability:
\begin{align}
    & \max_{|\psi\rangle_{CP}} \frac{1}{4} \sum_{g\in G} p(g) \left\Vert  (I - W(g,t) )|\psi_g\rangle_P \right\Vert_2^2 \notag \\
    & = \frac{1}{4} \max_{\substack{\{p(g)\}_g, \\ \{|\psi_g\rangle_P\}_g}}  \sum_{g\in G} p(g) \left\Vert  (I - W(g,t) )|\psi_g\rangle_P \right\Vert_2^2 \notag \\
    & = \frac{1}{4} \max_{\{p(g)\}_g  }  \sum_{g\in G} p(g) \max_{\{|\psi_g\rangle_P\}_g} \left\Vert  (I - W(g,t) )|\psi_g\rangle_P \right\Vert_2^2 \notag \\
    & = \frac{1}{4} \max_{\{p(g)\}_g  }  \sum_{g\in G} p(g)  \left\Vert  I - W(g,t)  \right\Vert_\infty^2 \notag \\
    & = \frac{1}{4} \max_{g \in G}   \left\Vert  I - W(g,t) \right\Vert_{\infty}^2 \notag \\
    & = \frac{1}{4} \max_{g \in G}   \left\Vert  I - U(g)e^{-iHt} U^\dag(g)e^{iHt} \right\Vert_{\infty}^2 \notag \\
    & = \frac{1}{4} \max_{g \in G}   \left\Vert  e^{-iHt} U(g) - U(g)e^{-iHt}  \right\Vert_{\infty}^2 \notag \\
    & = \frac{1}{4} \max_{g \in G} \left\Vert [U(g), e^{-iHt}] \right\Vert^2_\infty. \label{eq:QMA-block-eq}
\end{align}
The third equality follows from the definition of the spectral norm.
The fourth equality follows because the optimal distribution is a point mass on the largest value of $\left\Vert  I - W(g,t) \right\Vert_\infty^2$.  The penultimate equality follows from unitary invariance of the spectral norm. In light of the above analysis, the best strategy of the prover is to compute $\max_{g \in G} \left\Vert [U(g), e^{-iHt}] \right\Vert^2_\infty$ in advance, send the maximizing value of $g$ in register $C$, and send the corresponding state that achieves the spectral norm in register $P$. As the acceptance probability of this QMA interactive proof is precisely related to the decision criteria in Problem~\ref{prob:QMA-HamSym-test}, this concludes the proof.
\end{proof}

\subsection{Testing Hamiltonian Symmetry Using Average Spectral Norm is in QAM}
\label{sec:HS-QAM}

In this section, we show that testing whether a Hamiltonian is symmetric with respect to a group representation and the average spectral norm is in QAM. In particular, we consider the following task: given a group $G$ with unitary representation $\{U(g)\}_{g\in G}$, a time $t \in \mathbb{R}$, and a classical description of a local or sparse Hamiltonian $H$, estimate the following quantity:
\begin{equation}
    \frac{1}{|G|} \sum_{g \in G} \left\Vert [U(g), e^{-iHt}] \right\Vert^2_\infty. 
\end{equation}
This is a faithful measure of symmetry in the following sense:
\begin{align}
      \frac{1}{|G|} \sum_{g \in G} \left\Vert [U(g), e^{-iHt}] \right\Vert^2_\infty  & =  0 \quad \forall t \in (-\delta,\delta), \notag  \\
     \Leftrightarrow  \qquad
     [U(g), e^{-iHt}]   & =  0 \quad\forall g \in G, t \in (-\delta,\delta), \notag \\
      \Leftrightarrow \qquad
     [U(g), H]   & =  0 \quad\forall g \in G, 
\end{align}
where $\delta > 0$. The first equivalence follows from faithfulness of the spectral norm, and the second equivalence follows by taking the derivative of the second line at $t=0$.

\begin{problem}[Ham-Sym-Avg-Spec]
\label{prob:QAM-HamSym-test} Let $\alpha$ and $\beta$ be such that $0\leq\beta<\alpha\leq \gamma$, where $\gamma$ is defined in \eqref{eq:gamma-bnd}, and fix $t \in \mathbb{R}$. Given are circuit descriptions of a unitary representation $\{U(g)\}_{g \in G}$ of a group~$G$ and a classical description of a $k$-local or sparse Hamiltonian~$H$. Decide which of the following holds:
\begin{align}
\textit{Yes:} & \quad \frac{1}{|G|} \sum_{g \in G} \left\Vert [U(g), e^{-iHt}] \right\Vert^2_\infty \geq \alpha,\\
\textit{No:} & \quad \frac{1}{|G|} \sum_{g \in G} \left\Vert [U(g), e^{-iHt}] \right\Vert^2_\infty \leq \beta,
\end{align}
\end{problem}

In what follows, we show that Ham-Sym-Avg-Spec is in QAM, and it remains an interesting open question to determine whether this problem is QAM-Hard or hard for some other complexity class.

\begin{theorem}
\label{thm:QAM-HamSym-test} The promise problem Ham-Sym-Avg-Spec is in QAM.
\end{theorem}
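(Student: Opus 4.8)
The plan is to combine the QMA protocol from the proof of Theorem~\ref{thm:QMA-HamSym-test} with the shared randomness available in a QAM protocol: instead of having the prover transmit the group element~$g$ in a register that the verifier measures, we let the shared random string~$y$ determine~$g$ directly, after which the prover need only supply the state that saturates the spectral norm for that particular~$g$.

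First I would set up the sampling of~$g$ from the shared randomness. Recalling the QAM definition given above, the verifier and prover share a uniformly random bitstring~$y$ of $r(n)$ bits with $|\mathcal{Y}|=2^{r(n)}$. Choosing $r(n)$ to be a sufficiently large polynomial and fixing a map that partitions $\{0,1\}^{r(n)}$ into $|G|$ blocks of size $\lfloor 2^{r(n)}/|G|\rfloor$ or $\lceil 2^{r(n)}/|G|\rceil$ and assigns the $j$-th block to the $j$-th element of~$G$, the induced distribution on~$G$ is within total variation distance $|G|\,2^{-r(n)}$ of uniform, which is negligible; when $|G|$ is a power of two this step is exact. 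Henceforth I treat~$y$ as furnishing a (nearly) uniformly random $g\in G$ known to both parties. Conditioned on~$g$, the prover sends a state $\sigma_P^y=\outerproj{\psi_g}_P$ of a register~$P$ of dimension equal to that of~$H$, and the verifier performs Steps~3--4 of the proof of Theorem~\ref{thm:QMA-HamSym-test} on~$P$: adjoin an ancilla qubit $C'$ in the state $\ket{+}$, apply $e^{iHt}$, the controlled unitary $\outerproj{0}_{C'}\otimes I_P + \outerproj{1}_{C'}\otimes U^\dag(g)_P$, then $e^{-iHt}$, then $\outerproj{0}_{C'}\otimes I_P + \outerproj{1}_{C'}\otimes U(g)_P$, and finally measure $C'$ in the Hadamard basis, accepting on outcome~$\ket{-}$ (recast as a decision qubit in the usual way). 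This is efficient because the circuits $\{U(g)\}_{g\in G}$ are given and $e^{\pm iHt}$ admits an efficient circuit for a $k$-local or sparse~$H$.

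Exactly as in~\eqref{eq:QMA-block-eq}, the acceptance probability conditioned on~$g$ and the prover's state is $\tfrac14\left\Vert (I-W(g,t))\ket{\psi_g}_P\right\Vert_2^2$ with $W(g,t)\coloneqq U(g)e^{-iHt}U^\dag(g)e^{iHt}$, which an optimal prover state saturates at $\tfrac14\left\Vert I-W(g,t)\right\Vert_\infty^2=\tfrac14\left\Vert[U(g),e^{-iHt}]\right\Vert_\infty^2$ by unitary invariance of the spectral norm. Averaging over the shared randomness via the law of total probability, precisely as in the QAM acceptance condition~\eqref{eq:QAM_completeness}, the overall acceptance probability equals $\tfrac{1}{4|G|}\sum_{g\in G}\left\Vert[U(g),e^{-iHt}]\right\Vert_\infty^2$ up to the negligible deviation noted above; this lies in $[0,1)$ and hence is a legitimate acceptance probability. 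Since the QAM completeness condition permits the prover's state to depend on~$y$ (hence on~$g$), a Yes instance of Problem~\ref{prob:QAM-HamSym-test} is accepted with probability at least $\alpha/4$ minus a negligible term, while for a No instance the averaged acceptance probability is at most $\beta/4$ plus a negligible term for every choice of prover states. Rescaling the thresholds to $\alpha/4$ and $\beta/4$ and applying standard QAM error reduction to boost the gap to $(2/3,1/3)$ establishes that Ham-Sym-Avg-Spec is in QAM.

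The only point requiring care — more a technical nuisance than a genuine obstacle — is the non-uniform sampling of~$G$ when $|G|$ is not a power of two; the block-partition argument above handles it, and one simply checks that the resulting exponentially small perturbation of the acceptance probability cannot close the inverse-polynomial promise gap. The conceptual content is entirely inherited from the QMA case: shared randomness replaces the "send $\ket\psi_{CP}$ then measure~$C$" trick, which is exactly the relaxation that separates QAM from QMA.
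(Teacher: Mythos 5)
Your proposal is correct and follows essentially the same route as the paper's proof: the shared randomness selects $g\in G$ uniformly, the prover supplies a $g$-dependent state, and the verifier runs the same controlled-$W(g,t)$ interferometric test, yielding acceptance probability $\tfrac{1}{4|G|}\sum_{g\in G}\Vert[U(g),e^{-iHt}]\Vert_\infty^2$ exactly as in \eqref{eq:QMA-block-eq}. Your additional care about converting a uniform bitstring into a (nearly) uniform element of $G$ when $|G|$ is not a power of two is a detail the paper leaves implicit, but it does not change the argument.
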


\begin{proof}
\begin{figure}
    \centering
    \includegraphics[width=1.0\columnwidth]{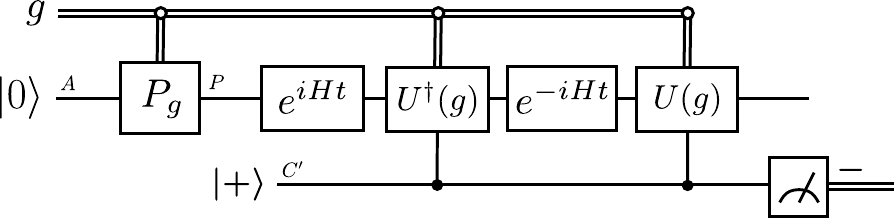}
    \caption{Circuit depicting a QAM test for Hamiltonian symmetry with respect to a group, where it is understood that the unitary $P$ is implemented by an all-powerful prover. The final measurement is in the Hadamard basis, and the algorithm accepts if the $\ket{-}$ outcome occurs.}
    \label{fig:ham-sym-avg-spec}
\end{figure}
Consider the following steps of a QAM interactive proof (see Figure~\ref{fig:ham-sym-avg-spec}):
\begin{enumerate}
    \item The verifier and prover are given a value $g \in G$ chosen uniformly at random.
    \item The prover prepares a state $\ket{\psi_g}$ in register $P$, which depends on the value $g$ and which has dimension equal to that of $H$.
    \item The verifier adjoins a qubit $C'$ in the state $|+\rangle$, performs the Hamiltonian evolution $e^{iHt}$, the controlled unitary $|0\rangle\!\langle 0| \otimes I + |1\rangle\!\langle 1| \otimes U^\dag(g) $, the Hamiltonian evolution $e^{-iHt}$, and the controlled unitary $|0\rangle\!\langle 0| \otimes I + |1\rangle\!\langle 1| \otimes U(g) $.
    \item The verifier measures the qubit $C'$ in the Hadamard basis $\{|+\rangle,|-\rangle\}$ and accepts if the outcome $|-\rangle$ occurs.
\end{enumerate}

Let us now analyze the acceptance probability of this interactive proof. We define the set of states $\{\ket{\psi_g} = P_g\ket{0}\}_{g \in G}$. 
Conditioned on the value $g$, the prover's state is $\ket{\psi_g}$, and defining the unitary $W(g,t) \equiv U(g)e^{-iHt} U^\dag(g)e^{iHt}$, the acceptance probability of Steps~3-4 is then given by
\begin{multline}
      \left\Vert (\bra{-}_{C'} \otimes I_P) \frac{1}{\sqrt{2}} (|0\rangle_{C'} |\psi_g\rangle_P +|1\rangle_{C'} W(g,t) |\psi_g\rangle_P) \right\Vert_2^2  \\
     = \frac{1}{4} \left\Vert  (I - W(g,t) )|\psi_g\rangle_P \right\Vert_2^2.
\end{multline}
Thus, for a fixed set $\{P_g\}_g$ of prover unitaries and averaging over the shared uniform randomness, the acceptance probability is given by
\begin{equation}
    \frac{1}{4|G|} \sum_{g\in G} \left\Vert  (I - W(g,t) )|\psi_g\rangle_P \right\Vert_2^2.
\end{equation}
Finally maximizing over all such prover unitaries leads to the following expression for the acceptance probability:
\begin{align}
    & \max_{\{P_g\}_g} \frac{1}{4|G|} \sum_{g\in G} \left\Vert  (I - W(g,t) )|\psi_g\rangle_P \right\Vert_2^2 \notag \\
    & = \frac{1}{4|G|} \sum_{g \in G} \left\Vert  I - W(g,t) \right\Vert_{\infty}^2 \notag \\
    & = \frac{1}{4|G|} \sum_{g \in G} \left\Vert [U(g), e^{-iHt}] \right\Vert^2_\infty,
\end{align}
where the reasoning is the same as that in \eqref{eq:QMA-block-eq}.
As the acceptance probability of this QAM interactive proof is precisely related to the decision criteria in Problem~\ref{prob:QAM-HamSym-test}, this concludes the proof.
\end{proof}

\section{Conclusion}

\label{sec:conclusion}

In summary, we established the computational complexity of various symmetry-testing problems. In particular, we showed that the various problems are complete for BQP, QMA, QSZK, QIP(2), \qipeb, and QIP, encompassing much of the known suite of quantum interactive proof models. In some cases, we employed the interactive proof algorithms from \cite{LRW22} to establish containment, and in other cases, we devised new algorithms. We proved hardness results by embedding various circuits involved in a given computation into the preparation of a state or channel or into a unitary representation of a group. Finally, we introduced two Hamiltonian symmetry-testing problems and proved that they are contained in QMA and QAM. 

Going forward from here, there are several directions to consider:
\begin{itemize}
    \item Let us observe that several key resources such as entanglement and distinguishability have been connected to the quantum interactive proof hierarchy, through the findings of \cite{HMW13,GHMW15} and \cite{kitaev2000parallelization,W02QIP,RW05,W09zkqa,HMW13,RASW23}, respectively. Our work makes a nontrivial link between this hierarchy and asymmetry, another key resource. These connections make us wonder whether other resources in quantum mechanics, such as coherence, magic, athermality, etc.~\cite{CG18}, can be linked with the same hierarchy.
    \item We are curious whether the two aforementioned Hamiltonian symmetry-testing problems could be shown to be complete for QMA and QAM, respectively, or complete for some other quantum complexity class of interest.
    \item Several multipartite separability problems were identified in \cite{philip2023quantum} and related to a quantum interactive proof setting in which there is a prover who performs a measurement, sends the classical outcome to multiple provers, who then send states to the verifier. One could thus try to find a symmetry-testing problem that is complete for this class.
    \item Various quantum algorithms for testing symmetries of channels, measurements, and Lindbladians under the Hilbert--Schmidt norm were proposed recently in \cite{bandyopadhyay2023efficient}. One could attempt to show that corresponding symmetry-testing problems are complete for BQP.
\end{itemize} 

\noindent \textbf{Data availability statement:} No data was generated for this paper.

\noindent \textbf{Competing interests:}  The authors declare there are no competing interests.

\begin{acknowledgments}
We acknowledge the guiding role that Professor A.~Ravi~P.~Rau has played in our academic lives, through many influential scientific discussions and interactions. We take this occasion to dedicate our  paper, in which symmetry has played an essential role, to Prof.~Rau. 

We thank Aby Philip and Vishal Singh for discussions.
MLL acknowledges support from the DoD SMART Scholarship program. She also acknowledges support from NSF Grant No.~2315398 for a visit to Cornell University during November 2022, when this work was initiated. 
MMW and SR acknowledge support from NSF Grant No.~2315398.
\end{acknowledgments}

\bibliography{Ref}

\appendix

\section{Error and Number of Samples in State-HS-Symmetry}

\label{app:error-mixed-state-HS-symmetry}

In Theorem~\ref{thm:BQP-Sym-HS}, we proved that the problem State-HS-Symmetry is BQP-Complete. In this section, we discuss the number of samples required to obtain the desired accuracy and confidence. To do this, we invoke Hoeffding's bound.

\begin{lemma}[Hoeffding's Bound \cite{H63}]
\label{lem:hoeffding}
Suppose that we are given $n$ independent samples $Y_1, \ldots, Y_n$ of a bounded random variable $Y$ taking values in the interval $[a,b]$ and having mean $\mu$. Set 
\begin{equation}
    \overline{Y_n} \coloneqq \frac{1}{n} (Y_1 + \cdots +Y_n)
\end{equation}
to be the sample mean. Let $\varepsilon > 0 $ be the desired accuracy, and let $1-\delta$ be the desired success probability, where $\delta \in (0,1)$. Then
\begin{equation}
\label{eq:Hoeff-bound}
\Pr[\vert \overline{Y_n} - \mu \vert \leq \varepsilon] \geq 1-\delta,
\end{equation}
if
\begin{equation}
    n \geq \frac{M^2}{2\varepsilon^2} \ln \!\left( \frac{2}{\delta}\right),
\end{equation}
where $M \coloneqq b -  a$.
\end{lemma}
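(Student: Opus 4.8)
The plan is to establish the bound through the standard exponential-moment (Chernoff) method, reducing everything to a moment-generating-function estimate for a single bounded random variable. First I would symmetrize the deviation event via a union bound, $\Pr[|\overline{Y_n}-\mu| > \varepsilon] \le \Pr[\overline{Y_n}-\mu \ge \varepsilon] + \Pr[\mu - \overline{Y_n} \ge \varepsilon]$, so that it suffices to control each one-sided tail and then check that the combined bound is at most $\delta$ under the stated condition on $n$. (Whether we ask for $|\overline{Y_n}-\mu| \le \varepsilon$ or $< \varepsilon$ in \eqref{eq:Hoeff-bound} is immaterial.)

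For the upper tail, I would introduce the centered variables $X_i := Y_i - \mu$, which have mean zero and range contained in an interval of length $b-a = M$. For any $\lambda > 0$, Markov's inequality applied to $\exp(\lambda \sum_i X_i)$ together with independence gives $\Pr[\sum_i X_i \ge n\varepsilon] \le e^{-\lambda n \varepsilon}\prod_{i=1}^n \mathbb{E}[e^{\lambda X_i}]$. The crucial input is Hoeffding's lemma: any mean-zero random variable $X$ supported on an interval of length $M$ satisfies $\mathbb{E}[e^{\lambda X}] \le e^{\lambda^2 M^2/8}$. I would prove this by bounding $e^{\lambda x}$ on $[a',b']$ above by the chord $\frac{b'-x}{M}e^{\lambda a'} + \frac{x-a'}{M}e^{\lambda b'}$ (convexity of the exponential), taking expectations to annihilate the linear term, and then writing the resulting bound as $\exp(\psi(u))$ with $u = \lambda M$, $p = -a'/M$, and $\psi(u) := \log\!\big((1-p) + p e^{u}\big) - pu$; one checks $\psi(0)=\psi'(0)=0$ and $\psi''(u) \le 1/4$, so Taylor's theorem yields $\psi(u) \le u^2/8$. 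Substituting back gives $\Pr[\sum_i X_i \ge n\varepsilon] \le e^{-\lambda n \varepsilon + n\lambda^2 M^2/8}$, and optimizing over $\lambda$ (namely $\lambda = 4\varepsilon/M^2$) produces the tail bound $e^{-2n\varepsilon^2/M^2}$. The lower tail follows identically after replacing $X_i$ by $-X_i$.

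Combining the two tails yields $\Pr[|\overline{Y_n}-\mu| > \varepsilon] \le 2 e^{-2n\varepsilon^2/M^2}$, and this is at most $\delta$ exactly when $n \ge \frac{M^2}{2\varepsilon^2}\ln(2/\delta)$, which is the claimed sample count. The main technical obstacle is Hoeffding's lemma itself — specifically the uniform estimate $\psi''(u) \le 1/4$, which reduces to the elementary fact that $q(1-q) \le 1/4$ for the reweighted probability $q = q(u) = pe^u/((1-p)+pe^u)$ (since $\psi''(u) = q(u)(1-q(u))$); everything else is a routine application of Markov's inequality, independence, and a one-variable optimization over $\lambda$.
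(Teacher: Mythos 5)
Your argument is correct: it is the standard Chernoff-method proof of Hoeffding's inequality, with the exponential-moment bound $\mathbb{E}[e^{\lambda X}]\leq e^{\lambda^{2}M^{2}/8}$ established via the convexity chord and the estimate $\psi''(u)=q(1-q)\leq 1/4$, and the final optimization over $\lambda$ and the two-sided union bound do yield exactly the stated sample-count condition $n\geq\frac{M^{2}}{2\varepsilon^{2}}\ln(2/\delta)$. The paper does not prove this lemma itself but simply cites Hoeffding's original work, and your derivation is precisely the canonical proof that citation refers to, so there is nothing to reconcile.
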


In the main text, we mapped a general BQP algorithm to State-HS-Symmetry. In a general BQP algorithm, we measure a single qubit called the decision qubit, leading to a random variable $Y$ taking the value $0$ with probability $p_{\operatorname{rej}}$ and the value $1$ with probability $p_{\operatorname{acc}}$, where $p_{\operatorname{acc}}$ is the acceptance probability of the algorithm. We repeat this procedure $n$ times and label the outcomes $Y_1, \ldots, Y_n$. We output the mean
\begin{equation}
    \overline{Y_n} = \frac{1}{n}\left(Y_1 + \cdots + Y_n\right)
\end{equation}
as an estimate for the true value $p_{\operatorname{acc}}$
\begin{equation}
    p_{\operatorname{acc}} = \bra{x}_S \bra{0}_A Q^\dagger (\outerproj{1}_D \otimes I_G) Q \ket{x}_S \ket{0}_A.
\end{equation}
By plugging into Lemma~\ref{lem:hoeffding}, setting
\begin{equation}
\mu=p_{\operatorname{acc}}
\end{equation}
therein, and taking $n$ to satisfy the condition $n \geq \frac{1}{2\varepsilon^2} \ln\! \left(\frac{2}{\delta}\right)$, we can achieve an error $\varepsilon$ and confidence $\delta$ (as defined in \eqref{eq:Hoeff-bound}).

Now, we see from \eqref{eq:mixed-HS-sym-accep-prob} that the modified algorithm has an acceptance probability $1-p_{\operatorname{rej}}^2$, i.e., equal to one minus the square of the original BQP algorithm's rejection probability. In the modified algorithm, we measure the decision qubit, leading to a random variable $Z$ taking value $0$ with probability $p_{\operatorname{rej}}^2$ and the value $1$ with probability $1-p_{\operatorname{rej}}^2$. We repeat the procedure $m$ times and label the outcomes $Z_1, \ldots, Z_m$. We output the mean 
\begin{equation}
    \overline{Z}_m = 1 - \frac{1}{m} \left(Z_1 + \cdots + Z_m \right)
\end{equation}
as an estimate for the true value $p_{\operatorname{rej}}^2$. Setting $\tilde{\mu} = p_{\operatorname{rej}}^2$, and plugging into Lemma~\ref{lem:hoeffding}, it follows that
\begin{equation}
\Pr[\vert \overline{Z}_m - \tilde{\mu} \vert \leq \varepsilon^2] \geq 1-\delta,
\end{equation}
if
\begin{equation}
    m \geq \frac{1}{2\varepsilon^4} \ln \!\left( \frac{2}{\delta}\right).
\end{equation}
Consider the following inequalities:
\begin{align}
    \varepsilon^2 &\geq \left\vert \overline{Z}_m - \tilde{\mu} \right\vert \nonumber \\
    &= \left\vert \overline{Z}_m - \mu^2 \right\vert \nonumber \\
    &= \left\vert \sqrt{\overline{Z}_m} - \mu \right\vert \left\vert \sqrt{\overline{Z}_m} + \mu \right\vert \nonumber \\
    &\geq \left\vert \sqrt{\overline{Z}_m} - \mu \right\vert^2,
\end{align}
where the second inequality is derived from the fact that $\overline{Z}_m, \mu \in [0,1]$, so that $\left\vert\overline{Z}_m + \mu \right\vert \geq \left\vert \overline{Z}_m - \mu \right\vert$. Thus,
\begin{equation}
    \left\vert \sqrt{\overline{Z}_m} - \mu \right\vert \leq \varepsilon.
\end{equation}
In other words, 
\begin{equation}
    \varepsilon^2 \geq \left\vert \overline{Z}_m - \mu^2 \right\vert \implies \varepsilon \geq \left\vert \sqrt{\overline{Z}_m} - \mu \right\vert
    \end{equation}
    so that
    \begin{align}
    \Pr\left[\left\vert \sqrt{\overline{Z}_m} - \mu \right\vert \leq \varepsilon\right] &\geq \Pr[\left\vert \overline{Z}_m - \mu^2 \right\vert \leq \varepsilon^2] \nonumber \\
    &\geq 1-\delta.
\end{align}

Thus, $\sqrt{\overline{Z}_m}$ is an estimator for $p_{\operatorname{rej}}$ and taking 
\begin{equation}
    m \geq \frac{1}{2\varepsilon^4} \ln \!\left( \frac{2}{\delta}\right)
\end{equation}
suffices to achieve an error $\varepsilon$ and confidence $\delta$ in estimating $p_{\operatorname{rej}}$.

\end{document}